\documentclass[11pt]{article}
\usepackage[T1]{fontenc}
\usepackage[margin=1in]{geometry}
\usepackage{graphicx} 

\usepackage{multicol}
\usepackage{makecell}
\usepackage{array}
\usepackage{amssymb}
\usepackage{amsmath}
\usepackage{accents}
\usepackage{mathtools}
\usepackage{mathrsfs}
\usepackage{bm}
\usepackage{complexity}
\usepackage{xcolor}
\usepackage{graphicx}
\usepackage{subcaption}
\usepackage{floatrow}
\newfloatcommand{capbtabbox}{table}[][\FBwidth]
\usepackage{soul}
\usepackage[ruled, linesnumbered, noend]{algorithm2e}
\usepackage{csquotes}
\usepackage{tikz}
\usepackage{tikz-3dplot}
\usetikzlibrary{decorations.pathreplacing}
\usetikzlibrary{positioning}
\usetikzlibrary{calc}
\usetikzlibrary{patterns}
\usetikzlibrary{backgrounds}
\usetikzlibrary{fit}
\usetikzlibrary{arrows.meta}
\usetikzlibrary{matrix}
\usetikzlibrary{3d}
\usetikzlibrary{decorations.pathmorphing}
\pgfdeclarelayer{bg}    
\pgfsetlayers{bg,main}  

\usepackage{amsthm}
\usepackage{thmtools,thm-restate}
\newtheorem{mresult}{Main Result}
\newtheorem{theorem}{Theorem}[section]
\newtheorem*{claim}{Claim}

\newtheorem{lemma}[theorem]{Lemma}
\newtheorem{proposition}[theorem]{Proposition}
\newtheorem{definition}[theorem]{Definition}

\newtheorem{remark}[theorem]{Remark}

\newtheorem{example}[theorem]{Example}

\usepackage[
backend=biber,
style=alphabetic,
giveninits=true
]{biblatex}
\DeclareNameAlias{default}{family-given}
\newcommand{\univ}{\Sigma}
\newcommand{\glang}{\Lambda}

\newcommand{\overbar}[1]{\mkern 1.5mu\overline{\mkern-1.5mu#1\mkern-1.5mu}\mkern 1.5mu}

\newcommand{\cl}{\operatorname{cl}}

\newcommand{\gtest}{\accentset{\vee}{\rho}}

\newcommand{\conn}{\lambda_\mathrm{max}}

\DeclareMathOperator{\supp}{\textup{supp}}

\makeatletter
\providecommand{\bigsqcap}{%
	\mathop{%
		\mathpalette\@updown\bigsqcup
	}%
}
\newcommand*{\@updown}[2]{%
	\rotatebox[origin=c]{180}{$\m@th#1#2$}%
}
\makeatother

\makeatletter
\newcommand{\vast}{\bBigg@{4}}
\newcommand{\Vast}{\bBigg@{5}}
\makeatother

\DeclareMathOperator{\raisedrightarrow}{\raisebox{+0.15ex}{$\rightarrow$}}
\makeatletter
\newcommand{\rightarroweq}{\mathrel{\mathpalette\rightarroweq@\relax}}
\newcommand{\rightarroweq@}[2]{%
  \vtop{
      \sbox\z@{$\m@th#1\raisedrightarrow$}%
    \ialign{%
      \hfil##\hfil\cr
      \copy\z@\cr
      \noalign{\nointerlineskip\kern0.7\ht\z@}
      \makebox[\wd\z@][s]{$\m@th#1\relbar\hss\relbar$}\cr
      \noalign{\kern-0.7\ht\z@}
    }%
  }%
}
\makeatother

\DeclareMathOperator{\sraisedrightarrow}{\raisebox{+0.075ex}{$\scriptstyle\rightarrow$}}
\makeatletter
\newcommand{\srightarroweq}{\mathrel{\mathpalette\srightarroweq@\relax}}
\newcommand{\srightarroweq@}[2]{%
  \vtop{
      \sbox\z@{$\m@th#1\sraisedrightarrow$}%
    \ialign{%
      \hfil##\hfil\cr
      \copy\z@\cr
      \noalign{\nointerlineskip\kern0.7\ht\z@}
      \makebox[\wd\z@][s]{$\m@th#1\relbar\hss\relbar$}\cr
      \noalign{\kern-0.7\ht\z@}
    }%
  }%
}
\makeatother

\DeclareMathOperator{\sraisedleftarrow}{\raisebox{+0.075ex}{$\scriptstyle\leftarrow$}}
\makeatletter
\newcommand{\sleftarroweq}{\mathrel{\mathpalette\sleftarroweq@\relax}}
\newcommand{\sleftarroweq@}[2]{%
  \vtop{
      \sbox\z@{$\m@th#1\sraisedleftarrow$}%
    \ialign{%
      \hfil##\hfil\cr
      \copy\z@\cr
      \noalign{\nointerlineskip\kern0.7\ht\z@}
      \makebox[\wd\z@][s]{$\m@th#1\relbar\hss\relbar$}\cr
      \noalign{\kern-0.7\ht\z@}
    }%
  }%
}
\makeatother

\newcommand{\ideal}[1]{\mathop{\downarrow}#1}

\providecommand{\L}{}
\renewcommand{\L}{\mathscr{L}}

\newcommand{\setpt}[4]{%
  \coordinate (#1) at (#2,#3,#4);%
  \expandafter\xdef\csname X@#1\endcsname{#2}%
  \expandafter\xdef\csname Y@#1\endcsname{#3}%
  \expandafter\xdef\csname Z@#1\endcsname{#4}%
}

\newcommand{\proposal}[1]{\ignorespaces}

\title{Approximation Algorithms for Matroidal Prerequisite Systems}
\author{Robert P. Streit\thanks{Department of Electrical and Computer Engineering at The University of Texas at Austin.} \and Vijay K. Garg\footnotemark[1]\textsuperscript{\;\:,}\thanks{Partially supported by NSF CNS-1812349, and the Cullen Trust for
Higher Education Endowed Professorship.}}
\date{May 6, 2026}

\begin{document}

\maketitle

\begin{abstract}
    Optimal selections in a decision process are often constrained by prerequisites. However, such prerequisites can encode functional rather than literal dependencies, so a required dependency may be supplied by one or several interacting alternatives.
    We introduce \emph{matroidal prerequisite systems (MPS)}, a combinatorial constraint structure where a poset specifies prerequisites while a matroid determines when those prerequisites have been satisfied by its span.
    This creates an order-sensitive notion of feasibility over words, where feasible words are associated with independent sets, while dependencies may be fulfilled through substitutable functionality.

    Our main contribution is approximation algorithms for nonnegative additive maximization and monotone submodular maximization over the feasible words of an MPS.
    The guarantees are determined by two structural parameters, namely the maximum matroid rank $\Delta$ of a principal ideal in the poset and the maximum matroid connectivity $\conn$.
    These parameters measure the distance an MPS is from encoding a matroid or a poset antimatroid, respectively, both of which are generalized by an MPS.
    For additive maximization, we obtain efficient deterministic $\Delta$- and $(1+\conn)$-approximation algorithms.
    By extending these techniques, we obtain efficient deterministic $(2+\conn)$-approximation and randomized $(\Delta^2\cdot(1 - 1/e - \delta)^{-1})$-approximation algorithms for all $\delta >0$ for submodular maximization.
    The algorithm design and analysis use the theory of polymatroid greedoids, through a cryptomorphism we prove between an MPS and a \emph{strong polymatroid greedoid}.
    Finally, an approximation-preserving reduction from densest $k$-subgraph shows it is not possible to efficiently compute a $\min\{\Delta,\conn\}^{o(1)}$-approximation to nonnegative additive maximization over the feasible words of an MPS under the Gap Exponential Time Hypothesis.
    Thus, an MPS provides a tractable, but provably nontrivial, framework
for combinatorial optimization with interacting prerequisites, independence, and substitution.
\end{abstract}

\section{Introduction}\label{sec:intro}

In a discrete decision process, some decisions are not meaningful, feasible, or useful until other functionality is already present.
This is perhaps most obvious in task scheduling problems, where dependencies impose constraints on what can and cannot be completed.
However, such dependencies are rarely literal.
Some project task may require not an earlier task itself, but rather some capability, interface, or information that earlier tasks are meant to provide.
These required functionalities may be satisfied by alternative selections of one or many interacting decisions.
Yet, existing prerequisite models (e.g. \cite{picard1976maximal,johnson1983knapsacks,samphaiboon2000heuristic,mccormick2017primal,kolliopoulos2007partially}) typically require prerequisites to be selected explicitly; i.e. if $y$ is a prerequisite of $z$, then $z$ can only be chosen after $y$.
So, there is no awareness of the overlap or substitutable nature of the functionalities of the decisions.

\emph{Matroids} are set families encoding such a substitution structure.
Recall, a matroid $\mathcal{M}$ is a nonempty set family, whose members are called \emph{independent}, such that every subset of an independent set is also independent and for every independent $X$ and $Y$ with $|Y| > |X|$ there exists $z \in Y\setminus X$ such that $X+z$ is independent.
The \emph{rank} $r_\mathcal{M}(X)$ of an arbitrary set $X$ is the size of a maximal independent set it contains, while the \emph{span} $\sigma_\mathcal{M}(X)$ is the union of $X$ with the elements which do not increase its rank.
Now, if feasible sequences of decisions $x_1\ldots x_\ell$ are always such that $\{x_1,\ldots,x_\ell\}$ is independent,
then the span $\sigma_\mathcal{M}(\{x_1,\ldots,x_\ell\})$ can encode the functionalities covered by $x_1\ldots x_\ell$.
So, two decisions $y,z\notin\sigma_\mathcal{M}(\{x_1,\ldots,x_\ell\})$ which are not made redundant by these functionalities (in the sense they are spanned by $\{x_1\ldots x_\ell\}$) are \emph{$\{x_1,\ldots,x_\ell\}$-substitutable} if and only if $\sigma_\mathcal{M}(\{x_1,\ldots,x_\ell,y\}) = \sigma_\mathcal{M}(\{x_1,\ldots,x_\ell,z\})$.
This is, given what is already covered by $x_1\ldots x_\ell$, any new functionality added by $y$ or $z$ leads to equivalent coverage.
Matroids hold a central place in discrete optimization not only because their structure models many problems, but also gives algorithms for many objective functions \cite{rado1957note,edmonds1971matroids,nemhauser1978analysis,calinescu2011maximizing,filmus2014monotone,sviridenko2017optimal}.

In this work, we study the optimization problems that arise when prerequisite and matroidal substitution structure are forced to interact.
Our reasons are both modeling based and algorithmic.
Many applications lead to prerequisite structure defined via substitutable functionalities rather than irreplaceable objects, and we would like to capture this.
We will also see the resulting constraint systems form a structured class weaker than matroids that still admits approximation algorithms for additive and submodular maximization.
To that end, we propose the \emph{matroidal prerequisite system (MPS)}, which is a simple language (words without repeating letters) over a finite alphabet $\univ$ defined by an interaction between a matroid $(\univ,\mathcal{M})$ and poset $(\univ,\leadsto)$.
The words $y_1\ldots y_\ell$ in the language, denoted by $\mathcal{L}(\mathcal{M},\leadsto)$, are such that their letters always form independent sets, and each $y_i$ is such that all prerequisites $x \leadsto y_i$ are either spanned by the prefix $\{y_1,\ldots, y_{i-1}\}$ in the matroid or $x$ and $y_i$ are $\{y_1,\ldots,y_{i-1}\}$-substitutable.
Examples are presented in Section~\ref{sec:mps}.
The upshot is the poset and matroid retain their usual roles, but no
longer act independently.
Specifically, the poset specifies which decisions require earlier prerequisite coverage, and the matroid determines when that prerequisite coverage has been supplied via its span while continuing to enforce nonredundancy via independence.
An MPS generalizes matroids, as $\mathcal{L}(\mathcal{M},\leadsto)$ is the permutations of independent sets if the ordering relation is empty, as well as \emph{poset antimatroids} (the language of linear extensions of a poset) as $\mathcal{L}(\mathcal{M},\leadsto)$ is the linear extensions of $(\univ,\leadsto)$ if $\mathcal{M}$ is free (all sets are independent).

Our main contributions are approximation algorithms for maximization over the words of an MPS of nonnegative additive and monotone submodular objective functions.
Our guarantees are given in terms of structural parameters that intuitively measure the distance of an MPS from being a matroid or poset antimatroid.
The first is the largest matroid rank of a principal ideal in $(\univ,\leadsto)$,
\begin{equation}\label{eq:delta-def}
    \Delta\triangleq \max_{y\in\univ} r_\mathcal{M}(\ideal y).
\end{equation}
If the MPS is a matroid then each principal ideal is size one, and so $\Delta = 1$.
As the prerequisite structure becomes more complex, $\Delta$ will increase.
The second parameter is defined by the largest value taken by the (Tutte) connectivity function $\lambda$ (see \cite{tutte1966connectivity,welsh2010matroid,oxley2006matroid}) of the matroid,
\begin{equation}\label{eq:conn-def}
    \conn \triangleq \max_{X\subseteq\univ} \lambda(X),\quad\text{where }\lambda(X)\triangleq r_\mathcal{M}(X) + r_\mathcal{M}(\univ\setminus X) - r_\mathcal{M}(\univ).
\end{equation}
The connectivity $\lambda(X)$ computes how much the complement $\univ\setminus X$ can cover $X$ in its span, in terms of rank.
Equivalently, $\lambda(X)$ measures how much of $X$ can be substituted with its complement.
So, $\conn$ measures how complex the substitutes structure is.
Note $\conn \geq 0$.
For free matroids where every set is independent, we see $\conn =0$.
So, $\conn=0$ when the MPS defines a poset antimatroid and increases as the substitutes structure becomes more complex.
This leads to our first main result.

\begin{mresult}[Theorems~\ref{thm:scaffold-approx-add},~\ref{thm:scaffold-approx-sub},~\ref{thm:mag-approx}, and~\ref{thm:mag-approx-sub}]\label{mr:algorithm}
    Let $\mathcal{L}(\mathcal{M},\leadsto)$ be a matroidal prerequisite system.
    \begin{enumerate}
        \item There exists an efficient deterministic $\min\{\Delta,1 + \conn\}$-approximation algorithm for nonnegative additive maximization over the feasible words of $\mathcal{L}(\mathcal{M},\leadsto)$,
        \item There exists an efficient deterministic $(2 + \conn)$-approximation and, for every $\delta > 0$, an efficient randomized $(\Delta^2\cdot(1 - 1/e - \delta)^{-1})$-approximation algorithm for monotone submodular maximization over the feasible words of $\mathcal{L}(\mathcal{M},\leadsto)$.
    \end{enumerate}
\end{mresult}

Observe, our methods achieve exact solutions for additive maximization whenever the MPS is a matroid or poset antimatroid.
These are the two ``easy'' cases, and the approximation smoothly degrades as we drift towards a more complex MPS structure (as $\Delta$ and $\conn$ will increase).
For submodular maximization we achieve essentially the best possible approximation \cite{nemhauser1978best} when the MPS is a matroid.
For poset antimatroids, the problem becomes trivialized as the entirety of $\univ$ can always be permuted into a linear extension, so outputting this permutation always leads to an exact solution even for monotone submodular objectives.
But, our guarantee would give a $2$-approximation.
Whether it is possible to improve this to smoothly interpolate between this scenario and more complex MPS structures (like in the additive case) is something we leave to future work.
Finally, we note that the expected $(\Delta^2\cdot(1 - 1/e - \delta)^{-1})$-approximation guarantee comes from using the continuous greedy algorithm \cite{calinescu2011maximizing} as a black-box subroutine.
We present our methods with this chosen subroutine because it is the most recognizable to the widest audience, but alternative methods \cite{sviridenko2017optimal,filmus2014monotone,buchbinder2025deterministic} can be invoked to obtain certain improvements.
See Remark~\ref{rem:imp}.

Our algorithmic techniques make use of \emph{strong polymatroid greedoids} recently introduced in \cite{streit2024polymatroid}.
Specifically, a polymatroid greedoid is a greedoid whose feasibility structure can be encoded by a polymatroid rank function called a \emph{representation}.
Strong polymatroid greedoids are a subclass admitting certain closure operators and a canonical representation called \emph{the greatest representation}.
We prove a cryptomorphism in Theorem~\ref{thm:spg-iff-mps} showing a greedoid is a strong polymatroid greedoid if and only if it can be encoded as an MPS.
This allows us to use the greatest representation in our algorithms.
In particular, Section~\ref{sec:scaffold} presents algorithms based off of solving a relaxation over the base polytope of the greatest representation, while Section~\ref{sec:mag} uses the greatest representation to create modified greedy algorithms.
The correctness of the algorithmic results will follow from Lemma~\ref{lem:projection} in Section~\ref{subsec:fill}, which discusses how to iteratively ``fill-in'' sets of letters contained in feasible words to converge to a full feasible word in a strong polymatroid greedoid.

To complement the algorithmic results, we also investigate the hardness of nonnegative additive maximization.
We give an approximation-preserving reduction from Densest $k$-Subgraph.
By \cite{manurangsi2017almost}, this shows our algorithmic guarantees are meaningful in that exponential improvements are not possible under the Gap Exponential Time Hypothesis (see \cite{manurangsi2017birthday,dinur2016mildly,chalermsook2017gap}).
\begin{mresult}\label{mr:complexity}
    Assuming the Gap Exponential Time Hypothesis, there exists no $\min\{\Delta,\conn\}^{o(1)}$-approximation algorithm for additive maximization over a matroidal prerequisite system.
\end{mresult}

If one only assumes the Exponential Time Hypothesis, then weaker inapproximability results are still achieved; see Remark~\ref{rem:eth}.
While Main Result~\ref{mr:complexity} shows constant or logarithmic-approximation algorithms in these parameters are unlikely, the result doesn't comment on sublinear approximations in their entirety.
We leave the resolution of this gap in the additive case, as well as an investigation of the approximation hardness with more general submodular objective functions, to future work.

\paragraph{Overview}
We begin by surveying related work and making comparisons in Section~\ref{sec:related-work}.
Then, in Section~\ref{sec:preliminaries} we overview some background.
Section~\ref{sec:mps} will introduce the MPS, and discuss examples and applications.
Section~\ref{subsec:mps-greedoid} then examines the greedoid theory of an MPS, and proves a cryptomorphism between the class of MPS and strong polymatroid greedoids from \cite{streit2024polymatroid} in Theorem~\ref{thm:spg-iff-mps}.
Section~\ref{subsec:mps-gr} shows that the greatest representation of a MPS can be evaluated efficiently via a matroid rank query, while Section~\ref{subsec:fill} proves that every sequence of letters such that the marginal return of each letter against its prefix in the greatest representation is greater than zero can be ``filled'' into a feasible word by repeated letter insertions.
This is Lemma~\ref{lem:projection} to be precise.

In Section~\ref{sec:scaffold} we design approximation algorithms with guarantees varying in $\Delta$.
Section~\ref{subsec:scaffold-am} gives a method for additive maximization which is based on solving a relaxation defined by the greatest representation and projecting the solution onto a feasible word using Lemma~\ref{lem:projection}.
This is Algorithm~\ref{alg:greedy-scaffold}, which Theorem~\ref{thm:scaffold-approx-add} shows is a $\Delta$-approximation algorithm.
Then, in Section~\ref{subsec:scaffold-sub} we obtain a method for monotone submodular maximization.
This is based on applying the continuous greedy algorithm \cite{calinescu2011maximizing} to the \emph{natural matroid} construction \cite{helgason1974aspects,bonin2023natural} of the greatest representation and using the solution to define a set of weights underestimating the objective function,
which is fed into our additive maximization scheme in Algorithm~\ref{alg:greedy-scaffold} to obtain a feasible word.
The analysis in Theorem~\ref{thm:scaffold-approx-sub} shows this process leads to an expected approximation at most a $\Delta^2$ factor worse than that of \cite{calinescu2011maximizing} for submodular maximization over matroids.
In Section~\ref{sec:mag}, we then design approximation algorithms with guarantees varying in $\conn$.
Section~\ref{sec:mag-am} begins by examining algorithms for additive objectives, and presents a modified greedy algorithm (Algorithm~\ref{alg:mag}) using the greatest representation to adjust the perceived value of available letters in each round, and the closure operator of strong polymatroid greedoids to determine where to insert selected letters into a working solution.
The correctness is again witnessed by Lemma~\ref{lem:projection}, and Theorem~\ref{thm:mag-approx} shows the resulting solution is a $(1+\conn)$-approximation.
Then, Section~\ref{subsec:mag-sub} extends the technique to accommodate submodular objective functions.
This gives Theorem~\ref{thm:mag-approx-sub}, stating Algorithm~\ref{alg:mag-sub} obtains a $(2+\conn)$-approximation.
Finally, we conclude by proving Main Result~\ref{mr:complexity} in Section~\ref{sec:hardness}.

\section{Related Work}\label{sec:related-work}

Our MPS is in many ways a relaxation of matroids for designing approximation algorithms in more general settings.
Other approaches have been made in this spirit.
See for example $p$-systems \cite{nemhauser1978analysis}, $k$-intersection \cite{korte1978analysis,lee2010submodular}, $k$-extendible \cite{mestre2006greedy}, and $k$-exchange \cite{feldman2011improved} systems.
All come with approximate guarantees decaying in the ``distance'' from a matroid, like the $\Delta$ parameter used in Main Result~\ref{mr:algorithm}.
But, a key difference is our feasibility model is order-sensitive.
This is to say, where previous works maintain a subset-closed set family, our approach injects a notion of prerequisites using a poset that destroys such structure.
This is why we interpret our MPS as a greedoid \cite{korte2012greedoids}, as this theory accommodates order-sensitive feasibility.

On that note, there do exist works mixing prerequisite structure with other combinatorial constraint systems.
A notable example is precedence constrained knapsack variants \cite{samphaiboon2000heuristic,johnson1983knapsacks,kolliopoulos2007partially} and maximum closure \cite{picard1976maximal,mccormick2017primal} problems.
However, the prerequisite constraints of these problems require a ``complete'' closure, in the sense that an element can only be selected once \emph{all} of its prerequisites are present.
In contrast, we use a matroid to encode notion of substitution or redundancy.
Along these lines, AND/OR scheduling problems \cite{mohring2004scheduling} use Boolean formulae to allow for alternatives through disjunction.
However, in \cite{mohring2004scheduling} the formulae are explicit inputs, while our delegation to the matroid induces a stateful and implicit notion of substitutability which varies over the decision process.
As such, our approaches are not directly comparable.

Our work uses greedoids, a structure introduced by Korte and Lovász to generalize matroids by removing the definitional dependence on subset-closed set families \cite{korte1984structural,korte2012greedoids,bjorner1992introduction}.
For linear optimization, \cite{korte1984greedoids,goetschel1986linear} give a necessary and sufficient called \emph{strong exchange} for the correctness of the greedy algorithm in greedoids, which was later generalized to the idea of a \emph{matroid embedding} in \cite{helman1993exact}.
In Section~\ref{sec:mps} we will see that the greedy algorithm fails to solve additive maximization on an MPS, and so these results do not apply to our setting.
Without strong exchange, there do exist certain characterizations on the types of objective functions which the greedy algorithm can correctly solve \cite{korte2012greedoids,szeszler2022sufficient,faigle1985ordered}, but they are sensitive to the greedoid structure and quite restrictive.
Polymatroid greedoids are introduced in \cite{korte1985polymatroid}, and recently studied in \cite{streit2024polymatroid}.
This class is defined by an interaction between a polymatroid and a greedoid which allows for the use of submodular analysis (which is typically not fully available in the greedoid setting, unlike matroids).
The latter work \cite{streit2024polymatroid} introduces \emph{strong polymatroid greedoids}, which give special properties we use for the design and analysis of our algorithms after showing an MPS is cryptomorphically equivalent to such a greedoid.

The examination of submodular optimization or approximation algorithms for systems without subset-closure like greedoids appears overlooked in the literature.
An exception is \emph{distributive supermatroids} \cite{gottschalk2015submodular,maehara2022rank} (also known as poset matroids), which are also an interaction between posets and matroids.
However, a distributive supermatroid equals the intersection of a matroid and a poset antimatroid (see Ch. 9 of \cite{korte2012greedoids}), which we show cannot describe an MPS in Section~\ref{sec:mps}.

\section{Preliminaries}\label{sec:preliminaries}

In all concepts that follow we let the alphabet $\univ$ serve as a ground set.
Before proceeding, some conventions: For any set $X$, let $X!$ give its permutations, and $\chi_X \in \{0,1\}^\univ$ be its characteristic vector (i.e. $\chi_X(y) = 1$ if $y \in X$ and zero otherwise).
We often iterate over words written like $y_1\ldots y_{i-1}$.
So, if $i = 0$, then we are implicitly writing $y_1\ldots y_0$.
Take the convention that $y_1\ldots y_0 = \epsilon$, where $\epsilon$ is the empty word.
Similarly, let $\{y_1,\ldots,y_0\} = \varnothing$.
Finally, for a vector $v \in \mathbb{R}^\univ$, the \emph{support} $\supp(v) \triangleq \{y\in\univ\mid v(y) \neq 0\}$ is the set of labels for the nonzero coordinates in $v$.

\paragraph{Submodularity} An \emph{additive function} is a set function defined by assigning each letter a weight via $w:\univ \to\mathbb{R}$ and mapping sets to the sum of the weights of letters they contain, like $X\mapsto\sum_{y\in X}w(y)$.
A \emph{submodular function} $f:2^\univ\to\mathbb{R}$ is a set function satisfying the \emph{law of diminishing returns},
\[
    X\subseteq Y \implies (f/X)(z) \geq (f/Y)(z),\quad\forall z \in \univ,
\]
where $(f/X)(z) \triangleq f(X+z) - f(X)$ is the evaluation of the \emph{contraction of $f$ by $X$} at the letter $z$.
We adopt the convention that $(f/X)(z) = 0$ whenever $z \in X$.
Note, if $X = \{y_1,\ldots,y_\ell\}$, then, 
\[
    f(X) = \sum_{i = 1}^\ell (f/\{y_1,\ldots,y_{i-1}\})(y_i).
\]
And so, submodular functions generalize additive functions.
Nonnegative additive and monotone (i.e. $X\subseteq Y$ implies $f(X) \leq f(Y)$) submodular functions are the objective functions we consider.
We will assume that $f$ is \emph{normalized}, i.e. $f(\varnothing) = 0$, to remove any translation effects from our approximation ratios.
This is a common maneuver, see \cite{nemhauser1978analysis,calinescu2011maximizing,filmus2014monotone} for example.

\begin{algorithm}[t]
    \caption{Edmonds' Greedy Algorithm \cite{edmonds2003submodular}}\label{alg:edmonds}
    \DontPrintSemicolon 
    choose $y_1\ldots y_n \in \univ!$ so that
    $w(y_1) \ge w(y_2) \ge \ldots \ge w(y_n)$\tcp*{sort the coordinates}
    \;
    
    \For{$i \in \{1,\ldots,n\}$}
    {
        $v^\star\left(y_{i}\right) \gets \left(\rho/\{y_1,\ldots,y_{i-1}\}\right)(y_i)$\tcp*{greedily increase coordinates}
    }
    \; 
    \Return{$v^\star$}\;
\end{algorithm}

\paragraph{Polymatroids} A \emph{polymatroid rank function} $\rho:2^\univ\to\mathbb{Z}$ is a monotone, normalized, submodular function.
Though integrality is not strictly required, the polymatroid rank functions used in this work will have integral codomains.
A matroid rank function $r_\mathcal{M}$ is simply a \emph{subcardinal} polymatroid rank function, i.e. $r_\mathcal{M}(X) \leq |X|$ always holds.
In this way, polymatroids generalize matroids.
They can also be viewed as a multiset generalization of matroids as well.
These perspectives lead to a few different ways of constructing matroids from polymatroids, which we will see in Section~\ref{subsec:mps-greedoid} and Section~\ref{subsec:scaffold-sub}.
The \emph{base polytope} of a polymatroid is the set of nonnegative vectors in $\mathbb{R}^\univ$ saturating the total rank and satisfying a collection of rank inequalities over all subsets as follows,
\[
    \left\{v \in\mathbb{R}^\univ_{+}\Biggm\vert\lVert v \rVert_1 =\rho(\univ)\text{ and }(\forall X\subseteq \univ)\;\sum_{y \in X} v(y) \leq \rho(X)\right\}.
\]
Famously, when introducing polymatroids Edmonds gave a greedy algorithm for solving linear maximization over the base polytope \cite{edmonds2003submodular}.
This is shown in Algorithm~\ref{alg:edmonds}.
Finally, the \emph{span} operator $\sigma_\rho:2^\univ\to 2^\univ$ is defined as $\sigma_\rho(X) \triangleq\{y \in \univ\mid \rho(X + y) = \rho(X)\}$,
i.e.\ those elements for which the marginal return follows $(\rho/X)(y) = 0$. 
The span is a closure operator, that is it is \emph{monotone} (i.e. $X\subseteq Y$ implies $\sigma_\rho(X) \subseteq \sigma_\rho(Y)$), \emph{idempotent} (i.e. $\sigma_\rho(\sigma_\rho(X)) = \sigma_\rho(X)$, and \emph{extensive} (i.e. $X\subseteq \sigma_\rho(X)$).
We note that the $\rho$-closed sets are closed under intersection, i.e. $\sigma_\rho(X) \cap \sigma_\rho(Y) = \sigma_\rho(X\cap Y)$, and that this is a necessary and sufficient description of any closure operator.
And so, those sets $X \subseteq \univ$ such that $X = \sigma_\rho(X)$ are called $\rho$-\emph{closed}.
Notice, $\sigma_{r_\mathcal{M}} = \sigma_\mathcal{M}$ for a matroid $\mathcal{M}$.
A defining feature of the matroid span is the \emph{Steinitz-Maclane exchange property}, which is $y,z \notin \sigma_\mathcal{M}(X)$ and $y \in \sigma_\mathcal{M}(X+z)$ implies $z \in\sigma_\mathcal{M}(X)$.
This is the formal reason why we may use matroid span operators to encode substitutability in constraint structures, as the Steinitz-Maclane exchange property ensures that $z$ ``capturing'' $y$ in the increased span $\sigma_\mathcal{M}(X+z)$ is a symmetric relationship.
Finally, a \emph{loop} $y \in \univ$ is an element such that $\rho(y) = 0$.
This means that $v(y) = 0$ for every $v$ in the base polytope, by the law of diminishing returns.

\paragraph{Lattices and Order}
Let $\mathscr{P} = (\mathcal{X}, \sqsubset)$ be a poset with reflexive closure `$\sqsubseteq$.'
A \emph{linear extension} is an ordering of $\mathcal{X}$ which agrees with `$\sqsubset$,' that is a word $\mathbf{y}_1\mathbf{y_2}\ldots$ of length $|\mathcal{X}|$ such that $\mathbf{y}_i \sqsubset \mathbf{y}_j$ implies $i < j$.
A set $I \subseteq \mathscr{P}$ is an (order) \emph{ideal} if $\mathbf{z} \in I$ and $\mathbf{y} \sqsubset \mathbf{z}$ implies $\mathbf{y} \in I$.
The ideal generated by an arbitrary set $S$ is,
$$\ideal S \triangleq \{\mathbf{y}\in\mathcal{X}\mid(\exists \mathbf{z} \in S)\; \mathbf{y}\sqsubseteq \mathbf{z}\}.$$
A \emph{principal ideal} $\ideal \mathbf{y}$ is generated by a single poset element.
For $\mathbf{x},\mathbf{y}\in\mathscr{P}$ the \emph{meet} $\mathbf{x} \sqcap \mathbf{y}$ is the binary infimum while the \emph{join} $\mathbf{x} \sqcup \mathbf{y}$ is the binary supremum whenever unique such elements exist.
A poset is a \emph{lattice} when a meet and join is defined for every pair.
The \emph{covering relation}, which we denote by `$\prec$' for any poset, is such that $\mathbf{x} \prec \mathbf{y}$ if and only if $\mathbf{x} \sqsubset \mathbf{y}$ and there is no $\mathbf{z}$ with $\mathbf{x} \sqsubset \mathbf{z} \sqsubset \mathbf{y}$.
One visualizes posets via \emph{Hasse diagrams}, consisting of nodes corresponding to the elements of the ground set $\mathcal{X}$ and line segments such that there is a line directed upwards from $\mathbf{x}$ to $\mathbf{y}$ if and only if $\mathbf{x} \prec \mathbf{y}$.
A poset $\mathscr{P}$ is \emph{graded} whenever there exists a function $g:\mathscr{P}\to\mathbb{Z}_+$ such that $\mathbf{x} \prec \mathbf{y}$ implies $g(\mathbf{x}) = g(\mathbf{y}) - 1$.
A grade taking value zero on bottom elements is the \emph{height function}.
Finally, a lattice is \emph{semimodular} if and only if $\mathbf{x} \succ \mathbf{x} \sqcap \mathbf{y}$ implies $\mathbf{y} \prec \mathbf{x} \sqcup \mathbf{y}$.
We note semimodular lattices are always graded.

\paragraph{Greedoids}
Now we review greedoids.
Our primary sources are \cite{korte2012greedoids,bjorner1992introduction}.
Let $\univ^*$ be all finite sequences over $\univ$. 
Each $\alpha \in \univ^*$ is a \emph{word} $\alpha = y_1\ldots y_k$ composed of \emph{letters} from $\univ$.
A \emph{language} $\glang\subseteq \univ^*$ is a set of words.
We only examine \emph{simple} languages which do not contain words with repeated letters.
To model problems of interest using languages, one can let letters symbolize decisions which can be made over a process, while words give the \emph{feasible} arrangements of decisions.
The \emph{support} $\widetilde{\alpha}$
is the set of letters forming $\alpha$, and we refer to the length of a
word via $|\alpha|$.
Longest words in $\glang$ are called \emph{basic}.
Finally, a \emph{loop} $y \in \univ$ is a letter such that $y \notin\widetilde{\alpha}$ for all $\alpha \in\glang$, i.e. a loop does not appear in any word.
A \emph{greedoid} can be defined as a language generalization of a matroid.
\begin{definition}[Greedoid]\label{def:greedoid}
    A \emph{greedoid} $\glang$ over $\univ$ is a nonempty simple language satisfying,
    \begin{enumerate}
        \item \emph{(Hereditary) }$\alpha\beta \in \Lambda$ implies the prefix $\alpha$ is feasible, i.e. $\alpha\in\Lambda$,
        \item \emph{(Exchange) }For all $\alpha, \beta \in \Lambda$, if $|\alpha| > |\beta|$ then there exists $x\in\widetilde{\alpha}$ such that $\beta x \in \Lambda$.
    \end{enumerate}
\end{definition}

We give an example relevant to our MPS construction in Section~\ref{sec:mps}, and point the reader towards \cite{korte2012greedoids,bjorner1992introduction} for more examples.
Non-emptiness and the hereditary axiom together imply $\epsilon \in \glang$, and exchange makes all basic words equal in length.
A prominent class is \emph{interval greedoids}, which possess a stronger variant of exchange:
call a simple $\alpha'\in\univ^*$ a \emph{subword} of $\alpha\in\glang$ if and only if
$\widetilde{\alpha}' \subseteq \widetilde{\alpha}$ and the letters of $\alpha'$ occur in an order agreeing with $\alpha$ (i.e. $x$ appears before $y$ in $\alpha'$ implies $x$ appears before $y$ in $\alpha$).
Interval greedoids guarantee feasible words can always be augmented by subword of a longer feasible word to achieve length at least that of the longer word.
\begin{definition}[Interval Property]
    Let $\glang$ be a greedoid.
    Then, $\glang$ has the \emph{interval property} if for all $\alpha ,\beta \in \glang$ with $|\beta| > |\alpha|$, there exists a subword $\beta'$ of $\beta$ of length $|\beta'| \geq |\beta| - |\alpha|$ with $\alpha \beta'\in\glang$.
\end{definition}

The contraction minor $\glang/\alpha$, where $\alpha \in\glang$, is the language of simple words which can be concatenated onto $\alpha$ while maintaining feasibility in $\glang$, i.e. $\glang/\alpha \triangleq \{\beta \in \univ^*\mid \alpha\beta \in \glang\}.$
It can be verified $\glang/\alpha$ is a greedoid, and we call the single length words $y \in \glang/\alpha$ \emph{continuations} since they are exactly $y \in \univ$ with $\alpha y \in \glang$.
Define the equivalence relation $\alpha \sim \beta$ by $\glang/\alpha = \glang/\beta$.
A \emph{flat} is an equivalence class of $\glang/\mathord{\sim}$, and the class associated with $\alpha$ is $[\alpha]$.
One can order the flats by a kind of reachability:
let $[\alpha] \sqsubseteq [\beta]$ if and only if there exists $\gamma\in\glang/\alpha$ where $\alpha \gamma\in\glang$ and $\alpha \gamma \sim \beta$.
For interval greedoids, the resulting poset $(\glang/\mathord{\sim},\sqsubseteq)$ is a semimodular lattice.
Our MPS is a kind of interval greedoid, and a Hasse diagram corresponding to a lattice of flats is given in Figure~\ref{fig:spg-example}.
Finally, we define the \emph{flat support} as the union of supports of words in a flat.
This is given by $\kappa:\glang/\mathord{\sim}\to 2^\univ$ defined as $\kappa[\alpha] \triangleq \bigcup_{\beta \sim\alpha}\widetilde{\beta}$.
We choose this symbol because a flat support arises from the greedoid's \emph{kernel closure operator} (see Ch. 5 of \cite{korte2012greedoids}).
A special property of interval greedoids is that $\kappa$ is invertible, so ordering the flat supports by containment gives a lattice isomorphic to $(\glang/\mathord{\sim},\sqsubseteq)$.

A \emph{polymatroid greedoid}, introduced in \cite{korte1985polymatroid}, is such that there exists a polymatroid rank function $\rho:2^\univ\to\mathbb{Z}$ satisfying,
$$\glang = \{x_1\ldots x_k\in\univ^*\mid (\forall i \in \{1,\ldots,k\})\;\rho(\{x_1,\ldots,x_i\}) = i\}.$$
Examples include matroids, poset antimatroids, and certain ordered geometries; see Ch. 7 of \cite{korte2012greedoids}.
Following \cite{streit2024polymatroid}, in this context we call $\rho$ a \emph{representation} as it encodes how one builds feasible words in $\glang$.
This is, for all $\alpha \in\glang$ and $y \notin \widetilde{\alpha}$, $\alpha y \in\glang$ if and only if $(\rho/\widetilde{\alpha})(y) = 1$.
But more importantly, the representation defines a global submodular structure over the greedoid, which applies to both those sets corresponding to and those which do not correspond to supports of feasible words.
This submodular structure is a property given by matroid rank functions, for example, which is lost in generalizing to greedoids.
The concept of representation is notable because it restores this feature.
In \cite{streit2024polymatroid}, the authors define a subclass called \emph{strong polymatroid greedoids} by those guaranteeing the function $\gtest$, where $\gtest(X)$ is the height of a minimal flat support containing $X$ in the lattice $(\kappa(\glang/\mathord{\sim}),\subseteq)$, i.e.,
\[
    \gtest(X) \triangleq \min\{|\alpha|\mid X\subseteq \kappa[\alpha]\},
\]
is a polymatroid rank function representing the greedoid.
When true, $\gtest$ is called the \emph{greatest representation} because it can be shown be coordinatewise larger than all other representations. 
The authors give a combinatorial description of such greedoids: a greedoid is \emph{optimistic} whenever for every basic word $y_1\ldots y_R$ and nonloop $z \in \univ$ there exists an index $i$ making $y_1\ldots y_{i-1} z \in \glang$.
We remark that all polymatroid greedoids are optimistic \cite{streit2024polymatroid,korte1985polymatroid}.
Then, $\glang$ is a strong polymatroid greedoid if and only if it is an optimistic interval greedoid whose flat supports are closed under intersection \cite{streit2024polymatroid}, i.e. $\kappa(F) \cap \kappa(F') = \kappa(F\sqcap F')$ for all $F,F'\in\glang/\mathord{\sim}$.
This shows the existence of a closure operator $\cl:2^\univ\to 2^\univ$,
\[
    \cl(X) \triangleq \bigcap\{\kappa[\alpha]\mid X\subseteq \kappa[\alpha]\},
\]
which preserves the greedoid structure, as the $\kappa$-image and $\cl$-images are equal (though this does not imply $\kappa = \cl$).
This is a closure operator, because the definition of $\cl$ implies its image is closed under intersection.
And, $\cl$ also simplifies the definition of $\gtest$ to make $\gtest(X)$ the height of $\cl(X)$ in $(\kappa(\glang/\mathord{\sim}),\subseteq)$.
By inspection then $\sigma_{\gtest} = \cl$, and so the greatest representation gives us functional access to this closure system.
For this reason, $\gtest$ and $\cl$ are meaningful tools for algorithm design.
In Section~\ref{subsec:mps-greedoid}, we show every MPS is a strong polymatroid greedoid (and vice versa) so that our approximation algorithms and analysis can appeal to this theory.

\section{Matroidal Prerequisite Systems}\label{sec:mps}

Now we formally define the matroidal prerequisite system.
This is an interaction between a poset and a matroid, with the poset encoding prerequisites and the matroid encoding a substitutability structure, combined with a certain compatibility condition ensuring this interaction is well-behaved.

\begin{definition}[Matroidal Prerequisite System (MPS)]
    Fix a finite alphabet $\univ$, and let $(\univ,\mathcal{M})$ and $(\univ,\leadsto)$ be a (loopless) matroid and poset respectively.
    Define $\mathcal{L}(\mathcal{M},\leadsto)$ as the set of simple words $x_1\ldots x_\ell \in \univ^*$ with $\{x_1,\ldots,x_\ell\}\in\mathcal{M}$ and,
    \begin{equation}\label{eq:feasibility}
        y \leadsto x_i \implies y \in \sigma_\mathcal{M}(\{x_1,\ldots,x_{i-1}\})\text{ or }\sigma_\mathcal{M}(\{x_1,\ldots,x_{i-1},y\}) = \sigma_\mathcal{M}(\{x_1,\ldots,x_{i-1},x_i\}),
    \end{equation}
    holding for all $i \in \{1,\ldots,\ell\}$.
    Then, $\mathcal{L}(\mathcal{M},\leadsto)$ is a \emph{matroidal prerequisite system} if and only if,
    \begin{equation}\label{eq:compatibility}
        \alpha \in \mathcal{L}(\mathcal{M},\leadsto)\text{ and }z \in \sigma_\mathcal{M}(\widetilde{\alpha}) \implies (\forall y \leadsto z)\;y \in \sigma_\mathcal{M}(\widetilde{\alpha}).
    \end{equation}
\end{definition}

This is, $\mathcal{L}(\mathcal{M},\leadsto)$ is words with independent supports such that every prerequisite of a letter $y$ in a feasible word is either spanned by prefix ending at $y$ (exclusive) in the matroid or is substitutable with $y$ against that prefix.
Call the words of $\mathcal{L}(\mathcal{M},\leadsto)$ \emph{feasible}.
Then, the compatibility condition in Eq.~\ref{eq:compatibility} states that whenever a letter is covered by the matroid span of a feasible word, so must all of its prerequisites.
This has good semantic meaning with the problems we are interested in modeling (i.e. a decision which has been made redundant is such that all of its prerequisite decisions are redundant as well), and is required to verify $\mathcal{L}(\mathcal{M},\leadsto)$ is an optimistic greedoid (which we need to use $\gtest$, $\cl$, and other concepts inducing correct behavior in our algorithms).
We also implicitly assume the matroid is loopless so as to prevent complications or cumbersome notation caused by loops.
This can be satisfied by simply removing letters with matroid rank zero from the groundset.
Finally, as discussed in Section~\ref{sec:intro}, this abstracts matroids and poset antimatroids by considering the MPS defined over an empty order and free matroid, respectively.

Before continuing, we list a few properties.
Firstly, every basic word of $\mathcal{L}(\mathcal{M},\leadsto)$ is always supported on a basic set (largest independent set) in $\mathcal{M}$.
This is because, for all $\alpha \in \mathcal{L}(\mathcal{M},\leadsto)$, any minimal (w.r.t. `$\leadsto$') letter $y \in \univ\setminus\sigma_\mathcal{M}(\widetilde{\alpha})$ is such that minimality guarantees all its prerequisites are covered by $\sigma_\mathcal{M}(\widetilde{\alpha})$.
Moreover, $y \notin \sigma_\mathcal{M}(\widetilde{\alpha})$ also implies $\widetilde{\alpha}+ y$ is independent, and so $\alpha y \in\mathcal{L}(\mathcal{M},\leadsto)$.
Such a $y$ always exists whenever $\sigma_\mathcal{M}(\widetilde{\alpha})\neq \univ$, which occurs exactly when $\widetilde{\alpha}$ is not a basic set of the matroid.
Secondly, this implies every MPS defined by a loopless matroid is also loopless.
Specifically, since $\sigma_\mathcal{M}(\widetilde{\alpha}) = \univ$ for every basic word $\alpha$, for all $y\in\univ$ there exists an $\alpha$-prefix $\alpha'z$ such that $y \in \sigma_\mathcal{M}(\widetilde{\alpha}' + z)\setminus\sigma_\mathcal{M}(\widetilde{\alpha}')$.
Then, since $y$ is $\widetilde{\alpha}'$-substitutable with all its prerequisites by the compatibility condition (Eq.~\ref{eq:compatibility}), it follows that $\alpha'y \in \mathcal{L}(\mathcal{M},\leadsto)$.
In what comes, we show an MPS is optimistic, and so this looplessness will mean that given any basic word and letter $y\in\univ$ we can assert the existence of a prefix possessing $y$ as a continuation.

As a technical example, let $\mathcal{U}^R_n$ be a \emph{uniform matroid} of rank $R$.
That is, every set of cardinality at most $R$ is independent.
$\mathcal{L}(\mathcal{U}^R_n,\leadsto)$ always satisfies the compatibility condition, regardless of the prerequisite order $(\univ,\leadsto)$.
This is because the only independent sets with span not equal to themselves are basic.
So, basic words in $\mathcal{L}(\mathcal{M},\leadsto)$ are the only feasible words whose corresponding matroid span contains any letters not from their support.
Therefore, for all $\alpha \in \mathcal{L}(\mathcal{M},\leadsto)$ which are not basic we have $z \in \sigma_\mathcal{M}(\widetilde{\alpha})$ implies $z \in \widetilde{\alpha}$, and so all prerequisites $y \leadsto z$ are contained in $\sigma_\mathcal{M}(\widetilde{\alpha})$ by Eq.~\ref{eq:feasibility}, while if $\alpha$ is basic then it is supported on a basic set of $\mathcal{U}^R_n$ and $\sigma_\mathcal{M}(\widetilde{\alpha}) = \univ$ trivially makes the conclusion of Eq.~\ref{eq:compatibility} true.
This makes $\mathcal{L}(\mathcal{U}^R_n,\leadsto)$ a matroidal prerequisite system.
We can use this to model cardinality constrained optimization over a poset antimatroid.
This is equivalent to cardinality constrained maximum closure over a directed acyclic graph \cite{picard1976maximal} and unit-weight poset constrained knapsacks \cite{johnson1983knapsacks,samphaiboon2000heuristic,kolliopoulos2007partially}, so an MPS abstracts these problems as well.

\begin{example}[Cardinality Constrained Optimization on a Poset Antimatroid]\label{ex:cc-pam}
    Let $(\univ,\leadsto)$ be a poset with $|\univ| = n$, $w:\univ\to\mathbb{R}$ a nonnegative weight function, and $k\leq n$ an integer.
    Our task is to find an ideal of size $k$ of maximum cumulative weight, or equivalently finding a $k$-length prefix of a linear extension with maximum cumulative weight.
    Add an additional ``sentinel letter'' $t$ by letting $\univ' = \univ + t$, and consider the matroid $(\univ',\mathcal{U}^{k+1}_{n+1})$.
    Extend the order to $(\univ',\leadsto')$ by letting,
    \begin{align*}
        y \leadsto z \implies y \leadsto' z,&&\text{and,}&& (\forall y \in \univ)\;y\leadsto' t.
    \end{align*}
    This is, the derived order inherits the relations given by `$\leadsto$' while making every letter lesser than the sentinel.
    Finally, make the derived weight function $w':\univ'\to\mathbb{R}$ by letting $w'(y) = w(y)$ for all $y \in \univ$, and $w'(t) = \max_{y \in \univ}w(y) + 1$.
    It can be seen that any word of maximum weight over the MPS $\mathcal{L}(\mathcal{U}^{k+1}_{n+1},\leadsto')$ is a basic word ending in $t$. It follows that the first $k$ letters of such a word will be an ideal of $(\univ,\leadsto)$ of size $k$ maximizing $X\mapsto\sum_{y \in X}w(y)$.
    And so, this solves our original problem.
\end{example}

\begin{remark}\label{rem:dis-super}
    The introduction of the sentinel letter was to mitigate a certain quirk, being words $\alpha \in \mathcal{L}(\mathcal{M},\leadsto)$ of length one less than $r_\mathcal{M}(\univ)$ are such that every letter $y\notin \sigma_\mathcal{M}(\widetilde{\alpha})$ is substitutable with any of its prerequisites not already spanned by $\widetilde{\alpha}$, since $\sigma_\mathcal{M}(\widetilde{\alpha}+y) = \univ$ will hold.
This means basic words of $\mathcal{L}(\mathcal{U}^{k+1}_{n+1},\leadsto')$, are not supported on ideals of $(\univ',\leadsto')$ even though the nonbasic words are supported on ideals of $(\univ,\leadsto)$.
This gives an easy way to see that matroidal prerequisite systems are not simply the intersection of a poset antimatroid and a matroid, since if this were the case then the basic words of $\mathcal{L}(\mathcal{U}^{k+1}_{n+1},\leadsto')$ would be prefixes of linear extensions.
So, our MPS is a distinct structure from distributive supermatroids by the description in Ch. 9 of \cite{korte2012greedoids}.
\end{remark}

\begin{figure}
    \centering
    \begin{subfigure}{.48\textwidth}
        \centering

\begin{tikzpicture}[scale=0.92, every node/.style={font=\small}]
  \draw[thick] (0,0) -- (8,0);
  \draw[thick] (0,0) .. controls (2,2.25) and (6,2.25) .. (8,0); 

  \foreach \x/\name/\coord in {
    1/$s_1$/Sone,
    2.2/$a_1$/Aone,
    3.4/$s_2$/Stwo,
    4.6/$s_3$/Sthree,
    5.8/$a_2$/Atwo,
    7/$s_4$/Sfour
  }{
    \coordinate (\coord) at (\x,0);
    \filldraw[black] (\coord) circle (1.6pt);
    \node[above=2pt] at (\coord) {\name};
  }

  \draw[red,thick] (2.2,0) circle (3.3pt);
  \draw[red,thick] (5.8,0) circle (3.3pt);

  \filldraw[black] (6,1.3255) circle (1.6pt);
  \node[above=2pt] at (6,1.3255) {$t_2$};
  \draw[blue,thick] (6,1.3255) circle (3.3pt);
  
  \filldraw[black] (4,1.6825) circle (1.6pt);
  \node[above=2pt] at (4,1.6825) {$t_1$};
  \draw[blue,thick] (4,1.6825) circle (3.3pt);

  \node[below=11pt] (VecSone)   at (1,.25)
    {$\left[\begin{smallmatrix}1\\0\\0\end{smallmatrix}\right]$};

  \node[below=11pt] (VecAone)   at (2.2,0.25)
    {$\left[\begin{smallmatrix}1\\0\\1\end{smallmatrix}\right]$};

  \node[below=11pt] (VecStwo)   at (3.4,0.25)
    {$\left[\begin{smallmatrix}0\\1\\0\end{smallmatrix}\right]$};

  \node[below=11pt] (VecSthree) at (4.6,0.25)
    {$\left[\begin{smallmatrix}0\\0\\1\end{smallmatrix}\right]$};

  \node[below=11pt] (VecAtwo)   at (5.8,0.25)
    {$\left[\begin{smallmatrix}0\\1\\1\end{smallmatrix}\right]$};

  \node[below=11pt] (VecSfour)  at (7,0.25)
    {$\left[\begin{smallmatrix}1\\1\\0\end{smallmatrix}\right]$};

  \def\LabelOffset{0.55}

  \coordinate (Ltop1) at (1,\LabelOffset);
  \coordinate (Ltop2) at (2.2,\LabelOffset);
  \coordinate (Ltop3) at (3.4,\LabelOffset);

  \coordinate (Rtop1) at (4.6,\LabelOffset);
  \coordinate (Rtop2) at (5.8,\LabelOffset);
  \coordinate (Rtop3) at (7,\LabelOffset);

  \begin{pgfonlayer}{bg}
    \node[
      draw,
      dashed,
      rounded corners=5pt,
      very thin,
      inner sep=0pt,
      fit=(Ltop1)(Ltop2)(Ltop3)(VecSone)(VecAone)(VecStwo)
    ] (LeftBox) {};

    \node[
      draw,
      dashed,
      rounded corners=5pt,
      very thin,
      inner sep=0pt,
      fit=(Rtop1)(Rtop2)(Rtop3)(VecSthree)(VecAtwo)(VecSfour)
    ] (RightBox) {};

    \node[font=\scriptsize, anchor=north] at (LeftBox.south)  {West Block};
    \node[font=\scriptsize, anchor=north] at (RightBox.south) {East Block};
  \end{pgfonlayer}
\end{tikzpicture}
        \caption{A bridge with sensor placements. Modal signatures are underneath each sensor. Baseline sensors are outlined in red, sync hardware is outlined in blue, and the remaining sensors are for anomaly detection.}
    \end{subfigure}
    \hfill
    \begin{subfigure}{.48\textwidth}
        \centering
        \scalebox{.7}{\begin{tikzpicture}[
    edge/.style={draw}
]
\tikzstyle{S}=[rectangle, draw=black, rounded corners=5pt]

\matrix[matrix of math nodes,
  left delimiter={[},
  right delimiter={]},
  nodes in empty cells,
  column sep=0.1em,
  row sep=0.1em,
  ampersand replacement=\&] (m) {
      1 \& 1 \& 0 \& 0 \& 0 \& 1 \\
      0 \& 0 \& 1 \& 0 \& 1 \& 1 \\
      0 \& 1 \& 0 \& 1 \& 1 \& 0 \\
};

\node[above=4pt of m-1-1] (lab1) {$s_{1}$};
\node[above=4pt of m-1-2] (lab2) {$a_{1}$};
\node[above=4pt of m-1-3] (lab3) {$s_{2}$};
\node[above=4pt of m-1-4] (lab4) {$s_{3}$};
\node[above=4pt of m-1-5] (lab5) {$a_{2}$};
\node[above=4pt of m-1-6] (lab6) {$s_{4}$};

\node[left=6pt of m] (Mnode) {$M$};
\node[right=6pt of m] (plusnode) {$\oplus\, \mathcal{U}^1(\{t_1,t_2\})$};

\draw[decorate,decoration={brace,mirror,amplitude=10pt},thick]
  ([yshift=-8pt,xshift=-20pt]m.south west) -- ([yshift=-8pt,xshift=75pt]m.south east)
  node[midway,yshift=-20pt] {$(\Sigma,\mathcal{M})$};


\node[S,left=85pt of m,yshift=-50pt] (t1) {$t_1$};
\node[S] (t2) [right=of t1,xshift=-2mm] {$t_2$};
        
\node[S] (a1) [above=of t1] {$a_1$};
\node[S] (a2) [above=of t2] {$a_2$};
        
\node[S] (s1) [above left=of a1,xshift=5mm] {$s_1$};
\node[S] (s2) [above right=of a1,xshift=-5mm] {$s_2$};
\node[S] (s3) [above left=of a2,xshift=5mm] {$s_3$};
\node[S] (s4) [above right=of a2,xshift=-5mm] {$s_4$};

\draw[decorate,decoration={brace,amplitude=10pt},thick]
  ([yshift=10pt]s1.north west) -- ([yshift=10pt]s4.north east)
  node[midway,yshift=20pt] {$(\Sigma,\leadsto)$};
        
\draw[edge] (t1) -- (a1);
\draw[edge] (t2) -- (a1);
\draw[edge] (t1) -- (a2);
\draw[edge] (t2) -- (a2);
        
\draw[edge] (a1) -- (s1);
\draw[edge] (a1) -- (s2);
\draw[edge] (a2) -- (s3);
\draw[edge] (a2) -- (s4);

\end{tikzpicture}}
        \caption{A MPS following Example~\ref{ex:ad} is given by a poset like above and the direct sum of a linear matroid on $\mathrm{GF}(2)$ over the modal signatures with the rank 1 uniform matroid over $\{t_1,t_2\}$.}
    \end{subfigure}
    \caption{Left shows a bridge. Following Example~\ref{ex:ad}, the modal signatures of the baseline sensor of each block must be spanned before an anomaly detection sensor is placed in that block. Furthermore, we must also place sync hardware before any other sensor. This leads to the poset structure on the right. For budget reasons, we can only place one sync hardware. A linear matroid encodes the substitution structure of the modal signatures as well as a linear independence condition in the measurements (which can help with sensor fusion \cite{cohen2014sensor}), so the matroid is given by a direct sum like on the right. To save space, we verify the compatibility condition in Appendix~\ref{app:ver}. So, we have an MPS. Some examples of basic words are $t_1a_1 s_1 s_2$, $t_1a_2s_4s_1$, and $t_2a_1a_2s_1$.}\label{fig:bridge}
\end{figure}

We give a few less technical examples showing the role a MPS can play in applied problems like sensor placement, experiment design, and project scheduling.
We use the first example, anomaly detection via sensor networks, to motivate a small concrete instance of an MPS described in Figure~\ref{fig:bridge}.

\begin{example}[Sensor Networks for Anomaly Detection]\label{ex:ad}
    Sensor placement to maximize mutual information \cite{krause2008near} is a classic instance of monotone submodular maximization, with particularly natural matroid constraints encoding budget, network interference, etc. (see \cite{powers2017constrained,cohen2014sensor} for example).
    One specific task is using sensors to detect anomalies; for example the presence of errant vibration frequencies indicates damage in a structure \cite{bigoni2020systematic}.
    However, it is not possible to interpret anomalies without a frame of reference given by baseline measurements.
    This creates a prerequisite structure, where modes corresponding to a baseline must be spanned before anomaly detection can occur.
    Other prerequisite constraints which are operational may coexist, such as requiring clock synchronization hardware is placed before all other hardware to ensure a correct boot procedure. 
\end{example}

An instance of detecting anomalous frequencies on a bridge is shown in Figure~\ref{fig:bridge}.
A modal signature is a binary vector encoding the capabilities of each sensor, and a linear matroid in combination with a poset encodes the requirement that baseline sensor modal signatures in each block must be spanned before anomaly detecting sensors in those blocks can be placed, along with the requirement that clock synchronization hardware is placed before any sensor.
An interesting feasible word is $t_1a_2s_4s_1$.
Observe, even though the baseline sensor $a_1$ was never placed in the East block, we can still place $s_1$ after $t_1a_2s_4$ since the modal signature of $a_1$ is in the span of $\{a_2,s_4\}$.
This models the functionality contributed by $a_1$ becoming redundant in the presence of $a_2$ and $s_4$.

\begin{remark}
    Let $w:\univ\to\mathbb{R}$ be a weight function.
    In the context of greedoids (which we will show an MPS is), the greedy algorithm consists of maintaining a word $\alpha$, initially $\alpha =\epsilon$, and selecting the continuation $y$ of maximum weight for the update $\alpha \gets \alpha y$.
    Observe, with $w(t_2) = w(a_2) = 0$, $w(t_1) =w(s_1) = w(s_2)= w(a_1) = 1$, and $w(s_3) = w(s_4) = 3$, greedy makes $t_1 a_1 s_1 s_3$.
    The value of this solution is $w(t_1) + w(a_1) + w(s_1) + w(s_3) = 6$.
    Another feasible word is $t_1 a_2 s_3 s_4$, with value $w(t_1) + w(a_2) + w(s_3) + w(s_4) = 8$.
    So, the greedy algorithm is not guaranteed to succeed.
    Generally, greedy can only guarantee an approximation about equal to the matroid rank; see Appendix~\ref{app:greedy-fail}.
\end{remark}

Now, we describe the application of an MPS to experiment design over a conditional parameter space.
In this example, the dependency hierarchies of parameters of statistical models can induce a more complex poset structure than the previous example.

\begin{example}[Experiment Design over Conditional Parameter Spaces]
    D-optimal design, i.e. choosing experiments to maximize Fisher information about model parameters, can also be modeled as monotone submodular maximization \cite{bouhtou2010submodularity}.
    Matroid constraints may appear \cite{brown2022determinant}, for example a uniform matroid encodes a budget constraint.
    However, in many settings an experiment's parameters will be conditioned on other parameters, thereby forming a complex hierarchy.
    For example, the number of weights in a neural network will vary in the number of layers.
    Since experiments on the full parameter space may be expensive, one can instead probe coarse approximations which inform and iteratively refine towards more costly finer probes (see \cite{sen2018multi} for example).
    A poset can maintain this invariant, either for operational or efficiency concerns arising from cost.
\end{example}

Finally, where the last two examples showcase the added value of partial order structure, our final example takes a classically poset constrained problem an enriches through the interaction of matroids.
This is task scheduling for a project, which  would be a typical motivation of the sort of structure we described in Example~\ref{ex:cc-pam} (see \cite{kolliopoulos2007partially} for example).
Generalizing with an MPS creates the opportunity to model more complex substitution structures.

\begin{example}[Project Scheduling]
    An engineering project has tasks with dependencies; such dependencies form a natural partial order structure.
    The goal is to select a bounded number of tasks to complete such that all dependency relations and stakeholder value is maximized, which can be encoded via an additive or submodular objective.
    However, in a similar spirit to \cite{mohring2004scheduling}, we recognize that one task depends on another not because of the identity of the task, but because of some functionality provided by the task.
    Such functionality may be substituted by an alternative, or covered within the interaction of many other tasks.
    For that reason, one could use a matroid to encode this substitutes structure, and an MPS to reasonably encode the interaction between the dependency structure of a partial order and the substitute structure of the matroid.
\end{example}

In Section~\ref{subsec:mps-greedoid}, we examine the greedoid theory of an MPS.
In particular, we not only show that an MPS is always a greedoid, but that every MPS can be described as a strong polymatroid greedoid and vice versa.
The utility of this is best seen in Section~\ref{subsec:fill}, where we use the greatest representation and closure operator of strong polymatroid greedoids to give procedures for ``filling-in'' and ``projecting'' partial optimization solutions on to the basic words of an MPS.
This is Lemma~\ref{lem:projection}, and will be the foundation of our algorithmic techniques in Sections~\ref{sec:scaffold} and~\ref{sec:mag}.
Finally, since we use the greatest representation in our algorithms, we verify that it can be computed efficiently when given a matroid rank oracle in Section~\ref{subsec:mps-gr}.
This is because, even though our algorithms are described in terms of strong polymatroid greedoid concepts, our previous examples show that the MPS is a widely applicable and interpretable constraint model.
And so, while strong polymatroid greedoids give the mechanisms for our algorithms, an MPS grounds our motivations for designing them in the first place.
Therefore, we must verify that given an MPS it is possible to efficiently compute strong polymatroid greedoid concepts under reasonable assumptions.

\subsection{Greedoid Theoretic Aspects}\label{subsec:mps-greedoid}

We now show an MPS is not only a greedoid, but that a greedoid is a strong polymatroid greedoid if and only if there exists a matroid and poset giving its feasible words as an MPS.
Showing an MPS is a strong polymatroid greedoid will use the description as optimistic interval greedoids whose flat supports are closed under intersection \cite{streit2024polymatroid}.
We first verify optimism and the interval property.

\begin{proposition}\label{prop:opt+int}
    Let $\mathcal{L}(\mathcal{M},\leadsto)$ be a matroidal prerequisite system.
    Then, $\mathcal{L}(\mathcal{M},\leadsto)$ is an optimistic interval greedoid.
\end{proposition}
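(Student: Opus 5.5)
The plan is to verify the greedoid axioms, the interval property, and optimism directly from the definition. Everything rests on one reformulation of the continuation condition. Fix $\alpha\in\mathcal{L}(\mathcal{M},\leadsto)$ and a letter $y\notin\sigma_\mathcal{M}(\widetilde{\alpha})$; then $\widetilde{\alpha}+y$ is independent. I claim that $\alpha y\in\mathcal{L}(\mathcal{M},\leadsto)$ if and only if
\[
    \ideal y\subseteq\sigma_\mathcal{M}(\widetilde{\alpha}+y). \qquad (\ast)
\]
Take a prerequisite $x\leadsto y$. If $x\in\sigma_\mathcal{M}(\widetilde{\alpha})$, both $(\ast)$ and Eq.~\ref{eq:feasibility} hold for $x$ trivially. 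Otherwise $x\notin\sigma_\mathcal{M}(\widetilde{\alpha})$, and:
\begin{itemize}
\item if $x\in\sigma_\mathcal{M}(\widetilde{\alpha}+y)$, the Steinitz--Maclane exchange property gives $y\in\sigma_\mathcal{M}(\widetilde{\alpha}+x)$, so $\sigma_\mathcal{M}(\widetilde{\alpha}+x)=\sigma_\mathcal{M}(\widetilde{\alpha}+y)$;
\item conversely, equality of these spans forces $x\in\sigma_\mathcal{M}(\widetilde{\alpha}+y)$.
\end{itemize}
Since the conditions of Eq.~\ref{eq:feasibility} are imposed prefix by prefix, the same computation shows that a word is feasible exactly when its support is independent and each letter satisfies $(\ast)$ against its prefix. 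In particular $(\ast)$ only becomes easier as the span of the prefix grows.

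\emph{Hereditary property.} This is immediate: the conditions defining feasibility of $\alpha\beta$ include, letter by letter, those defining feasibility of $\alpha$, and subsets of independent sets are independent. Nonemptiness holds since $\epsilon$ is feasible.

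\emph{Interval property (which gives exchange).} Take feasible $\alpha$ and $\beta=y_1\ldots y_m$ with $m>|\alpha|$. Let $\beta'$ be the subword of $\beta$ consisting of those $y_i$ with $y_i\notin\sigma_\mathcal{M}(\widetilde{\alpha}\cup\{y_1,\ldots,y_{i-1}\})$.
\begin{itemize}
\item \emph{Independence and length.} This is the matroid greedy extension of $\widetilde{\alpha}$ by the elements of $\widetilde{\beta}$ in order. So $\widetilde{\alpha}\cup\widetilde{\beta}'$ is independent and spans $\widetilde{\alpha}\cup\widetilde{\beta}$. Hence $|\beta'|=r_\mathcal{M}(\widetilde{\alpha}\cup\widetilde{\beta})-|\alpha|\ge|\beta|-|\alpha|$.
\item \emph{Prerequisites.} For $y_i$ in $\beta'$, its prefix $P$ in $\alpha\beta'$ spans $S=\sigma_\mathcal{M}(\widetilde{\alpha}\cup\{y_1,\ldots,y_{i-1}\})$. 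Feasibility of $\beta$, through $(\ast)$ applied to its own prefixes, gives $\ideal y_i\subseteq\sigma_\mathcal{M}(\{y_1,\ldots,y_i\})\subseteq\sigma_\mathcal{M}(P+y_i)$. So $(\ast)$ holds for each appended letter, and $\alpha\beta'$ is feasible.
\end{itemize}
This is the interval property, and taking any letter of $\beta'$ gives exchange, so $\mathcal{L}(\mathcal{M},\leadsto)$ is an interval greedoid.

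\emph{Optimism.} Let $y_1\ldots y_R$ be basic and $z$ a nonloop (every letter is a nonloop here, since $\mathcal{M}$ is loopless). As noted after the definition, basic words are supported on matroid bases, so $z\in\sigma_\mathcal{M}(\{y_1,\ldots,y_R\})$. Choose $i$ minimal with $z\in\sigma_\mathcal{M}(\{y_1,\ldots,y_i\})$; then $i\ge1$ since $z$ is not a loop. Since $z$ lies in the span of $\{y_1,\ldots,y_i\}$ but not of $\{y_1,\ldots,y_{i-1}\}$, Steinitz--Maclane gives
\[
    \sigma_\mathcal{M}(\{y_1,\ldots,y_{i-1}\}+z)=\sigma_\mathcal{M}(\{y_1,\ldots,y_i\}).
\]
Because the prefix $y_1\ldots y_i$ is feasible and $z$ lies in its span, the compatibility condition Eq.~\ref{eq:compatibility} puts every prerequisite of $z$ inside $\sigma_\mathcal{M}(\{y_1,\ldots,y_i\})$. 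Then $(\ast)$ yields $y_1\ldots y_{i-1}z\in\mathcal{L}(\mathcal{M},\leadsto)$, which is optimism.

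The main obstacle is this last step: it is the only place the compatibility condition is needed, and it must be combined carefully with the span equality. The remaining steps are routine once $(\ast)$ is established.
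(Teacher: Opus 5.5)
Your proof is correct and follows essentially the paper's route: the interval property comes from greedily extending $\alpha$ by the letters of $\beta$ not yet spanned, and your reformulation $(\ast)$ packages the paper's step of lifting $\widetilde{\beta}'$-substitutability to $\widetilde{\alpha}$-substitutability as monotonicity of the span; optimism comes from the minimal spanning prefix together with Steinitz--Maclane and compatibility, exactly as in the paper. One small fix is needed: exchange follows from the \emph{first} letter of $\beta'$ via the hereditary property, not from an arbitrary letter of $\beta'$ (for example, with a free matroid, $x \leadsto y$, $\alpha=\epsilon$ and $\beta=\beta'=xy$, the one-letter word $y$ is infeasible).
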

\begin{proof}
    From the definition $\mathcal{L}(\mathcal{M},\leadsto)$ is nonempty and hereditary.
    Let $\alpha,\beta \in \mathcal{L}(\mathcal{M},\leadsto)$ with $|\beta| > |\alpha|$.
    Select the first $z_1$ appearing in $\widetilde{\beta}$ with $z_1\notin{\sigma_\mathcal{M}}(\widetilde{\alpha})$.
    We show $\alpha z_1$ is feasible.
    Well, $z_1\notin {\sigma_\mathcal{M}}(\widetilde{\alpha})$ implies $\widetilde{\alpha} + z_1$ is independent.
    And, if $\beta'z_1$ is the $\beta$-prefix ending at $z_1$, we see $\widetilde{\beta}'\subseteq{\sigma_\mathcal{M}}(\widetilde{\alpha})$ by construction, thus $\sigma_\mathcal{M}(\widetilde{\beta}')\subseteq\sigma_\mathcal{M}(\widetilde{\alpha})$.
    So, for every $y\leadsto z_1$ such that $y \notin{\sigma_\mathcal{M}}(\widetilde{\alpha})$, it follows that $y \notin {\sigma_\mathcal{M}}(\widetilde{\beta}')$.
    Since $\beta' \in \mathcal{L}(\mathcal{M},\leadsto)$, as we have already shown the hereditary property, it must be that $y$ and $z_1$ are $\widetilde{\beta}'$-substitutable.
    Hence, 
    \[
        {\sigma_\mathcal{M}}(\widetilde{\alpha}+y) = {\sigma_\mathcal{M}}(\widetilde{\alpha} \cup {\sigma_\mathcal{M}}(\widetilde{\beta}' + y)) = {\sigma_\mathcal{M}}(\widetilde{\alpha} \cup {\sigma_\mathcal{M}}(\widetilde{\beta}' + z_1) = {\sigma_\mathcal{M}}(\widetilde{\alpha} + z_1),
    \]
    by ${\sigma_\mathcal{M}}(\widetilde{\beta}')\subseteq{\sigma_\mathcal{M}}(\widetilde{\alpha})$, and so $y$ and $z_1$ are $\widetilde{\alpha}$-substitutable.
    Thus $\alpha z_1 \in \mathcal{L}(\mathcal{M},\leadsto)$.
    If $|\beta| - |\alpha| > 1$, one can repeat this argument on the first letter $z_2$ in $\beta$ with $z_2\notin{\sigma_\mathcal{M}}(\widetilde{\alpha z}_1)$ to show $\alpha z_1 z_2 \in \mathcal{L}(\mathcal{M},\leadsto)$ as well.
    Repeating in this way $|\beta| - |\alpha| - 2$ more times shows the existence of the $k = |\beta| - |\alpha|$ length $\beta$-subword $z_1\ldots z_k$ such that $\alpha z_1\ldots z_k \in \mathcal{L}(\mathcal{M},\leadsto)$.
    And so, $\mathcal{L}(\mathcal{M},\leadsto)$ is an interval greedoid.

    Now, let $x_1\ldots x_R \in \mathcal{L}(\mathcal{M},\leadsto)$ be basic.
    Select $y\in\univ$ and let $i$ be the smallest index with $y \in {\sigma_\mathcal{M}}(\{x_1,\ldots,x_i\})$.
    By Steinitz-Maclane exchange property, $x_i$ and $y$ are $\{x_1,\ldots,x_{i-1}\}$-substitutes.
    Well, for every $z \leadsto y$ with $z \notin{\sigma_\mathcal{M}}(\{x_1,\ldots,x_{i-1}\})$,
    the compatibility condition and the observation that $x_i$ and $y$ are $\{x_1,\ldots,x_{i-1}\}$-substitutes combine for, 
    \[
        y \in {\sigma_\mathcal{M}}(\{x_1,\ldots,x_{i-1},x_i\}) \implies z \in {\sigma_\mathcal{M}}(\{x_1,\ldots,x_{i-1},y\}).
    \]
    So, $x_1\ldots x_{i-1}y \in \mathcal{L}(\mathcal{M},\leadsto)$, and optimism follows.
\end{proof}

Now, we make a few observations about the interaction between the flat supports, the matroid, and the prerequisite order.
Specifically, the next lemma first verifies flat supports are always $r_\mathcal{M}$-closed and ideals of $(\univ,\leadsto)$.
These facts then show the flat supports are closed under intersection.

\begin{lemma}\label{lem:int}
    Let $\mathcal{L}(\mathcal{M},\leadsto)$ be a matroidal prerequisite system.
    The following are all true:
    \begin{enumerate}
        \item For all $\alpha \in \mathcal{L}(\mathcal{M},\leadsto)$, $\sigma_\mathcal{M}(\widetilde{\alpha}) = \kappa[\alpha]$,
        \item Every flat support is an ideal in the prerequisite order $(\univ,\leadsto)$,
        \item The flat supports are closed under intersection.
    \end{enumerate}
\end{lemma}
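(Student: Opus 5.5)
The plan is to prove item 1 first, since items 2 and 3 follow from it quickly. The central tool for item 1 is the following observation: the feasibility condition in Eq.~\ref{eq:feasibility} for appending a letter to a feasible word $\alpha$ depends on $\widetilde{\alpha}$ only through $\sigma_\mathcal{M}(\widetilde{\alpha})$. Indeed, $\sigma_\mathcal{M}(\widetilde{\alpha}+x) = \sigma_\mathcal{M}(\sigma_\mathcal{M}(\widetilde{\alpha})+x)$, and independence of $\widetilde{\alpha}+y$ is equivalent to $y\notin\sigma_\mathcal{M}(\widetilde{\alpha})$. I will show by induction on $|\gamma|$ that if $\alpha,\beta$ are feasible with $\sigma_\mathcal{M}(\widetilde{\alpha})=\sigma_\mathcal{M}(\widetilde{\beta})$, then $\alpha\gamma$ is feasible iff $\beta\gamma$ is. Each appended letter sees equal spans on both sides, so the conditions coincide, and the spans stay equal after appending. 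Hence equal spans imply $\alpha\sim\beta$.

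To get $\sigma_\mathcal{M}(\widetilde{\alpha})\subseteq\kappa[\alpha]$, write $\alpha=x_1\ldots x_k$ and fix $y\in\sigma_\mathcal{M}(\widetilde{\alpha})\setminus\widetilde{\alpha}$. Let $i$ be the smallest index with $y\in\sigma_\mathcal{M}(\{x_1,\ldots,x_i\})$. The optimism argument of Proposition~\ref{prop:opt+int} applies verbatim: Steinitz--Maclane exchange together with compatibility gives $x_1\ldots x_{i-1}y\in\mathcal{L}(\mathcal{M},\leadsto)$, and moreover $\sigma_\mathcal{M}(\{x_1,\ldots,x_{i-1},y\})=\sigma_\mathcal{M}(\{x_1,\ldots,x_i\})$. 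By the observation above, the suffix $x_{i+1}\ldots x_k$ can still be appended, giving a feasible word $\alpha'$ with the same span as $\alpha$. Then $\alpha'\sim\alpha$ and $y\in\widetilde{\alpha'}$, so $y\in\kappa[\alpha]$.

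For $\kappa[\alpha]\subseteq\sigma_\mathcal{M}(\widetilde{\alpha})$, take $\beta\sim\alpha$. Suppose $\widetilde{\beta}\not\subseteq\sigma_\mathcal{M}(\widetilde{\alpha})$, and let $z$ be the first letter of $\beta$ outside $\sigma_\mathcal{M}(\widetilde{\alpha})$. The first paragraph of the proof of Proposition~\ref{prop:opt+int} shows $\alpha z$ is feasible; that argument uses only that the earlier letters of $\beta$ lie in $\sigma_\mathcal{M}(\widetilde{\alpha})$, not any length comparison. So $z\in\mathcal{L}(\mathcal{M},\leadsto)/\alpha=\mathcal{L}(\mathcal{M},\leadsto)/\beta$, which means $\beta z$ is feasible, contradicting simplicity since $z\in\widetilde{\beta}$. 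Hence $\widetilde{\beta}\subseteq\sigma_\mathcal{M}(\widetilde{\alpha})$, which proves item 1.

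Item 2 is then immediate: if $z\in\kappa[\alpha]=\sigma_\mathcal{M}(\widetilde{\alpha})$ and $y\leadsto z$, the compatibility condition Eq.~\ref{eq:compatibility} gives $y\in\sigma_\mathcal{M}(\widetilde{\alpha})$.

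For item 3, I will prove a converse characterization: every set $F$ that is both $r_\mathcal{M}$-closed and an ideal of $(\univ,\leadsto)$ equals $\sigma_\mathcal{M}(\widetilde{\gamma})$ for some feasible $\gamma$. To build $\gamma$, start from $\epsilon$ and repeatedly append a $\leadsto$-minimal letter $y$ of $F\setminus\sigma_\mathcal{M}(\widetilde{\gamma})$. This keeps the support independent. Every prerequisite of $y$ lies in $F$ because $F$ is an ideal, so by minimality it lies in $\sigma_\mathcal{M}(\widetilde{\gamma})$, and Eq.~\ref{eq:feasibility} holds. Since $F$ is closed, $\sigma_\mathcal{M}(\widetilde{\gamma})\subseteq F$ is maintained throughout, and the process stops exactly when $\sigma_\mathcal{M}(\widetilde{\gamma})=F$.

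Items 1 and 2 say each flat support $\kappa[\alpha]$ is closed and an ideal. Intersections of matroid-closed sets are closed, and intersections of ideals are ideals. So $\kappa[\alpha]\cap\kappa[\beta]$ is closed and an ideal, hence of the form $\sigma_\mathcal{M}(\widetilde{\gamma})=\kappa[\gamma]$. I expect the main obstacle to be item 1, specifically making rigorous that the continuation language depends only on the span; once that is settled, the other steps are short.
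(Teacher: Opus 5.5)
Your proof is correct, and it takes a more elementary route than the paper's.

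\textbf{Item 1, inclusion $\sigma_\mathcal{M}(\widetilde{\alpha})\subseteq\kappa[\alpha]$.} The paper uses optimism to find the longest prefix $\alpha'z$ of $\alpha$ with $\alpha'y$ feasible. It then invokes Bj\"{o}rner's semimodularity lemma (Lemma~\ref{lem:bjorner}) to get $\alpha'y\sim\alpha'z$, and reads off $y\in\kappa[\alpha]$ from the order on flats. You instead isolate the fact that $\mathcal{L}(\mathcal{M},\leadsto)/\alpha$ depends only on $\sigma_\mathcal{M}(\widetilde{\alpha})$. You then use Steinitz--Maclane exchange to exhibit an explicit word $x_1\ldots x_{i-1}yx_{i+1}\ldots x_k\sim\alpha$ containing $y$.

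\textbf{Item 1, reverse inclusion.} This is the same simplicity contradiction the paper relies on. You make it precise, whereas the paper sketches the ``continue this process'' step and the inference $z'\notin\kappa[\alpha]$ informally.

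\textbf{Item 3.} This is the main divergence. The paper takes a $\leadsto$-minimal $z\in(\kappa(F)\cap\kappa(F'))\setminus\kappa(F\sqcap F')$. It uses the isomorphism $(\glang/\mathord{\sim},\sqsubseteq)\cong(\kappa(\glang/\mathord{\sim}),\subseteq)$, covering relations and exchange to reach $F\sqcap F'\prec[\mu z]\sqsubseteq F\sqcap F'$, a contradiction. You instead show that the flat supports are exactly the $\mathcal{M}$-closed ideals of $(\univ,\leadsto)$, after which closure under intersection is immediate.

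\textbf{What each route buys.} Your version is self-contained, needing neither Bj\"{o}rner's lemma nor the lattice isomorphism. It also gives a stronger structural description: it yields $\cl(X)=\sigma_\mathcal{M}(\ideal X)$ and makes Proposition~\ref{prop:rideal} nearly immediate. The paper's version stays within general interval-greedoid theory. It also states item 3 directly as $\kappa(F)\cap\kappa(F')=\kappa(F\sqcap F')$, the form needed by the characterization of strong polymatroid greedoids recalled in Section~\ref{sec:preliminaries}. You can get that form in one line: since $\kappa(F)\cap\kappa(F')$ is a flat support, it is the largest flat support contained in both, hence equals $\kappa(F\sqcap F')$ under the order isomorphism.
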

To prove the first item, we need the following statement which follows from the semimodularity of the flats of interval greedoids.
See \cite{bjorner1992introduction} for proof.
\begin{lemma}[\cite{bjorner1992introduction}]\label{lem:bjorner}
    Let $\glang$ be an interval greedoid, and select nonempty words $\alpha x\in \glang$ and $\alpha y \in \glang$.
    If $\alpha x \not\sim \alpha y$, then $\alpha x y\in \glang$ and $\alpha y x\in\glang$.
\end{lemma}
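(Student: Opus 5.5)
The plan is to derive the statement from the semimodularity of the lattice of flats $(\glang/\mathord{\sim},\sqsubseteq)$ of an interval greedoid, together with the interval property in its usual set form. That set form says: if $A\subseteq B\subseteq C$ are feasible sets (supports of feasible words) and $A+y$, $C+y$ are feasible, then $B+y$ is feasible. The argument is symmetric in $x$ and $y$, so it suffices to show $\alpha x y\in\glang$.

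First, I would note that $[\alpha x]$ and $[\alpha y]$ both cover $[\alpha]$ in the lattice of flats. They lie above $[\alpha]$ by reachability, and the height of a flat is the common length of its words, which is $|\alpha|+1$ for both. By hypothesis these two flats are distinct. Their meet therefore has height $|\alpha|$ and, lying below both, must be $[\alpha]$. Semimodularity then gives that the join $F=[\alpha x]\sqcup[\alpha y]$ covers both, so $F$ has height $|\alpha|+2$. Since $[\alpha x]\sqsubseteq F$ and $[\alpha y]\sqsubseteq F$, reachability yields letters $z,z'$ with $\alpha x z\in\glang$, $\alpha y z'\in\glang$ and $\alpha x z\sim \alpha y z'$, both representing $F$.

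Next, I would pull $y$ into the word $\alpha x\,\cdot$. Since $\alpha x z\sim\alpha y z'$, the contractions agree: $\glang/\alpha xz=\glang/\alpha yz'$. If $z'=x$, then $\alpha y x\in\glang$ and, by symmetry, the work reduces to the single case below. Otherwise I would apply the interval property to $\alpha x$ and $\alpha y z'$, which differ in length by one. This gives a letter $w$ of $\alpha y z'$, not in $\widetilde{\alpha x}$, with $\alpha x w\in\glang$; so $w\in\{y,z'\}$. If $w=y$ we are done.

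The main obstacle is the remaining case $w=z'\neq y$, i.e.\ excluding that the only continuations of $\alpha x$ drawn from $\alpha y z'$ avoid $y$. Here I would use the set form of the interval property with $A=\widetilde\alpha$, $B=\widetilde\alpha+x$, and $C$ a feasible set in the flat $F$ (or in a larger flat) that contains $x$ but not $y$ and admits $y$ as a continuation. For the latter, I would use the fact that the flat support of $F$ contains $y$ while $[\alpha x]$ does not, which holds because $[\alpha x]\neq[\alpha y]$ and flats of equal height with the same support coincide for interval greedoids. Together with $A+y=\widetilde{\alpha}+y$ being feasible, this forces $B+y$ feasible, i.e.\ $\alpha x y\in\glang$ by heredity and ordering. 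If locating such a $C$ proves delicate, I would instead follow the proof in \cite{bjorner1992introduction} directly, which derives this exact statement from semimodularity.
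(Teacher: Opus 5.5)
Your proposal has a genuine gap, and it sits at the step you flag as the main obstacle. That step is in fact the whole content of the lemma. (The paper gives no proof of its own; it says the statement follows from semimodularity and defers to \cite{bjorner1992introduction}.)

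Your lattice steps are fine. $[\alpha x]$ and $[\alpha y]$ are distinct covers of $[\alpha]$, semimodularity gives a join $F$ of height $|\alpha|+2$, and you obtain $\alpha xz\sim\alpha yz'$. The proposed exit from the case $w=z'\neq y$ cannot work, however. Since $\glang/\alpha xz=\glang/\alpha yz'$ and $y$ already occurs in $\alpha yz'$, the letter $y$ is not a continuation of any word of $F$. Every word of a flat $G\sqsupseteq F$ is equivalent to some $\alpha yz'\gamma$, so $y$ is not a continuation there either. Hence no $C$ of the kind you want exists in $F$ or in any larger flat, and $y\in\kappa(F)$ works against you rather than for you.

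More fundamentally, the reduction is circular. A feasible $C\supseteq\widetilde{\alpha}+x$ with $y\notin C$ and $C+y$ feasible exists if and only if $\widetilde{\alpha}+x+y$ is feasible. One direction is the interval property; for the other, take $C=\widetilde{\alpha}+x$. So locating $C$ just restates the goal, and the lattice data you collected (the join $F$ and the letters $z,z'$) does not supply it. Deferring to the cited proof is not a substitute for this step.

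The missing idea is to prove the contrapositive directly; semimodularity is then unnecessary. Put $A=\widetilde{\alpha}$ and note that $\alpha xy\in\glang$ iff $A+x+y$ is feasible iff $\alpha yx\in\glang$, so both conclusions are one claim. Suppose $A+x+y$ is not feasible.

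\emph{First step.} No $\beta\in\glang/\alpha x$ contains $y$. If $\beta=\beta_1y\beta_2$, apply the interval property to $A\subseteq A+x\subseteq A+x+\widetilde{\beta}_1$ with the element $y$ to get $A+x+y$ feasible. Symmetrically, no word of $\glang/\alpha y$ contains $x$.

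\emph{Second step.} Show $\alpha x\beta\in\glang$ implies $\alpha y\beta\in\glang$ by induction on $|\beta|$. Suppose $\alpha y\beta_0\in\glang$ and $\alpha x\beta_0z\in\glang$. Exchange gives $w\in\{x,z\}$ with $\alpha y\beta_0w\in\glang$. If $w=x$, then $A+\widetilde{\beta}_0+x+y$ is feasible, so $\alpha x\beta_0y\in\glang$, contradicting the first step. Hence $w=z$.

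By symmetry $\glang/\alpha x=\glang/\alpha y$, i.e.\ $\alpha x\sim\alpha y$, which is the contrapositive of the lemma.
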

\begin{proof}[Proof of Lemma~\ref{lem:int}]
    For what follows, recall Proposition~\ref{prop:opt+int} shows $\mathcal{L}(\mathcal{M},\leadsto)$ is an optimistic interval greedoid.
    Let $\alpha \in \mathcal{L}(\mathcal{M},\leadsto)$ and select $y \in {\sigma_\mathcal{M}}(\widetilde{\alpha})$.
    Since supports of feasible words are independent, there exists no word in the contraction $\mathcal{L}(\mathcal{M},\leadsto)/\alpha$ containing $y$ in its support.
    So by optimism (since every feasible word can be extended into a basic word), there exists an $\alpha$-prefix $\alpha'z$ with $y \in \mathcal{L}(\mathcal{M},\leadsto)/\alpha'$.
    Let $\alpha'z$ be a longest such prefix, so that $\alpha'y,\alpha'z\in\mathcal{L}(\mathcal{M},\leadsto)$ while neither $\alpha'yz$ or $\alpha'zy$ are feasible.
    By contrapositive of Lemma~\ref{lem:bjorner}, it follows that $\alpha'y \sim \alpha'z$.
    Because our construction makes $[\alpha'z]\sqsubseteq [\alpha]$, it follows that $[\alpha'y]\sqsubseteq[\alpha]$.
    So, there exists $\beta\in\univ^*$ making $\alpha'y\beta \in [\alpha]$, thus $y\in\kappa[\alpha]$.
    This shows ${\sigma_\mathcal{M}}(\widetilde{\alpha})\subseteq \kappa[\alpha]$.
    Now, select $z' \notin {\sigma_\mathcal{M}}(\widetilde{\alpha})$.
    Then, there exists $\beta \in \mathcal{L}(\mathcal{M},\leadsto)/\alpha$ with $z' \in \widetilde{\beta}$:
    There must exist a minimal $y' \in \ideal z' \setminus {\sigma_\mathcal{M}}(\widetilde{\alpha})$ w.r.t. `$\leadsto$,' and this minimal $y'$ must be a continuation of $\alpha$ since its prerequisites are covered by ${\sigma_\mathcal{M}}(\widetilde{\alpha})$.
    Therefore, if $\alpha z' \notin\mathcal{L}(\mathcal{M},\leadsto)$, we can append $y'$ onto $\alpha$.
    If we continue this process, eventually $z'$ will be substitutable with every prerequisite that hasn't been covered, at which point $z'$ will be a continuation.
    This makes $z' \notin \kappa[\alpha]$.
    So, as we've shown ${\sigma_\mathcal{M}}[\alpha] \subseteq \kappa[\alpha]$ and $\univ\setminus{\sigma_\mathcal{M}}(\widetilde{\alpha}) \subseteq \univ\setminus\kappa[\alpha]$, it follows that ${\sigma_\mathcal{M}}(\widetilde{\alpha}) = \kappa[\alpha]$.
    This gives 1.
    Finally, we note that our compatibility condition in Eq.~\ref{eq:compatibility} makes $\sigma_\mathcal{M}(\widetilde{\alpha})$ equal to an ideal in $(\univ,\leadsto)$.
    Since we've shown $\kappa[\alpha] = \sigma_\mathcal{M}(\widetilde{\alpha})$, this gives 2.

    This description of the flat supports is convenient for showing they are closed under intersection.
    In particular, since ideals of a poset are always closed under intersection, it follows that $\kappa(F)\cap\kappa(F')$ is an ideal on $(\univ,\leadsto)$ for all $F,F'\in\mathcal{L}(\mathcal{M},\leadsto)/\mathord{\sim}$.
    It is also true that $\kappa(F\sqcap F')$ and $\kappa(F)\cap\kappa(F')$ are closed in $\mathcal{M}$.
    Now, observe that the inclusion $\kappa(F\sqcap F') \subseteq \kappa(F) \cap \kappa(F')$ is always true by $(\glang/\mathord{\sim},\sqsubseteq)\cong(\kappa(\glang/\mathord{\sim}),\subseteq)$, so for 2. to hold we need only verify the reverse inclusion.
    By way of contradiction, suppose there exists $z \in (\kappa(F) \cap \kappa(F'))\setminus\kappa(F\sqcap F')$.
    Without loss of generality, let $z$ be a minimal letter in this difference w.r.t. `$\leadsto$.'
    Because $\kappa(F\sqcap F')$ is an ideal, minimality implies $y \in \kappa(F\sqcap F')$ for all $y\leadsto z$.
    Therefore, for all $\mu \in F\sqcap F'$, every prerequisite of $z$ is covered by the span ${\sigma_\mathcal{M}}(\widetilde{\mu})$ since ${\sigma_\mathcal{M}}(\widetilde{\mu}) = \kappa(F\sqcap F')$ by 1.
    This makes $z$ a continuation of words in $F\sqcap F'$, hence,
    \begin{equation}\label{eq:flkcon1}
        F\sqcap F' \prec [\mu z].
    \end{equation}
    But, $z\in\kappa(F)\cap \kappa(F')$ implies $[\mu z]$ is lesser than both $F$ and $F'$.
    One (of many) ways to see why is because if we select a word $\alpha \in \mathcal{L}(\mathcal{M},\leadsto)/\mu$ making $\mu\alpha \in F$, then by exchange there is an $\alpha$-subword $\alpha'$ of length $|\alpha|-1$ such that $\mu z \alpha'$ is feasible.
    This construction makes both $\widetilde{\mu z \alpha}' \subseteq \kappa(F)$, and the flats $F$ and $[\mu z \alpha']$ the same height in $(\glang/\mathord{\sim},\sqsubseteq)$.
    So, by $(\glang/\mathord{\sim},\sqsubseteq)\cong(\kappa(\glang/\mathord{\sim}),\subseteq)$, this makes $\mu z \alpha' \in F$.
    A symmetric argument applies to show $[\mu z]\sqsubseteq F'$ as well.
    Therefore,
    \begin{equation}\label{eq:flkcon2}
        [\mu z]\sqsubseteq F\sqcap F'.
    \end{equation}
    Yet, Eq.s~\ref{eq:flkcon1} and~\ref{eq:flkcon2} are contradictory, so $\kappa(F) \cap \kappa(F')=\kappa(F\sqcap F')$.
    This shows 3.
\end{proof}

\begin{figure}
    \centering
    \scalebox{.6}{\begin{tikzpicture}
    \tikzstyle{S}=[rectangle, draw=black, rounded corners=5pt,align=center,fill=white]
    \node[S] (e) {$\{\epsilon\}$};
    \node[S] (t) [above=of e, yshift=-10pt] {$\{t_1,t_2\}$};
    \node[S] (a1) [above left=of t, yshift=-5pt] {$\{t_1a_1,t_2a_1\}$};
    \node[S] (a2) [above right=of t, yshift=-5pt] {$\{t_1a_2,t_2a_2\}$};
    \node (inv) [above=of t] {};
    \node[S] (aa) [above=of inv,yshift=15pt] {$\{t_1a_1a_2, t_1a_2a_1,$\\$t_2a_1a_2, t_2a_2a_1\}$};
    \node[S] (a2s3) [right=of aa,xshift=-25pt,yshift=-25pt] {$\{t_1a_2s_3,t_2a_2s_3\}$};
    \node[S] (a2s4) [right=of a2s3,xshift=-20pt] {$\{t_1a_2s_4,t_2a_2s_4\}$};
    \node[S] (a1s2) [left=of aa,xshift=25pt,yshift=-25pt] {$\{t_1a_1s_2,t_2a_1s_2\}$};
    \node[S] (a1s1) [left=of a1s2,xshift=20pt] {$\{t_1a_1s_1,t_2a_1s_1\}$};
    \node[S] (top) [above=of aa,yshift=-10pt] {$\{t_1a_1 s_1 s_2, t_1a_1s_2s_1, t_1a_2s_3s_4, t_1a_2 s_4s_3, t_1a_2s_4s_1,t_1a_2s_4s_2,$\\$ t_1a_1a_2s_1, t_1a_1a_2s_2,t_1a_1a_2s_3, t_1a_2a_1s_1,t_1a_2a_1s_2,t_1a_2a_1s_3,$\\$t_2a_1 s_1 s_2, t_2a_1s_2s_1, t_2a_2s_3s_4, t_2a_2 s_4s_3, t_2a_2s_4s_1,t_2a_2s_4s_2,$\\$ t_2a_1a_2s_1, t_2a_1a_2s_2,t_2a_1a_2s_3, t_2a_2a_1s_1,t_2a_2a_1s_2,t_2a_2a_1s_3\}$};
    \node (label) [below=of e,yshift=20pt] {\Large $(\glang/\mathord{\sim},\sqsubseteq)$};

    \foreach \from/\to in {e/t, t/a1, t/a2, a1/a1s1, a1/a1s2, a1/aa, a2/aa, a2/a2s3, a2/a2s4, aa/top, a1s1/top, a1s2/top, a2s3/top, a2s4/top}
    \draw (\from) -- (\to);

    \node[S] (ef) [right=of e, xshift=290pt, yshift=-67.5pt] {$\varnothing$};
    \node[S] (tf) [above=of ef, yshift=-10pt] {$\{t_1,t_2\}$};
    \node[S] (a1f) [above left=of tf, yshift=-5pt] {$\{t_1,t_2,a_1\}$};
    \node[S] (a2f) [above right=of tf, yshift=-5pt] {$\{t_1,t_2,a_2\}$};
    \node (invf) [above=of tf] {};
    \node[S] (aaf) [above=of invf,yshift=15pt] {$\{t_1,t_2,a_1,a_2\}$};
    \node[S] (a2s3f) [right=of aaf,xshift=-25pt,yshift=-20pt] {$\{t_1,t_2,a_2,s_3\}$};
    \node[S] (a2s4f) [right=of a2s3f,xshift=-20pt] {$\{t_1,t_2,a_2,s_4\}$};
    \node[S] (a1s2f) [left=of aaf,xshift=25pt,yshift=-20pt] {$\{t_1,t_2,a_1,s_2\}$};
    \node[S] (a1s1f) [left=of a1s2f,xshift=20pt] {$\{t_1,t_2,a_1,s_1\}$};
    \node[S] (topf) [above=of aaf,yshift=-10pt] {$\{t_1,t_2,a_1,a_2,s_1,s_2,s_3,s_4\}$};
    \node (labelf) [below=of ef,yshift=20pt] {\Large $(\kappa(\glang/\mathord{\sim}),\subseteq)$};

    \foreach \from/\to in {ef/tf, tf/a1f, tf/a2f, a1f/a1s1f, a1f/a1s2f, a1f/aaf, a2f/aaf, a2f/a2s3f, a2f/a2s4f, aaf/topf, a1s1f/topf, a1s2f/topf, a2s3f/topf, a2s4f/topf}
    \draw (\from) -- (\to);
   
    \begin{pgfonlayer}{bg}
    \foreach \from/\to in {e/ef, t/tf, a1/a1f, a2/a2f, aa/aaf, a2s3/a2s3f, a2s4/a2s4f,a1s2/a1s2f,a1s1/a1s1f,top/topf}
    \draw[latex-latex, thick, dotted, gray] (\from) -- (\to);
    \end{pgfonlayer}

%
 

\end{tikzpicture}}
    \caption{The lattice of flats $(\glang/\mathord{\sim},\sqsubseteq)$ and flat supports $(\kappa(\glang/\mathord{\sim}),\subseteq)$ associated with the MPS from Figure~\ref{fig:bridge}. Observe, the meet operation of $(\kappa(\glang/\mathord{\sim}),\subseteq)$ is intersection, so the flat supports are closed under intersection. Moreover, every flat support is an ideal of $(\univ,\leadsto)$. Now, examine $\{a_1,a_2\}$. We see $\kappa[t_1 a_1 a_2] = \{t_1,t_2,a_1,a_2\}$ is the smallest flat support containing $\{a_1,a_2\}$. Hence, $\protect\gtest(\{a_1,a_2\})$ equals the height of $[t_1 a_1 a_2]$ in $(\glang/\mathord{\sim},\sqsubseteq)$ which is 3. This also makes $\cl(\{a_1,a_2\}) = \{t_1,t_2,a_1,a_2\}$.}\label{fig:spg-example}
\end{figure}

Therefore, by Lemma~\ref{lem:int}.3 and Proposition~\ref{prop:opt+int}, every MPS is a strong polymatroid greedoid by the description from \cite{streit2024polymatroid}.
To give some concrete intuition about the greatest representation and closure structure, in Figure~\ref{fig:spg-example} we show the lattice of flats $(\glang/\mathord{\sim},\sqsubseteq)$, lattice of flat supports $(\kappa(\glang/\mathord{\sim}),\subseteq)$, and their correspondence from the example in Figure~\ref{fig:bridge}.
For logical completeness, we give a construction encoding every strong polymatroid greedoid as an MPS in what follows.

\begin{theorem}\label{thm:spg-iff-mps}
    Let $\glang$ be a greedoid on the alphabet $\univ$.
    There exists a matroid $(\univ,\mathcal{M})$ and compatible poset $(\univ,\leadsto)$ such that $\glang$ equals the words of the matroidal prerequisite system $\mathcal{L}(\mathcal{M},\leadsto)$ if and only if $\glang$ is a strong polymatroid greedoid.
\end{theorem}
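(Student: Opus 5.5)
The forward direction is already essentially established. If $\glang=\mathcal{L}(\mathcal{M},\leadsto)$ is a matroidal prerequisite system, then Proposition~\ref{prop:opt+int} shows it is an optimistic interval greedoid, and Lemma~\ref{lem:int}.3 shows its flat supports are closed under intersection. By the characterization of \cite{streit2024polymatroid} recalled in Section~\ref{sec:preliminaries}, it is therefore a strong polymatroid greedoid. So the real work is the converse: given a strong polymatroid greedoid $\glang$, I would construct $\mathcal{M}$ and $\leadsto$ and prove $\glang=\mathcal{L}(\mathcal{M},\leadsto)$. First I discard the loops of $\glang$ from $\univ$, since they lie in no feasible word; this leaves $\glang$ unchanged and will keep $\mathcal{M}$ loopless.

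\emph{Construction.} I would take $\mathcal{M}$ to be the set system whose bases are the supports $\widetilde{\alpha}$ of basic words $\alpha\in\glang$. For the order, set $y\leadsto z$ if and only if $\cl(\{y\})\subsetneq\cl(\{z\})$, with $\cl$ the closure operator of the strong polymatroid greedoid. Strict containment of sets is irreflexive and transitive, so $\leadsto$ is a strict partial order with no further work. The intended invariant is $\sigma_\mathcal{M}(\widetilde{\alpha})=\kappa[\alpha]=\cl(\widetilde{\alpha})$ for every $\alpha\in\glang$, which mirrors Lemma~\ref{lem:int}.1.

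\emph{Key steps, in order.}
\begin{enumerate}
\item Verify that these supports satisfy the basis exchange axiom. Given basic words $\alpha,\beta$ and $x\in\widetilde{\alpha}\setminus\widetilde{\beta}$, I would use optimism of $\beta$ with respect to $x$ to obtain a prefix $\beta'$ of $\beta$ with $\beta'x\in\glang$. The interval property then extends $\beta'x$ by a subword of $\beta$ to a basic word, which should lose exactly one letter of $\widetilde{\beta}$.
\item Show that every feasible support is independent, and that $\sigma_\mathcal{M}(\widetilde{\alpha})=\kappa[\alpha]$ for $\alpha\in\glang$. The latter needs a letter $y\notin\kappa[\alpha]$ to be addable to $\widetilde{\alpha}$ inside some basis. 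I would get this by extending $\alpha$ using the same prefix-insertion argument, applied to a basic extension of $\alpha$.
\item Check the compatibility condition~\eqref{eq:compatibility}. If $z\in\sigma_\mathcal{M}(\widetilde{\alpha})=\cl(\widetilde{\alpha})$ and $y\leadsto z$, then $y\in\cl(\{y\})\subseteq\cl(\{z\})\subseteq\cl(\widetilde{\alpha})$, which gives $y\in\sigma_\mathcal{M}(\widetilde{\alpha})$.
\item Prove $\glang=\mathcal{L}(\mathcal{M},\leadsto)$ by induction on word length. Take $\alpha\in\glang\cap\mathcal{L}(\mathcal{M},\leadsto)$ and $z\notin\widetilde{\alpha}$. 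I need $\alpha z\in\glang$ if and only if the condition~\eqref{eq:feasibility} holds at $z$. For the left-to-right direction, let $y\leadsto z$ with $y\notin\cl(\widetilde{\alpha})$. Then $\cl(\widetilde{\alpha}+y)\subseteq\cl(\widetilde{\alpha}+z)$. Both sides are flat supports one level above $\cl(\widetilde{\alpha})$, using $\gtest$ as a representation and the fact that $\gtest$ has height equal to rank. Hence they are equal, which is exactly substitutability. For the converse, I would use that $\alpha z\in\glang$ if and only if $(\gtest/\widetilde{\alpha})(z)=1$. I then need to show that a marginal of at least $2$ forces some $y\in\cl(\{z\})\setminus\cl(\widetilde{\alpha})$ with $\cl(\{y\})\subsetneq\cl(\{z\})$ that is not substitutable with $z$. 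For this I would pick $y$ as the first letter of a shortest word from $[\alpha]$ into a flat support containing $z$.
\end{enumerate}

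\emph{Main obstacle.} I expect the hardest step to be the converse direction of step 4, together with the matroid axiom in step 1. Both require converting greedoid-level information (optimism, the interval property, the semimodular lattice of flats) into matroid span statements. If the basis-exchange route resists, my fallback is to define $\mathcal{M}$ through the rank function $r(X)=\max\{|\widetilde{\alpha}\cap X|:\alpha\in\glang\}$ and check submodularity using Lemma~\ref{lem:bjorner}.
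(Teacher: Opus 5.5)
Your forward direction and your choice of $\leadsto$ match the paper. The gap is the matroid. Step~1 is not merely hard; it is false in general. Take $\univ=\{a_1,a_2,a_3,b_1,b_2\}$, $\mathcal{M}=\mathcal{U}^3_5$, and $\leadsto$ generated by $a_1\leadsto a_2\leadsto a_3$ and $b_1\leadsto b_2$. Uniform matroids are always compatible (Section~\ref{sec:mps}), so this is an MPS and hence a strong polymatroid greedoid.

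Every $2$-set is closed in $\mathcal{U}^3_5$. So feasible words of length at most $2$ are linear extensions of ideals, and the third letter is unrestricted. Hence the supports of basic words are exactly the $3$-sets containing one of $\{a_1,a_2\}$, $\{a_1,b_1\}$, $\{b_1,b_2\}$. Now $B_1=\{a_1,a_2,a_3\}$ and $B_2=\{a_2,b_1,b_2\}$ are such supports and $a_1\in B_1\setminus B_2$. Yet neither $\{a_2,a_3,b_1\}$ nor $\{a_2,a_3,b_2\}$ is a basic support, so basis exchange fails.

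Your argument breaks exactly here. Optimism plus the interval property only give some $y\in\widetilde{\beta}$ with $\widetilde{\beta}-y+x$ a basic support; nothing keeps $y$ out of $\widetilde{\alpha}$. With $\beta=a_1a_2a_3$, $\alpha=b_1b_2a_3$ and $x=b_2$, the only prefix of $\beta$ admitting $b_2$ is $a_1a_2$. The dropped letter is then $a_3\in\widetilde{\alpha}\cap\widetilde{\beta}$. The two genuine exchanges, $\{a_2,a_3,b_2\}$ and $\{a_1,a_3,b_2\}$, are not basic supports.

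Your fallback $r(X)=\max_{\alpha\in\glang}|\widetilde{\alpha}\cap X|$ is the rank function of the same family, and it is not submodular here. For $X=\{a_2,a_3,b_1\}$ and $Y=\{a_2,a_3,b_2\}$ you get $r(X)=r(Y)=r(X\cap Y)=2$ but $r(X\cup Y)=3$.

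The fix is to let the matroid have bases that support no basic word. The paper takes $\mathcal{M}$ to be the $1$-truncation $\gtest^{(1)}(X)=\min_{Y\subseteq X}\gtest(Y)+|X\setminus Y|$. Its independent sets are the $X$ with $|Y|\le\gtest(Y)$ for all $Y\subseteq X$. This is a matroid by standard polymatroid theory, and in the example above it returns $\mathcal{U}^3_5$. Feasible supports are independent in it by diminishing returns, and spans are computed from $\gtest^{(1)}$, not from basic words.

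With this $\mathcal{M}$ and your $\leadsto$, compatibility follows from a minimizer $Y$ of the truncation. The equality $\glang=\mathcal{L}(\mathcal{M},\leadsto)$ is then proved much as in your Step~4:
\begin{itemize}
\item If $x\leadsto y_i$ and $x\notin\sigma_\mathcal{M}(\{y_1,\ldots,y_{i-1}\})$, then $\gtest(\{y_1,\ldots,y_i,x\})=\gtest(\{y_1,\ldots,y_i\})$ puts $x$ in $\sigma_\mathcal{M}(\{y_1,\ldots,y_i\})$, and Steinitz--MacLane gives substitutability.
\item A marginal of $0$ destroys independence.
\item A marginal above $1$ produces $x\leadsto y_i$ with $\cl(\{y_1,\ldots,y_{i-1}\})\prec\cl(\{y_1,\ldots,y_{i-1},x\})\subsetneq\cl(\{y_1,\ldots,y_i\})$ and $y_i\notin\sigma_\mathcal{M}(\{y_1,\ldots,y_{i-1},x\})$.
\end{itemize}

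Separately, your selection rule in the converse of Step~4 needs repair. In the example, take $\alpha=b_1$ and $z=a_2$. The word $b_1b_2a_2$ is a shortest extension reaching a flat support containing $a_2$. But $\cl(\{b_2\})=\{b_1,b_2\}\not\subseteq\cl(\{a_2\})=\{a_1,a_2\}$, so $b_2\not\leadsto a_2$. Instead choose $y$ as the first letter outside $\cl(\widetilde{\alpha})$ of a feasible word whose flat support is $\cl(\{z\})$. Diminishing returns then give $(\gtest/\widetilde{\alpha})(y)=1$, and the marginal of $z$ exceeding $1$ forces $\cl(\{y\})\subsetneq\cl(\{z\})$.
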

\begin{proof}
    As identified, Lemma~\ref{lem:int}.3 and Proposition~\ref{prop:opt+int} already show one direction, so we need only verify that every strong polymatroid greedoid can be encoded as an MPS.
    Let $\glang$ be a strong polymatroid greedoid with greatest representation $\gtest$ and closure operator $\cl$.
    Consider the 1-truncation,
    \[
        \gtest^{(1)}(X) \triangleq \min_{Y\subseteq X} \gtest(Y) + |X\setminus Y|.
    \]
    It is known that $\gtest^{(1)}$ is always a matroid rank function; see \cite{korte2012greedoids} for details on $k$-truncations of polymatroids.
    So, we let $\mathcal{M}$ be the matroid with rank $\gtest^{(1)}$.
    Now, define the poset $(\univ, \leadsto)$ via the ordering of the closures of singleton sets like,
    \begin{equation}\label{eq:order}
        y\leadsto z \equiv \cl(y) \subset \cl(z).
    \end{equation}
    We verify the compatibility condition.
    Let $\alpha \in \mathcal{L}(\mathcal{M},\leadsto)$, $y\leadsto z$, and $z \in \sigma_\mathcal{M}(\widetilde{\alpha})$.
    Assume $z \notin \widetilde{\alpha}$ and $z \in \widetilde{\alpha}$ implies $y \in \sigma_\mathcal{M}(\widetilde{\alpha})$ through Eq.~\ref{eq:feasibility}.
    This implies that there exists a minimizer $Y$ achieving $\gtest(Y) + |\widetilde{\alpha}\setminus Y| = \gtest^{(1)}(\widetilde{\alpha})$ with $(\gtest/Y)(z) = 0$.
    Since $\cl = \sigma_{\gtest}$, Eq.~\ref{eq:order} shows $(\gtest/Y)(y) = 0$ as well, so,
    \[
        \gtest^{(1)}(\widetilde{\alpha} + y)\geq\gtest^{(1)}(\widetilde{\alpha}) = \gtest(Y) + |\widetilde{\alpha}\setminus Y| \geq \gtest(Y + y) + |\widetilde{\alpha}\setminus (Y+y)| \geq \gtest^{(1)}(\widetilde{\alpha} + y).
    \]
    It follows that $y \in \sigma_\mathcal{M}(\widetilde{\alpha})$, so compatibility is satisfied.
    Hence, $\mathcal{L}(\mathcal{M},\leadsto)$ is an MPS.

    We claim $\glang = \mathcal{L}(\mathcal{M},\leadsto)$.
    First we verify $y_1\ldots y_\ell \in \glang$ implies $y_1\ldots y_\ell \in \mathcal{L}(\mathcal{M},\leadsto)$.
    By way of contradiction, suppose $\{y_1,\ldots,y_\ell\}$ is not independent.
    Then $\gtest^{(1)}(\{y_1,\ldots,y_\ell\}) < \ell$.
    Let $Z\subseteq \{y_1,\ldots,y_\ell\}$ be a minimizer achieving,
    \[
        \gtest^{(1)}(\{y_1,\ldots,y_\ell\}) = \gtest(Z) + |\{y_1,\ldots,y_\ell\}\setminus Z|.
    \]
    Because $y_1\ldots y_\ell \in \glang$, the law of diminishing returns shows,
    \[
        (\gtest/Z\cup\{y_1,\ldots,y_{i-1}\})(y_i) \leq 1,\quad\forall i \in \{1,\ldots, \ell\}.
    \]
    Therefore, as $Z\subseteq \{y_1,\ldots,y_\ell\}$,
    \[
        \ell = \gtest(\{y_1,\ldots,y_\ell\}) = \gtest(Z) + \sum_{i=1}^\ell(\gtest/Z\cup\{y_1,\ldots,y_{i-1}\})(y_i) \leq \gtest(Z) + |\{y_1,\ldots,y_\ell\}\setminus Z| < \ell,
    \]
    a contradiction.
    It follows that $\{y_1,\ldots,y_\ell\} \in \mathcal{M}$.
    Now, fix $i \in \{1,\ldots,\ell\}$ and a prerequisite $x \leadsto y_i$.
    Assume $x \notin \sigma_\mathcal{M}(\{y_1,\ldots,y_{i-1}\})$.
    Because matroids are subset-closed families, $\{y_1,\ldots,y_{i}\}$ is independent as well.
    Therefore,
    \[
        \gtest^{(1)}(\{y_1,\ldots,y_{i}\}) = \gtest(\{y_1,\ldots,y_{i}\}) = \gtest(\{y_1,\ldots,y_{i},x\}) \geq \gtest^{(1)}(\{y_1,\ldots,y_{i},x\}) \geq \gtest^{(1)}(\{y_1,\ldots,y_{i}\}),
    \]
    which implies $x \in \sigma_\mathcal{M}(\{y_1,\ldots,y_{i}\})$.
    So, by the Steinitz-Maclane exchange property, $x$ and $y_i$ are $\{y_1,\ldots,y_{i-1}\}$-substitutable.
    As our choice of index $i$ was arbitrary, this makes $y_1\ldots y_\ell \in \mathcal{L}(\mathcal{M},\leadsto)$.

    Now suppose $y_1\ldots y_\ell \notin \glang$.
    Then, there exists $i \in \{1,\ldots,\ell\}$ with $(\gtest/\{y_1,\ldots,y_{i-1}\})(y_i) \neq 1$; fix $i$ to be the first such index.
    If $(\gtest/\{y_1,\ldots,y_{i-1}\})(y_i) = 0$, then $\{y_1,\ldots, y_\ell\}$ cannot be independent since we would have $\gtest^{(1)}(\{y_1,\ldots,y_{\ell}\}) < \ell$ by the fact that $\gtest^{(1)}(\{y_1,\ldots,y_{i}\}) < i$ and $\gtest^{(1)}$ is subcardinal (so marginal returns are bounded by one).
    If $(\gtest/\{y_1,\ldots,y_{i-1}\})(y_i) > 1$, then $\cl(y_i)\not\subseteq \cl(\{y_1,\ldots,y_{i-1}\})$, and by way of semimodularity $\cl(y_i)$ cannot cover any flat support (from $\glang$) in $\ideal \cl(\{y_1,\ldots,y_{i-1}\})$ on $(\kappa(\glang/\mathord{\sim}),\subseteq)$.
    So, there exists $x \leadsto y_i$ such that,
    \begin{equation}\label{eq:cl-cont}
        \cl(\{y_1,\ldots,y_{i-1}\}) \prec \cl(\{y_1,\ldots,y_{i-1},x\}) \subset \cl(\{y_1,\ldots,y_i\}).
    \end{equation}
    Before continuing, take note that our choice of $i$ makes $y_1\ldots y_{i-1}\in\glang$.
    Moreover, since $\gtest$ grades the $\cl$-image ordered by containment, 
    $$\cl(\{y_1,\ldots,y_{i-1}\}) \prec \cl(\{y_1,\ldots,y_{i-1},x\})\implies(\gtest/\{y_1,\ldots,y_{i-1}\})(x) = 1.$$
    So $y_1\ldots y_{i-1}x\in\glang$, which means $\{y_1,\ldots,y_{i-1},x\}\in\mathcal{M}$ by our prior argument.
    This means,
    \begin{equation}\label{eq:first-cont}
        Z \subseteq \{y_1,\ldots,y_{i-1},x\} \implies \gtest(Z) + |\{y_1,\ldots,y_{i-1},x,y_i\}\setminus Z| > |\{y_1,\ldots,y_{i-1},x\}|.
    \end{equation}
    Moreover, Eq.~\ref{eq:cl-cont} implies $(\gtest/\{y_1,\ldots,y_{i-1},x\})(y_i) \geq 1$.
    Recalling $\gtest(Z) = \gtest(Z-y_i) + (\gtest/Z-y_i)(y_i)$ whenever $y_i \in Z$, by the law of diminishing returns one observes,
    \begin{equation}\label{eq:second-cont}
        Z \subseteq \{y_1,\ldots,y_{i-1},x,y_i\}\text{ and }y_i \in Z \implies \gtest(Z) + |\{y_1,\ldots,y_{i-1},x,y_i\}\setminus Z| > |\{y_1,\ldots,y_{i-1},x\}|,
    \end{equation}
    as well.
    Combining Eq.s~\ref{eq:first-cont} and~\ref{eq:second-cont} shows $y_i \notin \sigma_\mathcal{M}(\{y_1,\ldots,y_{i-1},x\})$, and so by contrapositive of the Steinitz-Maclane exchange property it is the case that $x \notin \sigma_\mathcal{M}(\{y_1,\ldots,y_i\})$.
    So, $x\leadsto y_i$ while $x$ is neither covered by $\sigma_\mathcal{M}(\{y_1,\ldots,y_{i-1}\})$ or $\{y_1,\ldots,y_{i-1}\}$-substitutable with $y_i$.
    And so, $y_1\ldots y_\ell\notin\mathcal{L}(\mathcal{M},\leadsto)$ in the case of $(\gtest/\{y_1,\ldots,y_{i-1}\})(y_i) >1$ as well.
    This concludes the proof.
\end{proof}

\subsection{Computing the Greatest Representation}\label{subsec:mps-gr}

Our algorithms use the greatest representation $\gtest$.
To ensure this is well-founded, we show $\gtest$ can be evaluated via a single rank query on the underlying matroid.
Specifically, the greatest representation of an MPS $\mathcal{L}(\mathcal{M},\leadsto)$ is the composition of the matroid rank and ideal operator on $(\univ,\leadsto)$.
Note that ideals can be computed efficiently in $\poly(|\univ|)$-time by repeated ordered comparisons.

\begin{proposition}\label{prop:rideal}
    Let $\mathcal{L}(\mathcal{M},\leadsto)$ be an MPS.
    Then, $\gtest = r_\mathcal{M}\circ\ideal$.
\end{proposition}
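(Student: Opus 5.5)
Since $\gtest(X)=\min\{|\alpha|\mid X\subseteq\kappa[\alpha]\}$, and by Lemma~\ref{lem:int}.1 every flat support has the form $\kappa[\alpha]=\sigma_\mathcal{M}(\widetilde{\alpha})$ with $\widetilde{\alpha}$ independent, we have $|\alpha| = r_\mathcal{M}(\widetilde{\alpha}) = r_\mathcal{M}(\kappa[\alpha])$. Thus
\[
    \gtest(X)=\min\{r_\mathcal{M}(\kappa[\alpha])\mid X\subseteq\kappa[\alpha]\}.
\]
The plan is to prove the two inequalities $\gtest(X)\ge r_\mathcal{M}(\ideal X)$ and $\gtest(X)\le r_\mathcal{M}(\ideal X)$ separately.

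For the lower bound, take any $\alpha$ with $X\subseteq\kappa[\alpha]$. By Lemma~\ref{lem:int}.2, $\kappa[\alpha]$ is an ideal of $(\univ,\leadsto)$, so $\ideal X\subseteq\kappa[\alpha]$. Monotonicity of $r_\mathcal{M}$ then gives $r_\mathcal{M}(\ideal X)\le r_\mathcal{M}(\kappa[\alpha])=|\alpha|$. Minimizing over $\alpha$ yields $\gtest(X)\ge r_\mathcal{M}(\ideal X)$.

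For the upper bound, I will construct a feasible word $\alpha$ with $\widetilde{\alpha}\subseteq\ideal X$ and $\ideal X\subseteq\sigma_\mathcal{M}(\widetilde{\alpha})$. Such an $\alpha$ satisfies $|\alpha|=r_\mathcal{M}(\widetilde{\alpha})\le r_\mathcal{M}(\ideal X)$ and $X\subseteq\kappa[\alpha]$, which gives the bound. The construction mirrors the argument in the proof of Lemma~\ref{lem:int}. Start from $\alpha=\epsilon$. While $\ideal X\not\subseteq\sigma_\mathcal{M}(\widetilde{\alpha})$, choose a $\leadsto$-minimal letter $y\in\ideal X\setminus\sigma_\mathcal{M}(\widetilde{\alpha})$. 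Every prerequisite $x\leadsto y$ lies in $\ideal X$ (since $\ideal X$ is an ideal), so by minimality $x\in\sigma_\mathcal{M}(\widetilde{\alpha})$. Also $\widetilde{\alpha}+y$ is independent because $y\notin\sigma_\mathcal{M}(\widetilde{\alpha})$. Hence Eq.~\ref{eq:feasibility} holds and $\alpha y\in\mathcal{L}(\mathcal{M},\leadsto)$. Each step strictly increases the rank of $\widetilde{\alpha}$ while staying inside $\ideal X$, so the process terminates with the desired $\alpha$.

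I expect no step to be a serious obstacle, since Lemma~\ref{lem:int} does the real work. The one point needing care is the identity $|\alpha|=r_\mathcal{M}(\kappa[\alpha])$, which relies on both independence of feasible supports and the span identity of Lemma~\ref{lem:int}.1. I will also use the fact that $\gtest$ is indeed the minimum over flats of word length, which holds because all words in a flat share the same length (their supports are bases of the same span).
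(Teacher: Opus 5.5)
Your proof is correct. It reaches the result by a more direct route than the paper's.

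The construction of $\alpha$ is the same in both: repeatedly append a $\leadsto$-minimal letter of $\ideal X$ that is not yet spanned. The difference is how $\gtest(X)$ is tied to $|\alpha|$.

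\textbf{The paper's route.} It goes through the representation machinery. It uses optimism together with the law of diminishing returns and integrality of $\gtest$ to show $(\gtest/\widetilde{\alpha})(y)=0$ for every $y\in\ideal X$, i.e. $\ideal X\subseteq\cl(\widetilde{\alpha})$. It then combines $\ideal X\subseteq\cl(X)$ with $\sigma_{\gtest}=\cl$, and uses that $\gtest$ is a representation, to get $\gtest(X)=\gtest(\widetilde{\alpha})=|\alpha|$.

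\textbf{Your route.} You work directly from the definition $\gtest(X)=\min\{|\beta|\mid X\subseteq\kappa[\beta]\}$. For the upper bound, Lemma~\ref{lem:int}.1 gives $X\subseteq\ideal X\subseteq\sigma_\mathcal{M}(\widetilde{\alpha})=\kappa[\alpha]$. For the lower bound, Lemma~\ref{lem:int}.2 together with the identity $|\beta|=r_\mathcal{M}(\kappa[\beta])$ handles every competing word $\beta$.

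\textbf{What each buys.} Your argument is shorter and needs only Lemma~\ref{lem:int}. It never uses that $\gtest$ is a polymatroid representation, so it does not depend on the strong polymatroid greedoid characterization imported from the literature. The paper's argument additionally exhibits $\gtest(X)=\gtest(\ideal X)$ and $\ideal X\subseteq\cl(\widetilde{\alpha})$ explicitly. Neither fact is needed for the statement, and the latter already follows from Lemma~\ref{lem:int}.1 as you observe.
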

\begin{proof}
    Fix $X\subseteq\univ$.
    We verify $\gtest(X) = r_\mathcal{M}(\ideal X)$.
    Perform the following procedure: Begin with a word $\alpha = \epsilon$.
    Then, in each step, select a smallest $z \in \ideal X\setminus\widetilde{\alpha}$ satisfying $r_\mathcal{M}(\widetilde{\alpha} + z) > r_\mathcal{M}(\widetilde{\alpha})$.
    Terminate once no such $z$ exists.
    See that $\alpha \in \mathcal{L}(\mathcal{M},\leadsto)$ because, fixing $z$ to a selection made in any round, minimality makes every $y \leadsto z$ spanned by the round's value of $\alpha$.
    Moreover, it is the case that $(r_\mathcal{M}/\widetilde{\alpha})(\ideal X) = 0$ upon termination.
    So, with $\alpha$ being the value at the end of the final round, we have $\ideal X \subseteq {\sigma_\mathcal{M}}(\widetilde{\alpha})$.
    Yet combining this with $\widetilde{\alpha}\subseteq\ideal X$ makes,
    \[
        r_\mathcal{M}(\widetilde{\alpha}) \leq r_\mathcal{M}(\ideal X) \leq r_\mathcal{M}(\widetilde{\alpha}),
    \]
    so $r_\mathcal{M}(\ideal X) = |\alpha|$ since $\alpha \in \mathcal{L}(\mathcal{M},\leadsto)$ implies $\widetilde{\alpha} \in \mathcal{M}$.
    Now, take note that $\ideal X \subseteq {\sigma_\mathcal{M}}(\widetilde{\alpha})$ also implies that no $\beta \in \mathcal{L}(\mathcal{M},\leadsto)/\alpha$ satisfies $\widetilde{\beta}\cap\ideal X\neq \varnothing$, as the supports of feasible words must be independent.
    This makes $(\gtest/\widetilde{\alpha})(y) \neq 1$ for all $y \in \ideal X$ in particular.
    By optimism (see Proposition~\ref{prop:opt+int}) it follows that there exists an $\alpha$-prefix $\alpha^{(y)}z$ for each $y \in \ideal X$ such that $\alpha^{(y)}y$ is feasible.
    So, $(\gtest/\widetilde{\alpha}^{(y)})(y) = 1$.
    But by the law of diminishing returns and $(\gtest/\widetilde{\alpha})(y) \neq 1$ it follows that,
    \[
        1 = (\gtest/\widetilde{\alpha}^{(y)})(y) > (\gtest/\widetilde{\alpha})(y),\quad\forall y \in \ideal X,
    \]
    and so as $\gtest$ is integral we have $(\gtest/\widetilde{\alpha})(y) = 0$ for all $y \in \ideal X$.
    This makes $\ideal X \subseteq \cl(\widetilde{\alpha})$.
    Therefore,
    \[
        \gtest(\widetilde{\alpha}) \leq \gtest(\ideal X) \leq \gtest(\widetilde{\alpha}),
    \]
    and so $\gtest(\ideal X) = \gtest(\widetilde{\alpha})$.
    But, because flat supports are ideals by Lemma~\ref{lem:int}.2, we see $\ideal X \subseteq \cl(X)$, and so $\gtest(X) \leq \gtest(\ideal X) \leq \gtest(X)$ by $\cl = \sigma_{\gtest}$.
    Therefore, $\gtest(X) = \gtest(\widetilde{\alpha})$,
    and since $\gtest$ is a representation we have $\gtest(\widetilde{\alpha}) = |\alpha|$.
    Hence we've found $r_\mathcal{M}(\ideal X) = |\alpha|$ and $\gtest(X) = |\alpha|$.
\end{proof}

\subsection{Projection and Fill-In}\label{subsec:fill}

In Sections~\ref{sec:scaffold} we present algorithms based on projecting a relaxed solution back onto the words of an MPS, while in Section~\ref{sec:mag} our algorithms build words through repeated insertions eventually converging to a feasible word.
Both ideas require a description of what sets of letters, which might not support a feasible word, can be ``filled'' into a feasible word.
In the following, we show that words $x_1\ldots x_\ell$ such that $(\gtest/\{x_1,\ldots,x_{i-1}\})(x_i) >0$ for all $i \in \{1,\ldots, \ell\}$ satisfy this.
We note this is a special property of strong polymatroid greedoids.

\begin{lemma}\label{lem:projection}
    Suppose $\glang$ is a strong polymatroid greedoid.
    Then, for all simple $x_1\ldots x_k \in \univ^*$, if $(\gtest/\{x_1,\ldots,x_{i-1}\})(x_i) > 0$ for all $i \in \{1,\ldots,k\}$, there exists a set of letters $\{y_{i,1},\ldots,y_{i,\ell_i}\}$ with $\ell_i =(\gtest/\{x_1,\ldots,x_{i-1}\})(x_i) - 1$ inducing the feasible word,
    \begin{equation}\label{eq:marriage}
        y_{1,1}\ldots y_{1,\ell_1} x_1 y_{2,1}\ldots y_{2,\ell_2} x_2 \ldots y_{k,1}\ldots y_{k,\ell_k}x_k \in \glang.
    \end{equation}
\end{lemma}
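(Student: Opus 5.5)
We will prove the statement by induction on $k$, constructing the word in \eqref{eq:marriage} block by block. We maintain a feasible word $\beta_{i-1}$ whose letter set satisfies $\cl(\widetilde{\beta}_{i-1}) = \cl(\{x_1,\ldots,x_{i-1}\})$ and $|\beta_{i-1}| = \gtest(\{x_1,\ldots,x_{i-1}\})$. For $k=0$ we take $\beta_0=\epsilon$. The inductive step builds $\beta_i = \beta_{i-1}y_{i,1}\ldots y_{i,\ell_i}x_i$. Since $\sigma_{\gtest}=\cl$, the equality of closures gives $(\gtest/\widetilde{\beta}_{i-1})(z) = (\gtest/\{x_1,\ldots,x_{i-1}\})(z)$ for every $z$. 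Indeed, $\gtest$ of a set depends only on its closure, so $\gtest(X+z)=\gtest(\cl(X)+z)$. Hence $d_i \triangleq (\gtest/\widetilde{\beta}_{i-1})(x_i) = \ell_i+1 \geq 1$ by hypothesis, and it suffices to treat a single step. The step is: given a feasible $\beta$ and a letter $x$ with $(\gtest/\widetilde{\beta})(x)=d\ge1$, find letters $y_1,\ldots,y_{d-1}$ with $\beta y_1\ldots y_{d-1}x\in\glang$. Then $\gtest$ of the new support equals $|\beta|+d = \gtest(\widetilde{\beta}+x)$, and the closure invariant is preserved, as checked below.

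For the single step we argue by induction on $d$. If $d=1$, then $\beta x\in\glang$ because $\gtest$ is a representation. If $d>1$, then $x$ is not a continuation of $\beta$. We want a continuation $y$ of $\beta$ with $y\in\cl(\widetilde{\beta}+x)$. With that choice $\gtest(\widetilde{\beta}+y+x)=\gtest(\widetilde{\beta}+x)$, so $(\gtest/\widetilde{\beta}+y)(x)=d-1\ge1$, and we recurse on $\beta y$.

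Existence of $y$ is the main obstacle. The plan is to use the lattice $(\kappa(\glang/\mathord{\sim}),\subseteq)$, in which $\gtest$ is the height function and $\cl$-images are exactly the flat supports. The flat supports $\cl(\widetilde{\beta})=\kappa[\beta]$ and $\cl(\widetilde{\beta}+x)$ differ in height by $d\ge 2$ and are comparable. In a finite graded lattice there is a maximal chain between them, so there is a flat support $F$ covering $\kappa[\beta]$ inside $\cl(\widetilde{\beta}+x)$. We take $F=\kappa[\gamma]$ with $\gamma\in\glang$. By the interval property (or directly from the definition of $\sqsubseteq$ and the fact that covering means the height increases by one), the flat $[\gamma]$ is reached from $[\beta]$ by appending one letter $y$, so $\beta y\in\glang$ and $\kappa[\beta y]=F\subseteq\cl(\widetilde{\beta}+x)$. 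This gives the required $y$, and $y\ne x$ because $x$ is not a continuation. Here we use that $\kappa$ is an order isomorphism for interval greedoids, that $[\beta]\sqsubseteq[\gamma]$ is witnessed by an extension $\beta\delta\sim\gamma$, and that the heights match, forcing $|\delta|=1$.

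Finally, we check the invariant. After the step, the support $\widetilde{\beta}'$ contains $\widetilde{\beta}+x$ and lies in $\cl(\widetilde{\beta}+x)$, since each added $y$ was chosen in the current closure, which is only shrinking toward that set. Therefore $\cl(\widetilde{\beta}') = \cl(\widetilde{\beta}+x) = \cl(\{x_1,\ldots,x_i\})$, using $\cl(\widetilde{\beta}) = \cl(\{x_1,\ldots,x_{i-1}\})$ and monotonicity and idempotence of $\cl$. Concatenating the blocks yields \eqref{eq:marriage} with $\ell_i = d_i-1$ inserted letters before each $x_i$.
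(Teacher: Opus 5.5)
Your proposal is correct and follows essentially the paper's argument. For each $x_i$ you build a covering chain of flat supports from $\cl(\{x_1,\ldots,x_{i-1}\})$ to $\cl(\{x_1,\ldots,x_i\})$, extract a marginal-one letter at each cover, and conclude feasibility because $\gtest$ grades the $\cl$-image and is a representation. The only differences are cosmetic: you obtain each fill-in letter directly from the definition of $\sqsubseteq$ on flats rather than taking any letter in the cover minus the current closure, and you state explicitly the closure invariant that lets marginals be taken with respect to $\{x_1,\ldots,x_{i-1}\}$, which the paper leaves implicit.
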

\begin{proof}
    By $(\gtest/\{x_1,\ldots,x_{i-1}\})(x_i) > 0$ for all $i \in \{1,\ldots,k\}$, the existence of the strictly ascending,
    \begin{equation}\label{eq:strict-chain}
        \varnothing \subsetneq \cl(\{x_1\}) \subsetneq \cl(\{x_1,x_2\}) \subsetneq \ldots \subsetneq \cl(\{x_1,\ldots,x_k\}),
    \end{equation}
    of flat supports in $(\kappa(\glang/\mathord{\sim}),\subseteq)$ follows.
    Then, to obtain a feasible word like Eq.~\ref{eq:marriage}, select letters $\{y_{i,1},\ldots,y_{i,\ell_i}\}$ for each $x_i$ inducing a covering chain beginning at $\cl(\{x_1,\ldots,x_{i-1}\})$ and ending at $\cl(\{x_1,\ldots,x_i\})$ of the form,
    \begin{multline}\label{eq:covering}
        \cl(\{x_1,\ldots,x_{i-1}\}) \prec \cl(\{x_1,\ldots,x_{i-1},y_{i,1}\}) 
        \prec\ldots \prec \cl(\{x_1,\ldots,x_{i-1},y_{i,1},\ldots,y_{i,\ell_i}\}),\\ \prec \cl(\{x_1,\ldots,x_{i-1},y_{i,1},\ldots,y_{i,\ell_i},x_i\}) = \cl(\{x_1,\ldots,x_i\}).
    \end{multline}
    This is always possible because $\cl$ is a closure operator.
    In particular, since $\sigma_{\gtest} = \cl$, having already selected $y_{i,1},\ldots,y_{i,j-1}$, one takes an arbitrary $y_{i,j}$ satisfying,
    \begin{equation}\label{eq:sel1}
        (\gtest/\{x_1,\ldots,x_{i-1},y_{i,1},\ldots,y_{i,j-1}\})(y_{i,j}) = 1,
    \end{equation}
    and,
    \begin{equation}\label{eq:sel2}
        \cl(\{x_1,\ldots,x_{i-1},y_{i,1},\ldots,y_{i,j-1}\} + y_{i,j}) \subseteq \cl(\{x_1,\ldots,x_i\}),
    \end{equation}
    whenever $(\gtest/\{x_1,\ldots,x_{i-1},y_{i,1},\ldots,y_{i,j-1}\})(x_i) \neq 1$, and terminates otherwise.
    A letter $y_{i,j}$ satisfying Eq.~\ref{eq:sel1} and~\ref{eq:sel2} must exist when $(\gtest/\{x_1,\ldots,x_{i-1},y_{i,1},\ldots,y_{i,j-1}\})(x_i) \neq 1$, since there is a flat support lesser than $\cl(\{x_1,\ldots,x_i\})$ covering $\cl(\{x_1,\ldots,x_{i-1},y_{i,1},\ldots,y_{i,j-1}\})$ in $(\kappa(\glang/\mathord{\sim}),\subseteq)$, and any letter in the difference of this cover and $\cl(\{x_1,\ldots,x_{i-1},y_{i,1},\ldots,y_{i,j-1}\})$ will satisfy Eq.~\ref{eq:sel1} and~\ref{eq:sel2}. 
    Now, recall that $\gtest(X)$ is defined as the height of the least flat support containing $X$ (recall flat supports are closed under intersection, so a least such flat is unique).
    So, $\gtest$ grades the $\cl$-image.
    Then, the covering chain structure from Eq.~\ref{eq:covering} and $\sigma_{\gtest} = \cl$ makes,
    \begin{equation}\label{eq:slsls}
        (\gtest/\{x_1,\ldots,x_{i-1},y_{i,1},\ldots,y_{i,j-1}\})(y_{i,j}) = 1,\quad\forall j \in \{1,\ldots,\ell_i\}.
    \end{equation}
    This means, for example, that $y_{1,1}\ldots y_{1,\ell_1} x_1 \in\glang$ by the definition of representation in polymatroid greedoids.
    Then, since selecting according to Eq.~\ref{eq:covering} makes $\cl(\{y_{1,1},\ldots,y_{1,\ell_1},x_1\}) = \cl(\{x_1\})$, applying the same argument using Eq.~\ref{eq:slsls} for $i=2$ shows $y_{1,1}\ldots y_{1,\ell_1} x_1y_{2,1}\ldots y_{2,\ell_2} x_2 \in\glang$.
    One repeats this procedure up to $i = k$ to get the result.
\end{proof}

\section{Scaffolding Solutions via Polymatroid Relaxations}\label{sec:scaffold}

Observe, every polymatroid greedoid is such that the characteristic vectors of its basic words are base vectors of any of its polymatroid representations.
To see why, note that the variant of Edmonds' algorithm we gave in Algorithm~\ref{alg:edmonds} is optimal for linear optimization over the base polytope, and so always returns an extreme point.
Let $y_1\ldots y_R \in\mathcal{L}(\mathcal{M},\leadsto)$ be any basic word of an MPS with greatest representation $\gtest$.
Then, a weight function $w:\univ\to\mathbb{R}$ such that $w(y_i) = R-(i-1)$ for all $i \in \{1,\ldots,R\}$ and $w(z) = 0$ for $z \notin\{y_1,\ldots y_R\}$ will make Edmonds' algorithm examine coordinates in the order dictated by the basic word.
Because $\gtest$ is a representation, we see that $(\gtest/\{y_1,\ldots,y_{i-1}\})(y_i) = 1$ for all $i$, and Lemma~\ref{lem:projection} certifies that this ends at $y_R$, as the existence of $z \in \univ$ with $(\gtest/\{y_1,\ldots,y_R\})(z) >0$ would imply $y_1\ldots y_R$ is not basic.
Therefore, Edmonds' algorithm will increase each coordinate from $\{y_1,\ldots,y_R\}$ to exactly one and terminate on the characteristic vector of the support of the basic word.

In what follows we will examine nonnegative additive maximization and monotone submodular maximization algorithms based on using the greatest representation as a relaxation.
Specifically, Section~\ref{subsec:scaffold-am} will first describe a method for additive maximization in Algorithm~\ref{alg:greedy-scaffold} which is based on computing a relaxed solution via Edmonds' algorithm and projecting onto a basic word using Lemma~\ref{lem:projection}.
In Section~\ref{subsec:scaffold-sub} we then approach submodular maximization by using the continuous greedy algorithm \cite{calinescu2011maximizing} to compute a set of weights forming an additive underestimation of the objective, which we then give to our Algorithm~\ref{alg:greedy-scaffold} to compute a solution on the MPS.
The resulting approximation guarantees will vary in largest matroid rank of a principal ideal $\Delta$, which is intuitively the distance of the MPS from one encoding a matroid as discussed in Section~\ref{sec:intro}.

\subsection{Additive Maximization}\label{subsec:scaffold-am}

Let $w:\univ\to\mathbb{R}_+$ be a nonnegative weight function.
Our task is to find a feasible word $\alpha$ of a MPS $\mathcal{L}(\mathcal{M},\leadsto)$ maximizing $\alpha \mapsto \sum_{y\in\widetilde{\alpha}}w(y)$.
Let $v^\star\in\mathbb{R}^\univ$ be base vector of $\gtest$ computed by Edmonds' algorithm, and sort the nonzero coordinates $\{x_1,x_2,\ldots\} =\supp(v^*)$ so that $w(x_1) \geq w(x_2) \geq \ldots$.
This is, we sort the support of $v^\star$ in order of the coordinate increases made by Algorithm~\ref{alg:edmonds}.
Since $x_i \in \supp(v^\star)$, it follows that,
\[
     (\gtest/\{x_1,\ldots,x_{i-1}\})(x_i) > 0,\quad\forall i \in\{1,\ldots,|\supp(v^\star)|\},
\]
which satisfies the premise of Lemma~\ref{lem:projection}.
As a result, we can perform a fill-in procedure like in the proof of Lemma~\ref{lem:projection} to obtain a feasible word of the MPS containing $\supp(v^\star)$ in its support.

\begin{algorithm}[t]
    \caption{Scaffolded Greedy with Arbitrary Fill-In}\label{alg:greedy-scaffold}
    \DontPrintSemicolon
    
    $v^\star \gets$ solution from Edmonds' algorithm on $\gtest$ with weights $w$\;\label{al:ed}
    select $x_1\ldots x_k \in \operatorname{supp}(v^\star)!$ with $w(x_1) \ge w(x_{2}) \ge \ldots \ge w(x_{k})$\;\label{al:sort}
    \;
    $\alpha\gets\epsilon$\;
    \;\label{al:fill-start}
    \For{$i\in\{1,\ldots,k\}$}
    {
        \While{$(\gtest/\widetilde{\alpha})(x_i) > 1$}
        {
            \mbox{select $y\in\univ$ with \!$\left(\gtest/\widetilde{\alpha}\right)\!(y) = 1$ and $\widetilde{\alpha} + y \subseteq \cl(\{x_{1},\ldots,x_{i}\})$}\;

            $\alpha \gets \alpha y$\;
        }
        \;
        $\alpha\gets \alpha x_i$\;\label{al:fill-end}
    }
    \;
    
    \Return{$\alpha$}\;

\end{algorithm}

This idea is shown in Algorithm~\ref{alg:greedy-scaffold}.
We first compute a ``scaffold'' $v^\star$ for the solution using Edmonds' greedy algorithm on $\gtest$ in line~\ref{al:ed}, whose support we then sort in line~\ref{al:sort}.
Then, we use the same arbitrary fill-in scheme presented in the proof of Lemma~\ref{lem:projection}, in particular Eq.s~\ref{eq:sel1} and~\ref{eq:sel2}, to obtain a feasible word $\alpha$ such that $\supp(v^\star) \subseteq \widetilde{\alpha}$ in lines~\ref{al:fill-start} through~\ref{al:fill-end}.
Note that the while loop cannot terminate with
\((\gtest/\widetilde{\alpha})(x_i)=0\): if a selected fill-in
letter \(y\) with \((\gtest/\widetilde{\alpha})(y)=1\) made
\(x_i\in \cl(\widetilde{\alpha}+y)\), then we would have to have 
\((\gtest/\widetilde{\alpha})(x_i)\le 1\) which contradicts the loop condition.

Recall, in Section~\ref{sec:intro} we defined $\Delta$ as the maximum matroid rank of a principal ideal of $(\univ,\leadsto)$.
By the law of diminishing returns, it follows that each coordinate of $v^\star$ is at most $\Delta$.
When comparing $v^\star$ and $\chi_{\widetilde{\alpha}}$, it follows that any entry in $v^\star$ is at most a $\Delta$-factor smaller in $\chi_{\widetilde{\alpha}}$.
This idea shows Algorithm~\ref{alg:greedy-scaffold} computes a $\Delta$-approximation.

\begin{theorem}\label{thm:scaffold-approx-add}
    Let $\mathcal{L}(\mathcal{M},\leadsto)$ be a matroidal prerequisite system, $w:\univ\to\mathbb{R}_+$ a nonnegative weight function,
    and $\mathrm{OPT} =\max_{\alpha \in \mathcal{L}(\mathcal{M},\leadsto)} \sum_{y \in \widetilde{\alpha}} w(y)$.
    Algorithm~\ref{alg:greedy-scaffold} outputs $\alpha\in\mathcal{L}(\mathcal{M},\leadsto)$ with,
    \[
        \sum_{y \in \widetilde{\alpha}}w(y) \geq \frac{1}{\Delta}\cdot\mathrm{OPT}.
    \]
\end{theorem}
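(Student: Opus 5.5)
The plan splits into three steps: feasibility of the output, an upper bound on $\mathrm{OPT}$ by the relaxation, and a per-coordinate comparison using $\Delta$.

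\textbf{Feasibility.} The coordinates $x_1,\ldots,x_k$ of $\supp(v^\star)$ are taken in the order Edmonds' algorithm processes them. By the law of diminishing returns each satisfies $(\gtest/\{x_1,\ldots,x_{i-1}\})(x_i)\ge v^\star(x_i)>0$, so the hypothesis of Lemma~\ref{lem:projection} holds. Lines~\ref{al:fill-start}--\ref{al:fill-end} execute exactly the fill-in of that proof, namely the choices in Eq.~\ref{eq:sel1} and Eq.~\ref{eq:sel2}. Hence the output $\alpha$ lies in $\mathcal{L}(\mathcal{M},\leadsto)$, which is a strong polymatroid greedoid by Theorem~\ref{thm:spg-iff-mps}, and $\supp(v^\star)\subseteq\widetilde{\alpha}$.

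\textbf{Upper bound on $\mathrm{OPT}$.} Let $\beta$ be an optimal word, which we may take basic since $w\ge 0$. As noted at the start of Section~\ref{sec:scaffold}, $\chi_{\widetilde{\beta}}$ is a base vector of $\gtest$. Edmonds' algorithm is optimal for linear maximization over the base polytope, so
\[
\mathrm{OPT}=\langle w,\chi_{\widetilde{\beta}}\rangle\le \langle w,v^\star\rangle=\sum_{y\in\supp(v^\star)} w(y)\,v^\star(y).
\]

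\textbf{Comparison.} Each coordinate obeys $v^\star(y)=(\gtest/\{\ldots\})(y)\le \gtest(\{y\})$ by diminishing returns, and Proposition~\ref{prop:rideal} gives $\gtest(\{y\})=r_\mathcal{M}(\ideal y)\le\Delta$. Combining this with $w\ge 0$ and $\supp(v^\star)\subseteq\widetilde{\alpha}$,
\[
\mathrm{OPT}\le \Delta\sum_{y\in\supp(v^\star)}w(y)\le \Delta\sum_{y\in\widetilde{\alpha}}w(y).
\]

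The step needing the most care is the feasibility argument, specifically that the algorithm's while loop realizes the covering chain of Eq.~\ref{eq:covering}. Two things must be checked: that the loop never leaves $(\gtest/\widetilde{\alpha})(x_i)=0$, which the paragraph after Algorithm~\ref{alg:greedy-scaffold} already handles, and that $\cl(\widetilde{\alpha})$ equals $\cl(\{x_1,\ldots,x_{i-1}\})$ at the start of each iteration. Both follow from the invariants in the proof of Lemma~\ref{lem:projection}, so I would simply cite it.
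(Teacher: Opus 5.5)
Your proposal is correct and follows essentially the same route as the paper. Feasibility comes from the fill-in of Lemma~\ref{lem:projection}; $\mathrm{OPT}\le w^\intercal v^\star$ comes from the optimality of Edmonds' algorithm over the base polytope, since basic words are base vectors; and the coordinatewise bound $v^\star(y)\le\gtest(\{y\})\le\Delta$ finishes the argument. Your explicit remarks that the optimum may be taken basic because $w\ge 0$, and that $\gtest(\{y\})=r_\mathcal{M}(\ideal y)\le\Delta$ via Proposition~\ref{prop:rideal}, fill in small steps the paper leaves implicit.
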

\begin{proof}
    Firstly, one sees that the fill-in (lines \ref{al:fill-start}-\ref{al:fill-end}) is done in the same way as the proof of Lemma~\ref{lem:projection}, so it is clear from our discussion at the start of the section that the resulting $\alpha$ is feasible.
    Moreover, it is basic as $\gtest(\supp(v^\star)) = \gtest(\univ)$ and $\cl(\widetilde{\alpha}) = \cl(\supp(v^\star))$ makes $\gtest(\widetilde{\alpha}) = \gtest(\univ)$.
    For each $x_{i}\in\supp(v^\star)$, by definition of Edmonds' algorithm (see Algorithm~\ref{alg:edmonds}),
    \[
        v^\star(x_{i}) = (\gtest/\{x_{1},\ldots,x_{i-1}\})(x_{i}).
    \]
    Therefore, for any choice of $y \in \supp(v^\star)$, we have $v^\star(y) \leq \gtest(y)$ by the law of diminishing returns.
    Of course $\gtest(y) \leq \Delta$, and so,
    \begin{equation}\label{eq:sc-relax-bd}
        \chi_{\widetilde{\alpha}}(y) \geq \frac{1}{\Delta}\cdot v^\star(y),\quad\forall y \in \supp(v^\star).
    \end{equation}
    Then, let $\alpha_\mathrm{OPT}$ be an optimal word.
    Recall by our discussion at the beginning of Section~\ref{sec:scaffold} that $\chi_{\widetilde{\alpha}_\mathrm{OPT}}$ is an extreme point of the base polytope, and so as Edmonds' algorithm is optimal over this constraint set we see,
    \begin{equation}\label{eq:sc-opt-bd}
        \mathrm{OPT} = w^\intercal \chi_{\widetilde{\alpha}_\mathrm{OPT}} \leq w^\intercal v^\star.
    \end{equation}
    Combining Eq.s~\ref{eq:sc-relax-bd} and~\ref{eq:sc-opt-bd} together shows the desired $w^\intercal \chi_{\widetilde{\alpha}} \geq \Delta^{-1}\cdot w^\intercal v^\star \geq \Delta^{-1}\cdot\mathrm{OPT}$.
\end{proof}

\subsection{Monotone Submodular Maximization}\label{subsec:scaffold-sub}

For what comes, we require the following: Every integral polymatroid can be associated with a matroid via the \emph{natural matroid} construction \cite{helgason1974aspects}.
Specifically, define the ground set as,
$$E_{\gtest} \triangleq \bigcup_{y \in \univ}\{(y, 1),\ldots,(y,\gtest(y))\}.$$
This just introduces $\gtest(y)$ copies of each letter $y\in\univ$.
This is equivalently seen as a multiset, so
let the multiplicity vector of $S \subseteq E_{\gtest}$ be given by,
\[
    \mu_S(y) \triangleq |S\cap(\{y\}\times \{1,\ldots,\gtest(y)\})|.
\]
Then, a matroid $M(\gtest)$ is defined by letting $S$ be independent if and only if $\mu_S^\intercal\chi_X \leq \gtest(X)$ for all $X\subseteq\univ$.
The resulting rank function is,
\[
    r_{M(\gtest)}(S) \triangleq \min_{X\subseteq \univ} \gtest(X) + \sum_{y \in \univ\setminus X} \mu_S(y),
\]
which we note can be evaluated efficiently by submodular minimization algorithms \cite{lovasz1983submodular,schrijver2000combinatorial,iwata2001combinatorial,orlin2009faster} over $X\mapsto \gtest(X) + \sum_{y \in \univ\setminus X} \mu_S(y)$.
This is a standard construction; see \cite{helgason1974aspects,bonin2023natural} or Ch. 44 of \cite{schrijver2003combinatorial} for more discussion.

\begin{algorithm}[t]
    \caption{Scaffolding with Weights Computed from Continuous Greedy}\label{alg:scaffolded-submod}
    \DontPrintSemicolon
    $S \gets$ continuous greedy with rounding on objective $T\mapsto \sum_{k=1}^\Delta f(T(k))$ over $\left(E_{\gtest},M(\gtest)\right)$\;\label{al:cg}
    \;

    \ForAll{$k \in \{1,\ldots,\Delta\}$}
    {
        select any $y_1 y_2\ldots \in S(k)!$\;\label{al:lay-start}
        \;
        \ForAll{$i \in \{1,\ldots |S(k)|\}$}
        {
            $w^{(k)}(y_i) \gets (f/\{y_{1},\ldots,y_{i-1}\})(y_i)$\;
        }
        \ForAll{$z \notin S(k)$}
        {
            $w^{(k)}(z) \gets 0$\;\label{al:lay-stop}
        }
    }
    \; 
    $\overbar{w} \gets {\Delta}^{-1}\cdot\sum_{k=1}^\Delta w^{(k)}$\;\label{al:avg}
    $\alpha \gets$ scaffolded greedy on $\mathcal{L}(\mathcal{M},\leadsto)$ with weights $\overbar{w}$\tcp*{Using Algorithm~\ref{alg:greedy-scaffold}}\label{al:scaf}
    \; 
    \Return{$\alpha$}\;
\end{algorithm}

Now, let $f:2^\univ\to\mathbb{R}$ be a monotone submodular function.
Our next algorithm is based off of solving a monotone submodular maximization, using a derived objective function $g:2^{E_{\gtest}}\to\mathbb{R}$, over the natural matroid using the continuous greedy algorithm \cite{calinescu2011maximizing}.
The corresponding solution then defines an additive function underestimating the objective $f$, which we feed into Algorithm~\ref{alg:greedy-scaffold} to obtain a word from the MPS.
To define $g$, for $S\subseteq E_{\gtest}$ let its $k^\text{th}$ slice $S(k)$ be,
\[
    S(k) \triangleq \{y\in\univ \mid (y,k) \in S\}.
\]
Notice, as $\Delta$ is the maximum rank of a principal ideal and $\gtest = r_\mathcal{M}\circ\ideal$, it follows there are at most $\Delta$ slices to evaluate for any $S \subseteq E_{\gtest}$.
Then, define,
$$g:2^{E_{\gtest}}\to\mathbb{R}:S\mapsto \sum_{k=1}^\Delta f(S(k)).$$
Observe, $g$ is a sum of monotone submodular functions on disjoint domains (since the $k^\text{th}$ term in the sum varies only with sets in the intersection $E_{\gtest}\cap(\univ\times\{k\})$), and so is monotone submodular.

Algorithm~\ref{alg:scaffolded-submod} begins by obtaining an approximate solution $S$ to maximizing $g$ over the natural matroid $(E_{\gtest},M(\gtest))$ in line~\ref{al:cg}.
Then, in lines~\ref{al:lay-start} through~\ref{al:lay-stop} we compute a set of weights for each slice of $S$.
Specifically, an arbitrary permutation $y_1y_2\ldots$ is chosen from $S(k)!$ to define,
\begin{equation*}
    w^{(k)}(y_i) = (f/\{y_1,\ldots,y_{i-1}\})(y_i),\quad\forall i \in \{1,\ldots,|S(k)|\},
\end{equation*}
and $w(z) = 0$ for $z \notin S(k)$.
These weight functions are averaged together in line~\ref{al:avg} to obtain the weight function $\overbar{w}$.
This is, in some sense, an additive estimate of $f$ informed by a solution obtained from a proxy function.
We then obtain a word of the MPS in line~\ref{al:scaf} by running Algorithm~\ref{alg:greedy-scaffold} with the weights defined by $\overbar{w}$, and output that word.
The approximation guarantees are as follows.

\begin{theorem}\label{thm:scaffold-approx-sub}
    Let $\mathcal{L}(\mathcal{M},\leadsto)$ be a matroidal prerequisite system, $f:2^\univ\to\mathbb{R}$ a (normalized) monotone submodular function,
    and $\mathrm{OPT} = \max_{\alpha \in \mathcal{L}(\mathcal{M},\leadsto)}f(\widetilde{\alpha})$.
    Algorithm~\ref{alg:scaffolded-submod} outputs $\alpha\in\mathcal{L}(\mathcal{M},\leadsto)$ with,
    \[
        \mathbb{E}[f(\widetilde{\alpha})] \geq \frac{1 - 1/e - \delta}{\Delta^2}\cdot\mathrm{OPT},
    \]
    in randomized $\poly(|\univ|, 1/\delta)$-time for all $\delta > 0$.
\end{theorem}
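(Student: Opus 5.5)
The plan is to chain three inequalities: a lower bound on $g(S)$ from the continuous greedy guarantee, a lower bound on $\overbar{w}$ in terms of $g(S)$, and the $\Delta$-approximation of Theorem~\ref{thm:scaffold-approx-add} applied with weights $\overbar{w}$. Submodularity then converts the additive value of the output back into $f$.

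\textbf{Step 1 (the relaxation upper-bounds OPT).} Let $\alpha_{\mathrm{OPT}}$ be an optimal word with support $X^\star$. By the discussion at the start of Section~\ref{sec:scaffold}, $\chi_{X^\star}$ is a base vector of $\gtest$, so $\chi_{X^\star}^\intercal\chi_Y\le\gtest(Y)$ for all $Y$. Hence $S^\star\triangleq X^\star\times\{1\}$ is independent in $M(\gtest)$. Since $f$ is monotone and normalized, $g(S^\star)\ge f(X^\star)=\mathrm{OPT}$. The continuous greedy algorithm with rounding (pipage or swap rounding) \cite{calinescu2011maximizing} on the monotone submodular $g$ over $M(\gtest)$ then returns an independent $S$ with $\mathbb{E}[g(S)]\ge(1-1/e-\delta)\,\mathrm{OPT}$. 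It runs in $\poly(|E_{\gtest}|,1/\delta)$ time. This is polynomial because $|E_{\gtest}|\le\Delta|\univ|\le|\univ|^2$, and because rank queries reduce to submodular minimization with $\gtest=r_\mathcal{M}\circ\ideal$ by Proposition~\ref{prop:rideal}.

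\textbf{Step 2 (weights versus the relaxed value).} By telescoping, $\sum_y w^{(k)}(y)=f(S(k))$, so $\sum_y\overbar{w}(y)=g(S)/\Delta$, and $\overbar{w}\ge 0$ by monotonicity. I then need a feasible word of large $\overbar{w}$-weight, which I get from a fractional point in the base polytope. Each slice satisfies $\mu_{S(k)\times\{k\}}\le\mu_S$, so $\chi_{S(k)}$ lies in the independence polytope of $\gtest$. Edmonds' algorithm with nonnegative weights is optimal over the independence polytope, and by the opening of Section~\ref{sec:scaffold} its output $v^\star$ is a base vector. Therefore $\overbar{w}^\intercal v^\star\ge\max_k \overline{w}^\intercal\chi_{S(k)}$.

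This is the delicate step, because I actually want $\overbar{w}^\intercal v^\star\ge\Delta^{-1}g(S)$. The cleanest route is to note that $\mu_S$ itself lies in the independence polytope, so $\overbar{w}^\intercal v^\star\ge\overbar{w}^\intercal\mu_S$. Since $\mu_S\ge\chi_{S(k)}$ coordinatewise for every $k$ and $\overbar{w}\ge0$,
\[
\overbar{w}^\intercal\mu_S\ \ge\ \frac1\Delta\sum_k\overbar{w}^\intercal\chi_{S(k)}\ \ge\ \frac1\Delta\sum_k\frac1\Delta w^{(k)\intercal}\chi_{S(k)}\ =\ \frac{g(S)}{\Delta^2}.
\]
This is weaker than hoped. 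The Theorem~\ref{thm:scaffold-approx-add} argument still loses a further $\Delta$, which would give $\Delta^3$ overall. I would therefore aim for the sharper bound $\overbar{w}^\intercal\mu_S\ge g(S)/\Delta$. This holds because $w^{(k)}$ is supported on $S(k)$, where $\mu_S\ge1$, so $\overbar{w}^\intercal\mu_S\ge\sum_y\overbar{w}(y)=g(S)/\Delta$. Then the proof of Theorem~\ref{thm:scaffold-approx-add} (Eq.~\ref{eq:sc-relax-bd}) gives $\overbar{w}(\widetilde{\alpha})\ge\Delta^{-1}\overbar{w}^\intercal v^\star\ge g(S)/\Delta^2$.

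\textbf{Step 3 (back to $f$).} For each $k$, the diminishing-returns property gives $w^{(k)}(y_i)=(f/\{y_1,\dots,y_{i-1}\})(y_i)\ge(f/Z)(y_i)$ for any $Z\supseteq\{y_1,\ldots,y_{i-1}\}$. Ordering $\widetilde{\alpha}\cap S(k)$ consistently with the chosen permutation yields $f(\widetilde\alpha)\ge f(\widetilde\alpha\cap S(k))\ge w^{(k)}(\widetilde{\alpha})$. This is the step I expect to need care: the marginals are taken against prefixes of the permutation of $S(k)$ rather than against prefixes within $\widetilde{\alpha}$. It is resolved by noting that the prefix within $\widetilde\alpha\cap S(k)$ is a subset of the permutation prefix, so diminishing returns goes the right way. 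Averaging over $k$ gives $f(\widetilde\alpha)\ge\overbar{w}(\widetilde\alpha)\ge g(S)/\Delta^2$. Taking expectations and applying Step 1 yields the claimed bound.

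Feasibility of $\alpha$ and polynomial running time follow from Theorem~\ref{thm:scaffold-approx-add} and Proposition~\ref{prop:rideal}.
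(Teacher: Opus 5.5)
Your proposal is correct and is essentially the paper's argument. You use $\overbar{w}^\intercal\mu_S \ge \sum_y \overbar{w}(y) = g(S)/\Delta$ because $\overbar{w}$ is supported where $\mu_S \ge 1$, then $\overbar{w}^\intercal v^\star \ge \overbar{w}^\intercal \mu_S$ by optimality of Edmonds' algorithm, one more factor $\Delta$ from $v^\star \le \Delta$ coordinatewise, $f(\widetilde{\alpha}) \ge \overbar{w}(\widetilde{\alpha})$ by diminishing returns, and $X^\star\times\{1\}$ witnessing $\max g \ge \mathrm{OPT}$.

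Your phrasing via the independence polytope, rather than the paper's ``base polytope'' for $\mu_S$, is the more accurate justification. The $\Delta^3$ detour in Step~2 can be deleted.
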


The resulting guarantee is that of continuous greedy with pipage rounding \cite{calinescu2011maximizing} degraded by a $\Delta^2$-factor due to the maneuvers involved in translating its solution to a feasible word.
To assist the reader, we state the guarantees from \cite{calinescu2011maximizing}.

\begin{theorem}[\cite{calinescu2011maximizing}]
    Let $(E,\mathcal{M})$ be a matroid, $f:2^\univ\to\mathbb{R}$ a (normalized) monotone submodular function,
    and $\mathrm{OPT} = \max_{X\in\mathcal{M}}f(X)$.
    There exists an algorithm computing an independent set $X\in\mathcal{M}$ satisfying,
    \[
        \mathbb{E}[f(X)] \geq (1 - 1/e - \delta)\cdot\mathrm{OPT},
    \]
    in randomized $\poly(|\univ|, 1/\delta)$-time for all $\delta > 0$.
\end{theorem}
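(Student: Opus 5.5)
The plan follows the continuous greedy framework of \cite{calinescu2011maximizing}. It has three steps: pass to the multilinear extension, run a continuous ascent inside the matroid polytope, then round back to an independent set without losing value in expectation. Let $F:[0,1]^E\to\mathbb{R}$ be the multilinear extension
\[
F(x) = \mathbb{E}[f(R_x)],
\]
where $R_x$ contains each $e$ independently with probability $x(e)$. Let $P(\mathcal{M})$ be the matroid polytope. Two properties of $F$ are needed:
\begin{itemize}
\item monotonicity and submodularity of $f$ give $\partial F/\partial x(e)\ge 0$ and concavity of $F$ along nonnegative directions;
\item for every $x$ and every $X^\star\in\mathcal{M}$, $\langle \nabla F(x),\chi_{X^\star}\rangle \ge f(X^\star) - F(x)$.
\end{itemize}
The second follows by writing $f(X^\star)\le f(R_x\cup X^\star)\le f(R_x)+\sum_{e\in X^\star}(f/R_x)(e)$ and taking expectations, since $\mathbb{E}[(f/R_x)(e)]\le \partial F/\partial x(e)$.

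\emph{Continuous phase.} I will discretize time into steps of size $\eta = 1/\poly(|E|,1/\delta)$, starting from $x(0)=0$. At each step I estimate the marginals $\omega(e)\approx \mathbb{E}[(f/R_{x(t)})(e)]$ by random sampling. I then compute $I(t)\in\arg\max_{I\in\mathcal{M}}\omega^\intercal\chi_I$ by the matroid greedy algorithm and update $x(t+\eta)=x(t)+\eta\,\chi_{I(t)}$. After $1/\eta$ steps, $x(1)$ is a convex combination of independent sets, so $x(1)\in P(\mathcal{M})$.

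The inequality above gives, up to estimation error, the differential inequality
\[
\tfrac{d}{dt}F(x(t))\ge \mathrm{OPT}-F(x(t)),
\]
which integrates to $F(x(1))\ge(1-1/e)\mathrm{OPT}$. This is the step I expect to be the main obstacle. The discretization error and the sampling error (controlled by Chernoff bounds with $\poly(|E|,1/\delta)$ samples per step, plus a union bound over all steps and coordinates) must together cost only an additive $\delta\cdot\mathrm{OPT}$. The difficulty is that errors are naturally additive in $\max_e f(e)$, so I need $\max_e f(e)\le\mathrm{OPT}$. This holds because the matroid is loopless after preprocessing, so each singleton is independent. It remains to bound the second-order terms, which are $O(\eta^2|E|^2\max_e f(e))$ per step.

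\emph{Rounding phase.} I will apply pipage rounding (or swap rounding) to $x(1)$. Repeatedly take a tight set in the matroid polytope and move along a direction $\chi_{e}-\chi_{e'}$ that stays inside the minimal tight face. Along such a direction $F$ is convex, because the mixed second partial derivatives are nonpositive by submodularity. Hence randomly moving to one of the two endpoints, with probabilities chosen to preserve the mean, does not decrease $\mathbb{E}[F]$. Each move either makes a new set tight or makes a coordinate integral, so after polynomially many moves the point is $\chi_X$ for some $X\in\mathcal{M}$, with $\mathbb{E}[f(X)]\ge F(x(1))\ge(1-1/e-\delta)\mathrm{OPT}$. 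Each step requires only submodular minimization over the matroid rank function, so the total running time is $\poly(|E|,1/\delta)$.
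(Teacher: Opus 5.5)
Your proposal is correct and is essentially the argument of \cite{calinescu2011maximizing}: continuous greedy on the multilinear extension, followed by randomized pipage rounding. The paper does not reprove this result; it imports it as a black box. The one detail worth tightening is that pipage rounding as you describe it needs, at every step, a tight set containing two fractional coordinates, and this is guaranteed when $x(1)$ lies in the base polytope. Since your estimated weights $\omega$ are nonnegative, you can ensure this by taking each $I(t)$ to be a basis; alternatively, raise any fractional coordinate that lies in no tight set, which cannot decrease $F$ by monotonicity.
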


\begin{proof}[Proof of Theorem~\ref{thm:scaffold-approx-sub}]
Let $S$ be the independent set of $(E_{\gtest},M(\gtest))$ approximately maximizing $g$, as computed in line~\ref{al:cg}.
    Then, $\overbar{w}^\intercal \mu_S = \Delta^{-1}\cdot\sum_{k=1}^\Delta\sum_{y\in\univ}\mu_S(y)w^{(k)}(y)$.
    But, because lines 6-9 make $\sum_{y \in S(k)}w^{(k)}(y) = f(S(k))$, we find,
    \[
        \frac{1}{\Delta}\cdot\sum_{k=1}^\Delta\sum_{y\in\univ}\mu_S(y)w^{(k)}(y) \geq \frac{1}{\Delta}\cdot\sum_{k=1}^\Delta\sum_{y\in S(k)}w^{(k)}(y) \geq \frac{1}{\Delta}\cdot\sum_{k=1}^\Delta f(S(k)) = \frac{g(S)}{\Delta},
    \]
    hence $\overbar{w}^\intercal \mu_S \geq \Delta^{-1}\cdot g(S)$.

    Now, let $v^\star$ be the solution computed by Edmonds' algorithm over the base polytope of $\gtest$ when given weights $\overbar{w}$, and examine the final word $\alpha$ computed in line~\ref{al:scaf}.
    Because no coordinate in $v^\star$ can exceed value $\Delta$, the law of diminishing returns shows,
    \begin{equation}\label{eq:a}
        \sum_{y \in \widetilde{\alpha}} \overbar{w}(y) \geq \frac{\overbar{w}^\intercal v^\star}{\Delta}.
    \end{equation}
    But, the natural matroid construction shows $\mu_S$ is in the base polytope, and so by optimality of $v^\star$ we see $\overbar{w}^\intercal\mu_S \leq \overbar{w}^\intercal v^\star$.
    So, combining this with the insight of the last paragraph,
    \[
        \frac{\overbar{w}^\intercal v^\star}{\Delta} \geq \frac{\overbar{w}^\intercal \mu_S}{\Delta} \geq \frac{g(S)}{\Delta^2}.
    \]
    Then, noting that $\sum_{y \in X}w^{(k)}(y) \leq f(X)$ for all $k \in\{1,\ldots,\Delta\}$ and $X\subseteq \univ$ by the law of diminishing returns, combining the last equation with Eq.~\ref{eq:a} shows,
    \begin{equation}\label{eq:cg-key}
        f(\widetilde{\alpha}) \geq \sum_{y \in \widetilde{\alpha}} \overbar{w}(y) \geq \frac{g(S)}{\Delta^2}.
    \end{equation}
    To finish things up, let $\alpha_\mathrm{OPT}$ be an optimal basic word, and 
    $T_{\alpha_\mathrm{OPT}} = \{(y,1)\mid y\in\widetilde{\alpha}_\mathrm{OPT}\}$.
    See that,
    \[
        g(T_{\alpha_\mathrm{OPT}}) = f(\widetilde{\alpha}_\mathrm{OPT}) + \sum_{k=2}^\Delta f(\varnothing) = f(\widetilde{\alpha}_\mathrm{OPT}) = \mathrm{OPT}.
    \]
    So, combining this with the guarantees of \cite{calinescu2011maximizing} and Eq.~\ref{eq:cg-key} gives,
    \[
        \mathbb{E}[f(\widetilde{\alpha})] \geq \frac{\mathbb{E}[g(S)]}{\Delta^2} \geq \frac{1 - 1/e - \delta}{\Delta^2}\cdot\mathrm{OPT},
    \]
    which is the desired result.
\end{proof}
\begin{remark}\label{rem:imp}
    We stated our guarantees with respect to the original analysis of continuous greedy with pipage rounding from \cite{calinescu2011maximizing} because it is the most recognizable available method to a broad audience.
    However, because we used this method as a black box it can be replaced to obtain certain improvements.
    For example, \cite{chekuri2010dependent,filmus2014monotone} can obtain high probability guarantees, \cite{sviridenko2017optimal} can be used to optimally incorporate the function curvature into the guarantee, while recent work \cite{buchbinder2025deterministic} presents deterministic methods.
\end{remark}
\section{Marginal Adjusted Greedy Algorithms}\label{sec:mag}

The last algorithms were based upon ``scaffolding'' the solution with good letters from a relaxed solution, and using the fill-in procedure from Lemma~\ref{lem:projection} to project this solution to a characteristic vector corresponding to a basic word.
However, the fill-in procedure was such that the letters were chosen only for feasibility, not with any heuristic guiding their value for a solution.
In the next algorithms, we instead select letters using a greedy heuristic driven by the greatest representation.
These algorithms will lead to approximation guarantees varying in the largest matroid connectivity $\conn$, which (recall from Section~\ref{sec:intro}) can be interpreted as a distance of the MPS from a poset antimatroid.
So, the methods presented here are complementary to those in Section~\ref{sec:scaffold}.

\subsection{Additive Maximization}\label{sec:mag-am}

\begin{algorithm}[t]
    \DontPrintSemicolon
    \caption{Marginal Adjusted Greedy for Additive Maximization}\label{alg:mag}

    $\alpha \gets \epsilon$\;

    \;
    \While{$|\alpha| < r_\mathcal{M}(\univ)$}{
        $\ell \gets |\alpha|$\;
        $x_1 \ldots x_\ell \gets \alpha$\;
        \;
        $K_0 \gets \varnothing$\;

        \lFor{$i \in \{1,\ldots,\ell\}$}{$K_i \gets \cl(\{x_1, \ldots, x_i\})$\;\label{al:ki}}

        \For{$i \in \{0,\ldots,\ell\}$}
        {
            \If{$i < \ell$}
            {
                $\Psi_i \gets \{ y \in \Sigma \mid K_i \subsetneq \cl(K_i + y) \subsetneq K_{i+1}\}$\;\label{al:dif}
            }
            \Else
            {
                $\Psi_i \gets \{ y \in \Sigma \mid K_i \subsetneq \cl(K_i + y) \subseteq \univ\}$\;\label{al:dif-boundary}
            }
        }
        \;

        $(y^\star, i^\star) \gets \operatorname*{arg\,max}\limits_{y \in \Psi_i:0 \le i \le \ell}\frac{w(y)}{(\mathord{\accentset{\vee}{\rho}}/K_i)(y)}$\;\label{al:ma}

        \;

        $\alpha \gets x_1 \cdots x_{i^\star}\, y^\star\, x_{i^\star+1} \cdots x_\ell$\;
    }
\;
    \Return{$\alpha$}

\end{algorithm}

Algorithm~\ref{alg:mag} works by iteratively selecting letters to insert into a working solution (initially the empty word) in a way that both achieves good value and converges to a feasible word.
In each round, the method first computes in line~\ref{al:ki} the closed set $K_i$ given by each prefix $x_1\ldots x_i$ of the current value of $\alpha$, which we assume to be of length $\ell$.
We also make $K_0 = \varnothing$.
Then, in line~\ref{al:dif}, for each $i \in \{0,1,\ldots,\ell\}$ we compute the set $\Psi_i$ consisting of letters which strictly increase the closed set spanned by $K_i$ to a closed set strictly less than $K_{i+1}$.
The set $\Psi_\ell$ poses a boundary condition which is handled in line~\ref{al:dif-boundary}.
We save such letters in the set $\Psi_i$ because they can be inserted into $\alpha$ between $x_i$ and $x_{i+1}$ (or appended onto the beginning or end of $\alpha$ when $i$ equals 0 or $\ell$, respectively) while maintaining the invariant that the process will converge to a feasible word by way of Lemma~\ref{lem:projection}.
In particular, one can view $\alpha$ as consisting of decisions we have already committed to (along with their relative ordering).
Under this perspective, we use the closure structure and construction of the sets $\{\Psi_i\}_i$ to identify new decisions we can feasibly make while still preserving the structure of the decisions we have already committed to.
This is shown in a technical way in Eq.~\ref{eq:inv} in the proof of Theorem~\ref{thm:mag-approx}.
Finally, observe each letter is a member of at most one $\Psi_i$, because $y \in \Psi_i$ implies $y\in K_{i+1}\setminus K_i$ and the sequence $\{K_i\}_i$ forms a filtration.

Keeping this in mind, in line~\ref{al:ma} we compute a marginal-adjusted weight for every letter $y \in \bigcup_{i = 0}^\ell \Psi_i$.
Specifically, when $y\in\Psi_i$, we divide $w(y)$ by the marginal return $(\gtest/K_i)(y)$.
This is because $(\gtest/K_i)(y)$ is exactly the difference in grade between $K_i$ and $\cl(K_i + y)$, and so a large marginal return corresponds to a large jump in the closed set structure.
If we commit to selecting $y$, then we commit to selecting a sequence of letters making a saturated chain between $K_i$ and $\cl(K_i + y)$.
This is, the larger the size of this jump, the more we constrain the set of letters we can feasibly select in future rounds to insert into the solution.
So, we divide the weight by $(\gtest/K_i)(y)$ to penalize letters which induce large jumps in the closure structure.
Then, in line~\ref{al:ma}, we select the letter $y^\star$ with the largest marginal adjusted weight, and save the index $i^\star$ with $y^\star \in \Psi_{i^\star}$.
We then insert $y^\star$ into $\alpha$ between the letters $x_{i^\star}$ and $x_{i^\star + 1}$ (or place onto the beginning or end of $\alpha$ when $i$ equals 0 or $\ell$ respectively), and repeat until $|\alpha|$ equals the length of a basic word in $\mathcal{L}(\mathcal{M},\leadsto)$.

For the analysis, let $\alpha_\mathrm{OPT}$ be an optimal word. 
We need to ``charge'' each letter in the optimal word to a decision made by the algorithm.
So, first let $y^\star_t$ be the letter selected in round $t$ and $K_t^\star$ be $K_{i^\star}$ from that round.
Also, let $R = \gtest(\univ)$.
By optimism, for all letters $z\in\univ$, there exists $t \in \{1,\ldots,R\}$ such that $z\in\cl(K_t^\star + y^\star_t) \setminus K_t^\star$.
Then, let $L_t$ be the letters from $\widetilde{\alpha}_\mathrm{OPT}$ for which $t$ is the last round this occurs.
Formally, this is given by the recursion over $t\in\{1,\ldots,R\}$,
\begin{equation}\label{eq:lt}
    L_t \triangleq \big(\cl(K_t^\star + y^\star_t) \setminus K_t^\star\big)\setminus \left(\bigcup_{t' = t+1}^{R} L_t\right),\quad\text{with }L_{R} = \cl(K_R^\star + y^\star_R) \setminus K_R^\star.
\end{equation}
And, by our prior observation we see $(L_t)_t$ is a partition of $\widetilde{\alpha}_\mathrm{OPT}$.
This sets up the sum,
\begin{equation}\label{eq:mag-imp1}
    \sum_{y \in\widetilde{\alpha}_\text{OPT}}w(y) = \sum_{t = 1}^{R} \left(\sum_{y \in L_t}w(y)\right).
\end{equation}
But, letting $\delta_t = (\gtest/K_t^\star)(y^\star_t)$, the algorithm's choice rule makes,
\[
     w(y) \leq \frac{w(y^\star_t)}{\delta_t}\cdot \left(\gtest/K_{i^\star}\right)(y),\quad\forall y\in L_t,
\]
while $L_t\subseteq \cl(K_t^\star + y^\star_t) \setminus K_t^\star$ makes $\left(\gtest/K_t^\star\right)(y) \leq \left(\gtest/K_t^\star\right)(y^\star_t) = \delta_t$ for all $y \in L_t$.
Therefore,
\begin{equation}\label{eq:mag-imp2}
    \sum_{t = 1}^{R} \left(\sum_{y \in L_t}w(y)\right) \leq \sum_{t = 1}^{R}\left( |L_t|\cdot \frac{w(y^\star_t)}{\delta_t}\cdot \left(\gtest/K_{i^\star}\right)(y) \right)\leq \sum_{t = 1}^{R} |L_t|\cdot w(y^\star_t).
\end{equation}
So, noting that $\sum_{t = 1}^{R} w(y^\star_t)$ is the value of the solution returned by Algorithm~\ref{alg:mag}, we need only obtain a uniform bound on $|L_t|$ for an approximation guarantee.
This is our approach.
Recall from Section~\ref{sec:intro} that $\conn$ is the maximum value of the matroid connectivity function of the matroid defining the MPS (see Eq.~\ref{eq:conn-def}).
A careful analysis will show $|L_t| \leq 1 + \conn$.

\begin{theorem}\label{thm:mag-approx}
    Let $\mathcal{L}(\mathcal{M},\leadsto)$ be a matroidal prerequisite system, $w:\univ\to\mathbb{R}_+$ a nonnegative weight function,
    and $\mathrm{OPT} =\max_{\alpha \in \mathcal{L}(\mathcal{M},\leadsto)} \sum_{y \in \widetilde{\alpha}} w(y)$.
    Algorithm~\ref{alg:mag} outputs $\alpha\in\mathcal{L}(\mathcal{M},\leadsto)$ with,
    \[
        \sum_{y \in \widetilde{\alpha}}w(y) \geq \frac{1}{1+\conn}\cdot\mathrm{OPT}.
    \]
\end{theorem}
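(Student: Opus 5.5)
The proof has two parts: feasibility of the output, and the bound $|L_t|\le 1+\conn$, which together with Eq.~\ref{eq:mag-imp1} and Eq.~\ref{eq:mag-imp2} gives the ratio.

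\textbf{Feasibility.} I would maintain the invariant
\begin{equation}\label{eq:inv}
    (\gtest/\{x_1,\ldots,x_{i-1}\})(x_i) > 0 \text{ for all } i, \text{ and } \cl(\{x_1,\ldots,x_i\}) = K_i,
\end{equation}
for the current word $\alpha = x_1\ldots x_\ell$. The second clause is just notation; the content is that each prefix strictly increases the closure.

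Inserting $y^\star\in\Psi_{i^\star}$ between $x_{i^\star}$ and $x_{i^\star+1}$ preserves the invariant. The new prefix closure $\cl(K_{i^\star}+y^\star)$ lies strictly between $K_{i^\star}$ and $K_{i^\star+1}$. Every later prefix closure is unchanged, because $y^\star\in K_{i^\star+1}$. Therefore $x_{i^\star+1}$ still raises the closure strictly. Consequently $\gtest(\widetilde\alpha)$ grows by at least one per round.

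The loop runs while $|\alpha|<R=r_\mathcal{M}(\univ)$. While $\gtest(\widetilde\alpha)<R$, the set $\Psi_\ell$ is nonempty, so the loop never stalls. One also has to argue that the length cannot reach $R$ before the closure reaches $\univ$. I would get this from the fact that each insertion raises $\gtest$ by at least one.

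At termination the word has length $R$, so by Lemma~\ref{lem:projection} every $\ell_i=0$. Then Lemma~\ref{lem:projection} says the word itself is feasible and basic.

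\textbf{Charging.} With feasibility in hand, Eq.~\ref{eq:mag-imp2} reduces the theorem to showing $|L_t|\le 1+\conn$ for every $t$. Recall that $L_t\subseteq \widetilde\alpha_{\mathrm{OPT}}$ is independent in $\mathcal{M}$ and lies in $\cl(K_t^\star+y_t^\star)\setminus K_t^\star$.

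By Proposition~\ref{prop:rideal} and Lemma~\ref{lem:int}, closures are matroid flats that are ideals. Hence $\gtest(X)=r_\mathcal{M}(\cl(X))$ on these sets. The step from $K_t^\star$ to $C_t=\cl(K_t^\star+y_t^\star)$ then has matroid rank jump $\delta_t=r_\mathcal{M}(C_t)-r_\mathcal{M}(K_t^\star)$. In particular, $|L_t|\le \delta_t + \bigl(|L_t| - r_\mathcal{M}(L_t\cup K_t^\star) + r_\mathcal{M}(K_t^\star)\bigr)$.

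The "last round" definition of $L_t$ has a useful consequence. No later round's interval $C_{t'}\setminus K^\star_{t'}$ contains a letter of $L_t$. Intuitively, the letters of $L_t$ end up covered in the final flat only through the span of $\alpha$'s letters lying outside the interval.

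\textbf{The main obstacle.} The hard step is turning this into the connectivity bound. My intended approach is to set $X = C_t\setminus K_t^\star$, or a suitable set built from $L_t$ together with $y_t^\star$'s chain, and to use
\[
\lambda(X)=r_\mathcal{M}(X)+r_\mathcal{M}(\univ\setminus X)-r_\mathcal{M}(\univ)\ge |L_t|-1.
\]
For this I need two facts. First, $r_\mathcal{M}(X)\ge |L_t|$, which holds because $L_t$ is independent. Second, $r_\mathcal{M}(\univ\setminus X)\ge r_\mathcal{M}(\univ)-1$, meaning the complement nearly spans everything.

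For the second fact I would argue as follows. The final basic word $\alpha$ has at most one letter in $X$ that is essential, namely $y_t^\star$, plus fill letters. Every later insertion that touches the chain is accounted for by the later rounds, so all other letters of $\alpha$, together with $K_t^\star$, span $\univ\setminus$ (one rank).

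If this direct choice of $X$ fails, I would try instead $X=L_t$ and use the optimal word's own independence to bound $r_\mathcal{M}(\univ\setminus L_t)$. Either way, I expect the precise bookkeeping of which letters of $\alpha$ fall inside the interval to be the delicate part.
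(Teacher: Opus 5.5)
Your reduction to the bound $|L_t|\le 1+\conn$ via Eq.~\ref{eq:mag-imp2} matches the paper, and your feasibility argument is essentially right (see the correction at the end). But that bound is the whole content of the theorem, and your proposal does not establish it. Your primary route is in fact false. With $X=\cl(K_t^\star+y_t^\star)\setminus K_t^\star$, the output $\alpha$ generally has several letters in $X$. Algorithm~\ref{alg:mag} has no fill letters: every letter of $\alpha$ is some $y^\star_{t'}$, and letters inserted in later rounds into the part of the chain between $K_t^\star$ and $\cl(K_t^\star+y_t^\star)$ lie in $X$. Hence $r_\mathcal{M}(\univ\setminus X)\ge r_\mathcal{M}(\univ)-1$ fails, and so does the target inequality $\lambda(X)\ge |L_t|-1$ itself. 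Take $\mathcal{U}^2_4$ on $\{a,b,c,d\}$ with $a,b,c\leadsto d$, $w(d)=10$ and $w(a)=w(b)=w(c)=1$. Round~1 selects $d$ (ratio $10/\gtest(d)=5$) with $K_1^\star=\varnothing$ and $\cl(d)=\univ$, so $X=\univ$ and $\univ\setminus X=\varnothing$. Round~2 prepends, say, $a$. For the equally optimal word $bd$ one gets $L_1=\{b,d\}$, while $\lambda(\univ)=0<|L_1|-1$.

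Your fallback $X=L_t$ is the right set, but the ingredient you name, the independence of the optimal word, only gives $r_\mathcal{M}(\univ\setminus L_t)\ge r_\mathcal{M}(\univ)-|L_t|$, i.e.\ $\lambda(L_t)\ge 0$. The missing idea is that $L_t$ meets the \emph{algorithm's} output in at most one letter: $L_t\cap\widetilde{\alpha}\subseteq\{y_t^\star\}$. Each $y^\star_{t'}$ with $t'>t$ lies in its own interval $\cl(K^\star_{t'}+y^\star_{t'})\setminus K^\star_{t'}$, so the last-round rule you noted excludes it from $L_t$. Now take a letter $x_j$ of the word at the start of round $t$. If $j\le i^\star$, it lies in $K_{i^\star}=K_t^\star$. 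If $j>i^\star$ and $x_j\in\cl(K_t^\star+y_t^\star)\subsetneq K_{i^\star+1}$, then either $K_{i^\star+1}\subseteq\cl(K_t^\star+y_t^\star)$ (when $j=i^\star+1$) or $x_j\in K_{j-1}$ (when $j>i^\star+1$); both contradict the strict chain. Since $\widetilde{\alpha}$ is a basis of $\mathcal{M}$, $r_\mathcal{M}(\univ\setminus L_t)\ge|\widetilde{\alpha}\setminus L_t|\ge r_\mathcal{M}(\univ)-1$. Together with $r_\mathcal{M}(L_t)=|L_t|$, this yields $|L_t|\le 1+\lambda(L_t)\le 1+\conn$.

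This is shorter than the paper's route. The paper proves $\cl(X_i)=\cl(K^\star_{\pi(i)}+y^\star_{\pi(i)})$ for the prefixes $X_i$ of the output, deduces that $L_{\pi(i)}\subseteq\cl(X_i)\setminus\cl(X_{i-1})$ (a single rank-one step), and applies $\lambda$ to the flat $\cl(X_{i-1})$.

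Finally, a correction to your feasibility part: $\gtest(\widetilde{\alpha})$ does not increase when $i^\star<\ell$, since $y^\star\in K_{i^\star+1}$. What increases is the length of the strict chain $K_0\subsetneq\cdots\subsetneq K_\ell$, which is at most $\gtest(K_\ell)\le r_\mathcal{M}(\univ)$. While $\ell<r_\mathcal{M}(\univ)$, either some link is not a cover or $K_\ell\neq\univ$, so some $\Psi_i$ is nonempty. At $\ell=r_\mathcal{M}(\univ)$ every link is a cover and $K_\ell=\univ$.
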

\begin{proof}
    Firstly, see that $\alpha$ is feasible because lines~\ref{al:dif} and~\ref{al:dif-boundary} in Algorithm~\ref{alg:mag} ensure the following invariant: the evaluations of $\cl$ on each prefix of the working solution forms a strictly increasing chain in $(\kappa(\glang)/\mathord{\sim})$.
    Since $\sigma_{\gtest} = \cl$, it can be seen then that the working solution satisfies the premise of Lemma~\ref{lem:projection} at all times.
    Hence, once it achieves the length of a basic word, it must be feasible.

    Now, before continuing, observe the letters of the solution $\alpha$ are not arranged in the order in which they were inserted.
    In light of this, let $\pi:\{1,\ldots,R\}\to\{1,\ldots,R\}$ map the index corresponding to the position of a letter in $\alpha$ to the round index it was added to the solution.
    Then, we may write $\alpha = y^\star_{{\pi}(1)}\ldots y^\star_{{\pi}(R)}$ while still identifying the round a letter was added.
    Furthermore, also define,
    $$X_i \triangleq \left\{y^\star_{{\pi}(1)},\ldots, y^\star_{{\pi}(i)}\right\},$$
    as the first $i$ letters in the algorithm's solution $\alpha$.
    Take note that as $y^\star_{{\pi}(1)},\ldots, y^\star_{{\pi}(0)} = \epsilon$ by convention, it follows that $X_0 = \varnothing$.
    Moreover, the insertion scheme of Algorithm~\ref{alg:mag} maintains,
    \begin{equation}\label{eq:inv}
        \cl(X_i) = \cl\left(K^\star_{\pi(i)} + y^\star_{\pi(i)}\right),\quad\forall i \in \{1,\ldots,R\}.
    \end{equation}
    To see why, fix $i \in \{1,\ldots, R\}$ and select an index $j$ minimizing $\pi(j)$ among $j < i$ making $\pi(j) > \pi(i)$; this is, $y^\star_{\pi(j)}$ is the first letter preceding $y^\star_{\pi(i)}$ in position in $\alpha$ which was selected for insertion in a later round than $y^\star_{\pi(i)}$ was selected.
    Note that if no such $j$ exists, then Eq.~\ref{eq:inv} holds as $K^\star_{\pi(i)} = \cl(X_{i-1})$ would have to be true.
    Then, $y^\star_{\pi(j)}$ belongs to a set in the sequence $(\Psi_{i'})_{i'}$ computed in line~\ref{al:dif}, and because $y^\star_{\pi(j)}$ precedes $y^\star_{\pi(i)}$ in position in $\alpha$ and minimizes $\pi(j)$ amongst such letters, this set from $(\Psi_{i'})_{i'}$ must be a subset of,
    \begin{equation}\label{eq:4545}
        \cl\left(\left\{y^\star_{\pi(k)}\bigm\vert \pi(k) \leq \pi(i)\text{ and }k \leq i\right\}\right) = \cl\left(K^\star_{\pi(i)} + y^\star_{\pi(i)}\right).
    \end{equation}
    So, $y_{\pi(j)} \in \cl\left(K^\star_{\pi(i)} + y^\star_{\pi(i)}\right)$.
    Now, to induct, select an arbitrary index $j'$ with $j' < i$ and $\pi(j') > \pi(j)$.
    This is, $y^\star_{\pi(j')}$ is any letter preceding $y^\star_{\pi(i)}$ in position in $\alpha$ which was selected for insertion in a later round than $y^\star_{\pi(i)}$ and $y^\star_{\pi(j)}$.
    By inductive hypothesis,
    \[
        k < i\text{ and } \pi(i) < \pi(k) < \pi(j') \implies y^\star_{\pi(k)} \in \cl\left(K^\star_{\pi(i)} + y^\star_{\pi(i)}\right).
    \]
    Combining this with Eq.~\ref{eq:4545} shows,
    \[
        \cl\left(\left\{y^\star_{\pi(k)}\bigm\vert \pi(k) < \pi(j')\text{ and }k \leq i\right\}\right) = \cl\left(K^\star_{\pi(i)} + y^\star_{\pi(i)}\right).
    \]
    The left hand side is the closure of the support of the prefix of the working solution ending at $y^\star_{\pi(i)}$ in round $\pi(j')$.
    Then, because $j' < i$, i.e. $y^\star_{\pi(j')}$ precedes $y^\star_{\pi(i)}$ in position in $\alpha$, line~\ref{al:dif} implies $y^\star_{\pi(j')} \in \cl\left(K^\star_{\pi(i)} + y^\star_{\pi(i)}\right)$ as well.
    So, by the principle of induction we see $X_{i-1} \subseteq \cl\left(K^\star_{\pi(i)} + y^\star_{\pi(i)}\right)$, which by $X_{i-1} + y^\star_{\pi(i)} = X_i$ and idempotency of closure operators implies $\cl(X_{i}) \subseteq \cl\left(K^\star_{\pi(i)} + y^\star_{\pi(i)}\right)$.
    Of course, the reverse inclusion follows from,
    \begin{equation*}
        \begin{split}
            \cl\left(K^\star_{\pi(i)} + y^\star_{\pi(i)}\right) &= \cl\left(\cl\left(\left\{y^\star_{\pi(k)}\bigm\vert k < i \text{ and }\pi(k) < \pi(i)\right\}\right) + y^\star_{\pi(i)}\right),\\
                                                                &=\cl\left(\left\{y^\star_{\pi(k)}\bigm\vert k < i \text{ and }\pi(k) < \pi(i)\right\} + y^\star_{\pi(i)}\right),
    \end{split}
    \end{equation*}
    which is a subset of $\cl(X_i)$ by the monotonicity of closure operators.
    And so, Eq.~\ref{eq:inv} follows.

    With that done, the approximation guarantee will follow by proving the following key insight,
    \begin{equation}\label{eq:key-mag}
        z \in L_{{\pi}(i)} \implies z \in \cl\left(X_i\right)\setminus\cl\left(X_{i-1}\right).
    \end{equation}
    To see why, observe $L_{{\pi}(i)} \subseteq \cl\left(K^\star_{{\pi}(i)} + y^\star_{{\pi}(i)}\right)\setminus K^\star_{{\pi}(i)}$.
    So, by way of contradiction, suppose Eq.~\ref{eq:key-mag} does not hold.
    There exists $z\in L_{{\pi}(i)}$ with $z \notin \cl\left(X_i\right)\setminus\cl\left(X_{i-1}\right)$.
    As $z \in \cl\left(K^\star_{{\pi}(i)} + y^\star_{{\pi}(i)}\right)\setminus K^\star_{{\pi}(i)}$ and $\cl\left(K^\star_{{\pi}(i)} + y^\star_{{\pi}(i)}\right) \subseteq \cl(X_i)$, since $K^\star_{{\pi}(i)} + y^\star_{{\pi}(i)}$ is the union of the $\cl$-closure of a subset of $X_{i-1}$ unioned with the $i^\text{th}$ letter in $\alpha$, for Eq.~\ref{eq:key-mag} to not hold it must be that $z \in \cl(X_{i-1})$.
    This shows,
    \begin{equation}\label{eq:mag-contra}
        \left(\exists z \in L_{{\pi}(i)}\right)\;z \in \cl\left(X_{i-1}\right)\setminus K^\star_{{\pi}(i)}.
    \end{equation}
    However, $\cl(X_{i-1}) = \cl\left(K^\star_{\pi(i-1)} + y^\star_{\pi(i-1)}\right)$ by Eq.~\ref{eq:inv}, so by $z \in L_{{\pi}(i)}$ it must be that ${\pi}(i-1) < {\pi}(i)$.
    This means that $y^\star_{{\pi}(i-1)}$ was in the support of the working solution when $y^\star_{\pi}(i)$ was inserted.
    Since $y^\star_{{\pi}(i-1)}$ directly precedes $y^\star_{\pi}(i)$ in position in $\alpha$, this implies,
    $$K^\star_{{\pi}(i)} = \cl\left(K^\star_{\pi(i-1)} + y^\star_{\pi(i-1)}\right).$$
    But $\cl(X_{i-1}) = \cl\left(K^\star_{\pi(i-1)} + y^\star_{\pi(i-1)}\right)$ implies $\cl\left(X_{i-1}\right) = K^\star_{{\pi}(i)}$,
    which contradicts Eq.~\ref{eq:mag-contra}.
    So, we find Eq.~\ref{eq:key-mag} holds true.

    As a result, we observe $L_t \subseteq \widetilde{\alpha}_\mathrm{OPT} \cap \left(\cl\left(X_i\right)\setminus\cl\left(X_{i-1}\right)\right)$.
    Now, we make use of the matroid structure to introduce $\conn$.
    Because $\widetilde{\alpha}_\text{OPT}$ is a basic set in $\mathcal{M}$ (see Section~\ref{sec:mps}), 
    \begin{equation}\label{eq:conn-intro}|\widetilde{\alpha}_\text{OPT} \cap \cl\left(X_{i-1}\right)|=r_\mathcal{M}(\Sigma) - r_\mathcal{M}(\widetilde{\alpha}_\text{OPT} \setminus \cl\left(X_{i-1}\right)).\end{equation}
    Continuing on, note that because $\cl(X_{i})$ is the smallest flat support containing $y^\star_1\ldots y^\star_i$, it follows that $\cl(X_{i}) = \kappa[y^\star_1\ldots y^\star_{i}]$.
    This logic also shows $\cl(X_{i-1}) = \kappa[y^\star_1\ldots y^\star_{i-1}]$.
    Then, $\cl(X_i) = \sigma_\mathcal{M}(\{y^\star_1,\ldots,y^\star_{i}\})$ and $\cl(X_{i-1}) = \sigma_\mathcal{M}(\{y^\star_1,\ldots,y^\star_{i-1}\})$ by Lemma~\ref{lem:int}.1.
    Because the feasible words of the MPS are independent in $\mathcal{M}$ it follows that,
    $$r_\mathcal{M}(\cl(X_i)) = r_\mathcal{M}(\cl(X_{i-1})) + 1.$$
    Since independence implies $|\widetilde{\alpha}_\mathrm{OPT} \cap \cl(X_i)| \leq r_\mathcal{M}(\cl(X_{i}))$, we see $|\widetilde{\alpha}_\mathrm{OPT} \cap \cl(X_i)| \leq r_\mathcal{M}(\cl(X_{i-1})) + 1$.
    This observation combines with Eq.~\ref{eq:conn-intro} for,
    \begin{equation*}
        \begin{split}
            |L_t| &\leq |\widetilde{\alpha}_\mathrm{OPT} \cap \left(\cl(X_{i})\setminus\cl(X_{i-1})\right)|,\\
                  &= |\widetilde{\alpha}_\mathrm{OPT} \cap\cl(X_i)| - |\widetilde{\alpha}_\mathrm{OPT} \cap\cl(X_{i-1})|,\\
                  &\leq 1 + r_\mathcal{M}(\cl(X_{i-1})) + r_\mathcal{M}(\widetilde{\alpha}_\text{OPT} \setminus \cl(X_{i-1})) - r_\mathcal{M}(\univ),\\
                  &\leq 1 + \left(r_\mathcal{M}(\cl(X_{i-1})) + r_\mathcal{M}(\univ \setminus \cl(X_{i-1})) - r_\mathcal{M}(\univ)\right), 
        \end{split}
    \end{equation*}
    which is less than or equal to $1 + \conn$ by definition (see Eq.~\ref{eq:conn-def}).
    And so, by Eq.s~\ref{eq:mag-imp1} and~\ref{eq:mag-imp2},
    \[
        \mathrm{OPT} \leq \sum_{t = 1}^{R} |L_t|\cdot w(y^\star_t) \leq (1 + \conn)\cdot\sum_{t = 1}^R w(y^\star_t),
    \]
    which shows $\alpha$ is a $(1+\conn)$-approximation of the optimal solution.
\end{proof}

\subsection{Monotone Submodular Maximization}\label{subsec:mag-sub}

\begin{algorithm}[t]
    \DontPrintSemicolon
    \caption{Marginal Adjusted Greedy for Submodular Maximization}\label{alg:mag-sub}

    $\alpha \gets \epsilon$\;

    \;
    \While{$|\alpha| < r_\mathcal{M}(\univ)$}{
        $\ell \gets |\alpha|$\;
        $x_1 \ldots x_\ell \gets \alpha$\;
        \;
        $K_0 \gets \varnothing$\;
        \lFor{$i \in \{1,\ldots,\ell\}$}{$K_i \gets \cl(\{x_1, \ldots, x_i\})$}
        \;

        \For{$i \in \{0,\ldots,\ell\}$}
        {
            \If{$i < \ell$}
            {
                $\Psi_i \gets \{ y \in \Sigma \mid K_i \subsetneq \cl(K_i + y) \subsetneq K_{i+1}\}$\;\label{al:dif2}
            }
            \Else
            {
                $\Psi_i \gets \{ y \in \Sigma \mid K_i \subsetneq \cl(K_i + y) \subseteq \univ\}$\;\label{al:dif-b2}
            }
        }
        \;

        $(y^\star, i^\star) \gets \operatorname*{arg\,max}\limits_{y \in \Psi_i:0 \le i \le \ell}\frac{(f/\widetilde{\alpha})(y)}{(\mathord{\accentset{\vee}{\rho}}/K_i)(y)}$\;\label{al:saw}

        \;

        $\alpha \gets x_1 \cdots x_{i^\star}\, y^\star\, x_{i^\star+1} \cdots x_\ell$\;
    }
\;
    \Return{$\alpha$}

\end{algorithm}

We can extend the method of Algorithm~\ref{alg:mag} to the case of submodular maximization by modifying the selection heuristic.
This is shown in Algorithm~\ref{alg:mag-sub}, where the only change from Algorithm~\ref{alg:mag} is that we compute the adjusted weight of a letter $y$ in line~\ref{al:saw} using the marginal return $(f/\widetilde{\alpha})(y)$ under $f$ against the current working solution.
We reuse parts of the analysis of the previous section to obtain the following.

\begin{theorem}\label{thm:mag-approx-sub}
    Let $\mathcal{L}(\mathcal{M},\leadsto)$ be a matroidal prerequisite system, $f:2^\univ\to\mathbb{R}$ a (normalized) monotone submodular function, and $\mathrm{OPT} = \max_{\alpha \in \mathcal{L}(\mathcal{M},\leadsto)}f(\widetilde{\alpha}).$
    Algorithm~\ref{alg:mag-sub} outputs $\alpha$ with,
    \[
        f(\widetilde{\alpha}) \geq \frac{1}{2 + \conn}\cdot\mathrm{OPT}.
    \]
\end{theorem}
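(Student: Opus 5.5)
Feasibility of the output is argued exactly as in Theorem~\ref{thm:mag-approx}. Algorithm~\ref{alg:mag-sub} uses the same sets $\Psi_i$ and the same insertion rule as Algorithm~\ref{alg:mag}. Hence the chain invariant holds, Lemma~\ref{lem:projection} applies, and the final $\alpha$ is a basic feasible word. The structural facts proved there carry over verbatim, because they never used the selection rule: the invariant Eq.~\ref{eq:inv}, the key inclusion Eq.~\ref{eq:key-mag}, and the bound $|L_t|\le 1+\conn$ for the partition $(L_t)_t$ of $\widetilde{\alpha}_\mathrm{OPT}$ defined in Eq.~\ref{eq:lt}. What changes is the charging step, since weights become marginals against the working solution. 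Let $A_t$ be the support of the working solution just before round $t$, so $A_{t+1}=A_t+y^\star_t$ and $f(\widetilde{\alpha})=\sum_{t=1}^R (f/A_t)(y^\star_t)$.

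First I will use monotonicity and submodularity to write
\[
\mathrm{OPT}=f(\widetilde{\alpha}_\mathrm{OPT})\le f(\widetilde{\alpha}\cup\widetilde{\alpha}_\mathrm{OPT})\le f(\widetilde{\alpha})+\sum_{z\in\widetilde{\alpha}_\mathrm{OPT}}(f/\widetilde{\alpha})(z).
\]
Then I will split the sum by the partition $(L_t)_t$ and use $A_t\subseteq\widetilde{\alpha}$, which gives $(f/\widetilde{\alpha})(z)\le (f/A_t)(z)$ for $z\in L_t$.

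Next comes the charging inequality. I need each $z\in L_t$ to be a candidate in round $t$, i.e.\ $z\in\Psi_{i}$ for the index $i$ with $K_i=K^\star_t$. The letter $z$ lies in $\cl(K^\star_t+y^\star_t)\setminus K^\star_t$, and it is in neither $K^\star_t$ nor $\cl(K^\star_t+z)$. If $z$ fails to lie in $\Psi_i$, then $\cl(K^\star_t+z)\not\subsetneq K_{i+1}$, and this step has to be handled. I will argue as follows: $z\in\cl(K^\star_t+y^\star_t)\subsetneq K_{i+1}$ when $i<\ell$, so by monotonicity and idempotence $\cl(K^\star_t+z)\subseteq\cl(K^\star_t+y^\star_t)\subsetneq K_{i+1}$. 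This gives $z\in\Psi_i$, which is the same fact implicitly used in Eq.~\ref{eq:mag-imp2}. The selection rule then gives $(f/A_t)(z)\le \frac{(f/A_t)(y^\star_t)}{\delta_t}(\gtest/K^\star_t)(z)\le (f/A_t)(y^\star_t)$, using $(\gtest/K^\star_t)(z)\le\delta_t$ as in the additive proof. If $z\in A_t$ the marginal is zero and nothing needs to be shown.

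Summing, $\sum_{z}(f/\widetilde{\alpha})(z)\le\sum_t |L_t|(f/A_t)(y^\star_t)\le(1+\conn)f(\widetilde{\alpha})$, and therefore $\mathrm{OPT}\le(2+\conn)f(\widetilde{\alpha})$. The main obstacle I anticipate is the membership $z\in\Psi_i$ in round $t$ when the relevant $K_i$ is $K^\star_t$. The remaining case is when $\cl(K^\star_t+z)=\cl(K^\star_t+y^\star_t)$ equals $K_{i+1}$, and that equality is precluded because $y^\star_t\in\Psi_i$ forces strictness. Everything else reuses the additive analysis; the extra $+1$ comes from the $f(\widetilde{\alpha})$ term.
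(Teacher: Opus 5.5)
Your proposal is correct and follows essentially the same route as the paper. Like the paper, you reuse feasibility and the bound $|L_t|\le 1+\conn$ from Theorem~\ref{thm:mag-approx}, charge each $z\in L_t$ to $y^\star_t$ through the selection rule with $f$-marginals taken against the working solution, and finish with $\mathrm{OPT}\le f(\widetilde{\alpha})+\sum_{z}(f/\widetilde{\alpha})(z)$ and diminishing returns. Your explicit check that each $z\in L_t$ lies in $\Psi_{i^\star}$ during round $t$, via $\cl(K^\star_t+z)\subseteq\cl(K^\star_t+y^\star_t)\subsetneq K_{i^\star+1}$, spells out a step the paper uses only implicitly, and your form of the choice-rule inequality is the correctly oriented one.
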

\begin{proof}
    Our analysis reuses parts of that presented for Algorithm~\ref{alg:mag}.
    Notably, as lines~\ref{al:dif2} and~\ref{al:dif-b2} remain unchanged from Algorithm~\ref{alg:mag}, it follows that $\alpha$ is feasible by way of Lemma~\ref{lem:projection}.

    Continuing on, define $L_t$ in the same way as Eq.~\ref{eq:lt}.
    Notice that the argument bounding $|L_t|$ presented in the proof of Theorem~\ref{thm:mag-approx} only uses properties of the MPS and not the additive objective function, so we have $|L_t| \leq 1 + \conn$.
    Also let $y^\star_t$ be the letter selected in round $t$, $\alpha_\mathrm{OPT}$ and $K^\star_t$ defined as before, as well as $\alpha^{(t)}$ be the value of the working solution $\alpha$ at the end of round $t$ (with the convention $\alpha^{(0)} = \epsilon$).
    Now, our choice rule makes,
    \[
        \left(f/\widetilde{\alpha}^{(t-1)}\right)(z) \leq \frac{\left(f/\widetilde{\alpha}^{(t-1)}\right)(y^\star_t)}{(\gtest/K^\star_t)(z)}(\gtest/K^\star_t)(y^\star_t) \leq \left(f/\widetilde{\alpha}^{(t-1)}\right)(y^\star_t),\quad\forall z \in L_t,
    \]
    and so we find that for all $t \in \{1,\ldots,R\}$,
    \[
        \sum_{z\in L_t}\left(f/\widetilde{\alpha}^{(t-1)}\right)(z) \leq |L_t| \cdot \left(f/\widetilde{\alpha}^{(t-1)}\right)(y^\star_t) \leq (1+\conn)\cdot\left(f/\widetilde{\alpha}^{(t-1)}\right)(y^\star_t).
    \]
    So, because $f(\widetilde{\alpha}) = \sum_{t = 1}^R \left(f/\widetilde{\alpha}^{(t-1)}\right)(y^\star_t)$, summing over all rounds shows,
    \begin{equation}\label{eq:conn-f}
        \sum_{t = 1}^R \sum_{z\in L_t}\left(f/\widetilde{\alpha}^{(t-1)}\right)(z) \leq (1 + \conn)\cdot f(\widetilde{\alpha}).
    \end{equation}
    We relate the left-hand side to $\mathrm{OPT}$ via,
    \[
        \mathrm{OPT} \leq f(\widetilde{\alpha} \cup \widetilde{\alpha}_\mathrm{OPT}) \leq f(\widetilde{\alpha}) + \sum_{z \in \widetilde{\alpha}_\mathrm{OPT}\setminus\widetilde{\alpha}} (f/\widetilde{\alpha})(z),
    \]
    as the first inequality follows by monotonicity, while the second follows by the law of diminishing returns.
    Well, by the law of diminishing returns we also have,
    \[
        z \in L_t \implies \left(f/\widetilde{\alpha}^{(t-1)}\right)(z) \geq (f/\widetilde{\alpha})(z),
    \]
    since $\widetilde{\alpha}^{(t-1)}\subseteq \widetilde{\alpha}$ always holds by our construction.
    This and Eq.~\ref{eq:conn-f} implies,
    \[
        \mathrm{OPT} \leq f(\widetilde{\alpha}) + \sum_{t = 1}^R \sum_{z \in L_t} \left(f/\widetilde{\alpha}^{(t-1)}\right)(z) \leq (2 + \conn)\cdot f(\widetilde{\alpha}).
    \]
    Rearranging terms gives the desired result.
\end{proof}

\section{Hardness of Approximation of Additive Maximization}\label{sec:hardness}

The goal of this section is showing that $\min\{\Delta,\conn\}^{o(1)}$-approximations for nonnegative additive maximization over an MPS is intractable.
We reduce from \emph{Densest $k$-Subgraph (D$k$-S)}:
\begin{quote}
    \emph{Given an undirected graph $(V,E)$ and integer $k$, select a vertex subset $S\subseteq V$ of size $|S|=k$ maximizing the induced edges.}
\end{quote}
The Gap Exponential Time Hypothesis (Gap-ETH) conjectures the existence of a small $\delta > 0$ for which no subexponential time algorithm can distinguish between a satisfiable 3-SAT formula and one where at most a $(1-\delta)$-fraction of clauses can be satisfied \cite{manurangsi2017birthday,dinur2016mildly,chalermsook2017gap}.
The Gap-ETH is now a standard workhorse for proving fine-grained inapproximability results.
And, under Gap-ETH there is no $|V|^{o(1)}$-approximation to D$k$-S \cite{manurangsi2017almost}.

For a coming arithmetic exercise, define the constant $B = |E| + 1$.
We now construct an MPS whose nonnegative additive maximization solutions encode D$k$-S solutions in a way preserving their approximation quality. 
For the alphabet, for each $u \in V$ make $B$ ``dummy'' letters $d_{u, 1},\ldots, d_{u, B}$, include each vertex $u$, and include each edge $e = \{u, v\} \in E$.
For the weights, set,
\begin{equation}\label{eq:red-weights}
    w(y) =
    \begin{cases}
        1,\quad&\text{if $y$ is an edge letter}, \\
        0,\quad&\text{otherwise}.
    \end{cases}
\end{equation}
And, let the underlying matroid $\mathcal{M}$ be the uniform matroid of rank $R \triangleq k(B+1) + |E| + 1$.
Finally, for the prerequisite order, each vertex $u$ is greater than all its dummy letters $d_{u, 1},\ldots, d_{u, B}$ and each edge $e = \{u, v\} \in E$ is greater than $u$ and $v$.
Recall, we showed in Section~\ref{sec:mps} that uniform matroids always satisfy compatibility with $(\univ,\leadsto)$.
So, $\mathcal{L}(\mathcal{M},\leadsto)$ is a well-defined MPS.

To begin, we describe how the solutions values of the MPS and Dk-S instances are related in the following two claims.
This will then be used to prove Lemma~\ref{lem:ap-reduction}, which gives an approximation-preserving reduction from Dk-S to nonnegative additive maximization over an MPS.
\begin{claim}
    There is an efficient procedure to extract a $k$-subgraph with at least $\mathrm{VAL}-1$ edges from every basic word of cumulative weight $\mathrm{VAL}$.
\end{claim}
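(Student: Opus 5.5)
The plan is to read off the structure of basic words of $\mathcal{L}(\mathcal{M},\leadsto)$ directly from the uniform-matroid discussion in Section~\ref{sec:mps} and Remark~\ref{rem:dis-super}, and then count letters.

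\emph{Step 1: a basic word minus its last letter is an ideal.} Let $\alpha = y_1\ldots y_R$ be a basic word. For a non-basic independent set $X$ of $\mathcal{U}^R$ we have $\sigma_\mathcal{M}(X) = X$. So for every $i \le R-1$, the prefix $\{y_1,\ldots,y_{i-1}\}$ spans only itself, and $\sigma_\mathcal{M}(\{y_1,\ldots,y_{i-1},x\}) = \{y_1,\ldots,y_{i-1},x\} \neq \{y_1,\ldots,y_i\}$ for any $x \neq y_i$. Hence Eq.~\ref{eq:feasibility} forces every prerequisite $x \leadsto y_i$ to already lie in $\{y_1,\ldots,y_{i-1}\}$. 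It follows that $I \triangleq \{y_1,\ldots,y_{R-1}\}$ is an ideal of $(\univ,\leadsto)$ of size $R-1 = k(B+1)+|E|$. Only the final letter $y_R$ can escape this; this is exactly the quirk of Remark~\ref{rem:dis-super}. Since every weight is at most $1$, we get $\sum_{y\in I} w(y) \ge \mathrm{VAL}-1$.

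\emph{Step 2: the ideal contains at most $k$ vertex letters.} Let $U \subseteq V$ be the vertex letters in $I$ and $E_I$ the edge letters in $I$. Because $I$ is an ideal:
\begin{itemize}
\item each $u \in U$ brings its $B$ dummies $d_{u,1},\ldots,d_{u,B}$;
\item each $e=\{u,v\} \in E_I$ has $u,v \in U$.
\end{itemize}
The first point gives $|I| \ge |U|(B+1)$. If $|U| \ge k+1$, then $|I| \ge k(B+1) + B + 1 = k(B+1) + |E| + 2 > R-1$, a contradiction. This is the step where the choice $B = |E|+1$ is used, and it is the only real arithmetic in the argument. So $|U| \le k$.

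\emph{Step 3: extraction.} By the second point above, $E_I$ consists of edges of the subgraph induced by $U$. Pad $U$ with arbitrary further vertices to a set $S$ with $|S| = k$; this is possible as long as $k \le |V|$, which we may assume. Then the subgraph induced by $S$ has at least $|E_I| = \sum_{y\in I} w(y) \ge \mathrm{VAL}-1$ edges.

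The procedure is clearly efficient: drop the last letter, collect the vertex letters, and pad to size $k$. I expect no serious obstacle. The step needing the most care is Step 1, namely justifying that the substitutability clause of Eq.~\ref{eq:feasibility} is vacuous for every position except the last.
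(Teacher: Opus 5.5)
Your proof is correct and follows the paper's argument. You drop the last letter, note that in the uniform matroid no prerequisite of an earlier letter can be met by substitution, so the remaining prefix is an ideal, then use $B=|E|+1$ to show that more than $k$ vertex letters would overflow the rank, and pad to $k$ vertices. Your version is somewhat more careful than the paper's: you justify the ideal property explicitly via $\sigma_\mathcal{M}(X)=X$ for non-basic independent sets, and you correctly assume $|U|\ge k+1$ where the paper's proof has the typo $|X|>k+1$.
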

\begin{proof}
    Let $\alpha \in \mathcal{L}(\mathcal{M},\leadsto)$ be a basic word of value $\mathrm{VAL}$, and let $\alpha'$ be the prefix excluding the last letter.
    We exclude the last letter because it can correspond to an edge which was added without both its incident vertices being present in the word.
    However, because the matroid is uniform, it is not possible for the prerequisites of any edge in $\widetilde{\alpha}'$ to be satisfied ``virtually'' without their membership in the support.
    Therefore, because the edges correspond to the only letters with nonzero weight, if we let the cumulative weight of the support of $\alpha'$ be $\mathrm{VAL}'$ then the vertices in $\widetilde{\alpha}'$ must induce a subgraph with at least $\mathrm{VAL}'$ edges.
    
    Then, examine the set $X = \{v \in V\mid (\exists e \in \widetilde{\alpha}' \cap E)\;v \in e\}$.
    We claim $|X| \leq k$.
    By way of contradiction, suppose this is not true.
    Our prior observations make $X\subseteq\widetilde{\alpha}'$.
    Note then that for every letter in $X$, the $B$ dummy prerequisite letters must also be present.
    So, if $|X| > k+1$ it follows that there are at least $k+1$ vertex letters and $B(k+1)$ dummy letters in $\widetilde{\alpha}'$.
    By recalling $B = |E| + 1$,
    \[
        |\widetilde{\alpha}'| \geq (k+1)(B+1) = k(B+1) + |E| + 2 > R.
    \]
    This means $\widetilde{\alpha}'$ is not independent, which contradicts the fact that it is the prefix of a basic word of $\mathcal{L}(\mathcal{M},\leadsto)$.
    And so, because $|X| \leq k$, one can output a feasible D$k$-S solution of value at least $\mathrm{VAL}'$ by listing the vertices in $X$ along with a choice of $k -|X|$ additional vertices.
    Since $\mathrm{VAL}' \geq \mathrm{VAL} - 1$, this shows the claim.
\end{proof}

\begin{claim}
    For any D$k$-S solution inducing $\mathrm{VAL}$ edges, there exists a basic word of cumulative weight at least $\mathrm{VAL}$.
\end{claim}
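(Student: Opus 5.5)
Given a D$k$-S solution $S\subseteq V$ with $|S|=k$ inducing $\mathrm{VAL}$ edges, I will build an explicit feasible word of weight at least $\mathrm{VAL}$ and then extend it to a basic word.

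\textbf{Construction.} For each $u\in S$, in any order, write the block $d_{u,1}\ldots d_{u,B}\,u$. After all $k$ blocks, append the edge letters $e=\{u,v\}$ with $u,v\in S$, in any order. Call this word $\beta$.

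\textbf{Feasibility of $\beta$.} The length of $\beta$ is $k(B+1)+\mathrm{VAL}\le k(B+1)+|E|<R$. So its support is independent in the uniform matroid of rank $R$. Moreover, every proper-rank set is closed in $\mathcal{M}$: for $|X|<R$ we have $\sigma_\mathcal{M}(X)=X$. Hence Eq.~\ref{eq:feasibility} reduces to the literal requirement that every prerequisite of a letter appear earlier in the word. The alternative, $\sigma_\mathcal{M}(\{x_1,\ldots,x_{i-1},y\})=\sigma_\mathcal{M}(\{x_1,\ldots,x_{i-1},x_i\})$, cannot hold for $y\neq x_i$ when both sets are closed of size less than $R$.

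The check is then immediate:
\begin{itemize}
\item dummy letters have no prerequisites;
\item each vertex $u$ comes after all of $d_{u,1},\ldots,d_{u,B}$;
\item each edge comes after both of its endpoints;
\item the order $\leadsto$ is the transitive closure of these relations, so each edge's dummy prerequisites also appear earlier.
\end{itemize}
Thus $\beta\in\mathcal{L}(\mathcal{M},\leadsto)$, and its weight is exactly $\mathrm{VAL}$, since only edge letters carry weight $1$.

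\textbf{Extension to a basic word.} By Proposition~\ref{prop:opt+int}, $\mathcal{L}(\mathcal{M},\leadsto)$ is a greedoid. By the exchange axiom applied repeatedly against any basic word, $\beta$ extends to a basic word $\beta\gamma$. Equivalently, the discussion in Section~\ref{sec:mps} shows that while the support is not a matroid basis, some $\leadsto$-minimal unspanned letter is a continuation. Because all weights are nonnegative, $\beta\gamma$ has weight at least $\mathrm{VAL}$. I should also confirm that enough letters exist for $R$ to be attainable, i.e.\ that $|\univ|=|V|(B+1)+|E|\ge R$; this holds when $|V|\ge k+1$, and otherwise the instance is trivial.

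\textbf{Main obstacle.} The only delicate point is the substitutability clause of the feasibility definition. It is handled entirely by the observation that non-basic independent sets are flats in a uniform matroid, the same fact used when verifying compatibility for $\mathcal{U}^R_n$ in Section~\ref{sec:mps}.
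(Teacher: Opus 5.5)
Your proof is correct and follows essentially the same route as the paper: take a word that linearly extends an ideal containing the induced edge letters, note its length is at most $k(B+1)+|E|<R$ so it is feasible, then extend it to a basic word without losing weight because weights are nonnegative. The differences are only cosmetic. You include all $k$ vertex blocks, whereas the paper uses the ideal generated by the induced edges, which contains only their endpoints. Your discussion of the substitutability clause and of $|\univ|\ge R$ is not needed, for two reasons: literal precedence of every prerequisite already gives the first disjunct of Eq.~\ref{eq:feasibility}, and ``basic'' simply means a longest feasible word.
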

\begin{proof} 
    Let $X$ be a D$k$-S solution of value $\mathrm{VAL}$. 
    Take $S = \{(u,v) \in E\mid \{u,v\}\subseteq X\}$ as the edges in the induced subgraph; clearly $|S| =\mathrm{VAL}$.
    Now, examine a word $\alpha$ formed by taking any linear extension of the ideal generated by $S$ in $(\univ,\leadsto)$.
    Observe, $|X|=k$ implies there can be at most $k$ vertex letters in this ideal, and each vertex letter is above exactly $B$ letters associated with its dummies.
    Thus,
    $$
        |\alpha| \leq k(B+1) + (\mathrm{VAL}) \leq k(B+1) + |E|,
    $$
    and so $\widetilde{\alpha}$ is independent.
    Furthermore, the prerequisite conditions are trivially satisfied by $\alpha$ since it is a linear extension of an ideal (i.e. for any $y\in\widetilde{\alpha}$, \emph{every} $x\leadsto y$ is such that $x \in \widetilde{\alpha}$ in a position occurring before the appearance of $y$), so $\alpha \in \mathcal{L}(\mathcal{M},\leadsto)$.
    Since the support of $\alpha$ contains $\mathrm{VAL}$ edges, its cumulative weight is at least $\mathrm{VAL}$.
    Therefore, any basic word formed by concatenating $\alpha$ with a suffix from $\mathcal{L}(\mathcal{M},\leadsto)/[\alpha]$ gives the result.
\end{proof}

\begin{lemma}\label{lem:ap-reduction}
    Fix a D$k$-S instance.
    Let $\univ$, $\mathcal{M}$, and $(\univ,\leadsto)$ as constructed, and $w:\univ\to\mathbb{R}$ be as defined in Eq.~\ref{eq:red-weights}.
    Let $\mathrm{OPT}_\text{D$k$-S}$ be the number of edges in an optimal D$k$-S solution, and $\mathrm{OPT}_\text{MPS}=\max_{\alpha \in \mathcal{L}(\mathcal{M},\leadsto)} \sum_{y\in\widetilde{\alpha}} w(y)$.
    If there exists an efficient $\delta$-approximation algorithm to nonnegative additive maximization over the MPS, then there exists an efficient algorithm giving a D$k$-S solution inducing $\delta\cdot(\mathrm{OPT}_\text{D$k$-S} + O(1)) - 1$ edges.
\end{lemma}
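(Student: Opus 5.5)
The reduction composes the two claims with the assumed approximation algorithm. Here a $\delta$-approximation is read as one returning a word of value at least $\delta\cdot\mathrm{OPT}_\text{MPS}$, with $\delta\le 1$, matching the form of the conclusion.

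First, the construction is polynomial-size. The alphabet has $|V|(B+1)+|E| = O(|V|\cdot|E|+|E|)$ letters, since $B=|E|+1$. The uniform matroid's rank oracle is trivial. The order is given explicitly by covering relations. So building the instance and running the assumed efficient algorithm on it takes polynomial time.

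Second, we need a basic word. The algorithm returns some feasible word $\alpha$ of value $\mathrm{VAL}\ge \delta\cdot \mathrm{OPT}_\text{MPS}$. The first claim is phrased for basic words, so we extend $\alpha$ greedily, repeatedly appending a $\leadsto$-minimal letter outside $\sigma_\mathcal{M}(\widetilde{\alpha})$, as in Section~\ref{sec:mps}. This terminates in a basic word. Since weights are nonnegative, the value does not decrease.

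Third, we apply the two claims. The second claim, applied to an optimal D$k$-S solution, gives $\mathrm{OPT}_\text{MPS}\ge \mathrm{OPT}_\text{D$k$-S}$. The first claim, applied to the extended basic word, efficiently produces a $k$-subgraph with at least $\mathrm{VAL}-1 \ge \delta\cdot\mathrm{OPT}_\text{D$k$-S}-1$ edges. This gives the statement with the $O(1)$ term equal to $0$, which is a special case of $\delta(\mathrm{OPT}_\text{D$k$-S}+O(1))-1$.

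For the reverse direction, $\mathrm{OPT}_\text{MPS}\le \mathrm{OPT}_\text{D$k$-S}+1$ follows from the first claim applied to an optimal basic word. This two-sided bound is where the $O(1)$ slack originates. It will matter when the lemma is used in ratio form.

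The only delicate point is the second step. The first claim needs a basic word, and the last letter of a basic word may be an edge whose prerequisites are satisfied only through substitution. The first claim already absorbs this with the $-1$. So the main check is that the extension preserves feasibility, which holds because the MPS is a greedoid (Proposition~\ref{prop:opt+int}), and that the value is monotone under extension, which holds because $w\ge 0$.
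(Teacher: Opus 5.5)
Your proposal is correct and is essentially the paper's argument: the second claim gives $\mathrm{OPT}_\text{MPS}\ge\mathrm{OPT}_\text{D$k$-S}$, and the first claim's extraction applied to the algorithm's output loses only an additive $1$, so the $O(1)$ term can even be taken to be $0$ (the reverse bound $\mathrm{OPT}_\text{MPS}\le\mathrm{OPT}_\text{D$k$-S}+1$ only justifies the paper's phrasing $\mathrm{OPT}_\text{MPS}=\mathrm{OPT}_\text{D$k$-S}+O(1)$). Your additional checks---that the instance has polynomial size, and that the returned word can be extended to a basic word without losing value since $w\ge 0$ before invoking the first claim---are details the paper leaves implicit, and you handle them correctly.
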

\begin{proof}
    Let $\mathrm{OPT}_\text{MPS}$ be the weight of an optimal basic word.
    The first claim we proved shows $\mathrm{OPT}_\text{MPS} \leq \mathrm{OPT}_\text{D$k$-S} + 1$ while the second shows $\mathrm{OPT}_\text{D$k$-S} \leq \mathrm{OPT}_\text{MPS}$.
    So, $\mathrm{OPT}_\text{MPS}=\mathrm{OPT}_\text{D$k$-S} + O(1)$.
    Moreover, the first claim also shows an efficient procedure for extracting a D$k$-S solution of value $\delta\cdot\mathrm{OPT}_\text{MPS} - 1$ from any $\delta$-approximation to the nonnegative additive maximization over the MPS.
    As our observations make,
    \[
        \delta\cdot\mathrm{OPT}_\text{MPS} - 1 = \delta\cdot(\mathrm{OPT}_\text{D$k$-S} + O(1)) - 1,
    \]
    we see the result.
\end{proof}

\begin{proof}[Proof of Main Result~\ref{mr:complexity}]
Now, observe the principal ideal of any edge letter is a maximum sized such ideal with cardinality exactly $2\cdot(B+1) + 1$.
Since $R \geq 2\cdot(B+1)$ for $k\geq 1$, this means for any choice of edge $e$,
$$\Delta = r_\mathcal{M}(\ideal e) = \Theta(|E|) =  O(|V|^2).$$
Therefore, by Lemma~\ref{lem:ap-reduction}, any $\Delta^{o(1)}$-approximation algorithm for an MPS computes a $|V|^{o(1)}$-approximation to D$k$-S.
Moreover, one can observe that $\conn \leq r_\mathcal{M}(\univ)$ is always true.
And so, $\conn \leq R = O(|E|) = O(|V|^2)$.
Thus, Lemma~\ref{lem:ap-reduction} also implies any $\conn^{o(1)}$-approximation algorithm for an MPS is a $|V|^{o(1)}$-approximation to D$k$-S.
Hence, $\min\{\Delta,\conn\}^{o(1)}$-approximations are intractable.
\end{proof}

\begin{remark}\label{rem:eth}
    If one only assumes ETH, then \cite{manurangsi2017almost} shows a $(|V|^{1/(\log\log |V|)^c)})$-approximation to Dk-S in polynomial-time is not possible, for some universal constant $c > 0$.
    The preceding argument then shows an efficient $(\min\{\Delta,\conn\}^{1/(\log\log \min\{\Delta,\conn\}^{c'})})$-approximation is not possible, for a universal $c'> 0$, under this weaker hypothesis.
\end{remark}

\printbibliography

\pagebreak

\appendix

\section{The Greedy Algorithm in an MPS}\label{app:greedy-fail}

Due to the proximity to matroids (and its classification as a greedoid), here we discuss the performance of the greedy algorithm on an MPS in greater detail in the context of nonnegative additive maximization.
If the matroid rank is $R$, it is possible to guarantee a $R$-approximation using the greedy algorithm.
This is because, by optimism, eventually a letter of maximal weight (over the entire alphabet) will be a continuation of the working solution built by the greedy algorithm.
We give an example showing that this is nearly tight (observe that our structurally dependent guarantees in terms of $\Delta$ and $\conn$ are significantly finer).
For this, we let $\mathrm{GREEDY}$ be the value obtained by the greedy algorithm and $\mathrm{OPT}$ the optimal value.

Let $\univ = \{q, p_1, \ldots, p_{R-1}, d_1, \ldots, d_{R-1}\}$.
Call $\{p_i\}_i$ the \emph{profit} letters and $\{d_i\}_i$ the \emph{decoy} letters.
Notice that $q$ is neither a decoy or a profit letter.
Then, make the prerequisite order $(\univ,\leadsto)$ the transitive closure of,
$$
    q\prec p_1 \prec p_2 \prec \ldots \prec p_{R-1}.
$$
Note decoys make an antichain (set of incomparable poset elements) incomparable to any profit letter.
The matroid is the uniform $\mathcal{U}^{R}_{2R - 1}$.
With $P \gg 1$, set,
\begin{equation}
    w(y) =
    \begin{cases}
        1,\quad&\text{if $y$ is a decoy}, \\
        P,\quad&\text{if $y$ is a profit letter}, \\
        0,\quad&\text{otherwise}.
    \end{cases}
\end{equation}
Observe, the letter $q$ gates the greedy algorithm from selecting any profit letters until its solution is length one less than the rank of a basic set in the matroid.
So, the greedy algorithm makes a word like $d_1 \ldots d_{R-1} p_i$ and achieves value $(R-1) + P$.
But, the optimal word is $q p_1 \ldots p_{R-1}$ and achieves value $(R-1)\cdot P$.
Then,
$$\frac{\mathrm{GREEDY}}{\mathrm{OPT}} = \frac{(R-1) + P}{(R-1)\cdot P}.$$
In particular, the ratio $\frac{\mathrm{GREEDY}}{\mathrm{OPT}}\rightarrow 1/(R-1)$ as $P$ tends to the limit, and so it is not possible to guarantee the greedy algorithm will obtain a $(R-(1+\delta))$-approximation for any $\delta > 0$.

\section{Verifying Figure~\ref{fig:bridge} Satisfies the Compatibility Condition}\label{app:ver}

Let $\mathcal{M}$ and $(\univ,\leadsto)$ be defined as in Figure~\ref{fig:bridge}.
Select $\alpha\in\mathcal{L}(\mathcal{M},\leadsto)$ and select $z \in \sigma_\mathcal{M}(\widetilde{\alpha})$.
By way of contradiction, suppose $y \leadsto z$ and $y \notin\sigma_\mathcal{M}$.
From Eq.~\ref{eq:feasibility} it follows that $z \notin \widetilde{\alpha}$, while $z \in \sigma_\mathcal{M}(\widetilde{\alpha})$ shows $\alpha \neq \epsilon$.
Moreover, $y\notin\sigma_\mathcal{M}(\widetilde{\alpha})$ implies $\alpha$ is not a basic word.
Because $t_1$ and $t_2$ are parallel in the matroid and precede all other letters in $(\univ,\leadsto)$, we see that $\{t_1,t_2\}$ is a subset of $\sigma_\mathcal{M}(\widetilde{\alpha})$ since $\alpha$ is not empty.
So, $y \notin\{t_1,t_2\}$.
Since the anomaly detection sensors $\{s_1,s_2,s_3,s_4\}$ are maximal in $(\univ,\leadsto)$, $y\leadsto z$ implies $y\notin \{s_1,s_2,s_3,s_4\}$.
Therefore, one of $y=a_1$ or $y=a_2$ holds.
Since $\alpha$ is not basic, yet $\{a_1,a_2\}\cap(\sigma_\mathcal{M}(\widetilde{\alpha})\setminus \widetilde{\alpha})$ is nonempty, it follows that $|\alpha| = 3$ with its support containing exactly one sync letter, one baseline sensor letter, and one anomaly detection letter from the block corresponding to the baseline sensor letter.
It is not possible to span $a_2$ in the linear matroid by $\{s_1,a_1\}$ or $\{s_2,a_1\}$.
We also see that $a_1$ is not spanned in the linear matroid by $\{s_3, a_2\}$, but it is spanned by $\{s_4,a_2\}$.
Therefore, one of $\alpha = t_1a_2s_4$ or $\alpha = t_2a_2s_4$ are true.
However, the only letters which $a_1$ is a prerequisite of in $(\univ,\leadsto)$ are $s_1$ and $s_2$, so $z \in \{s_1,s_2\}$.
Yet, $\{s_1,s_2\}\cap\sigma_\mathcal{M}(\widetilde{\alpha}) = \varnothing$ since $\{a_2,s_4\}$ does not span $s_1$ or $s_2$ in the linear matroid.
Therefore $z \notin \sigma_\mathcal{M}(\widetilde{\alpha})$, a contradiction.\hfill\qed

\end{document}